\theoremstyle{plain}
\newtheorem{thm}{\protect\theoremname}
\theoremstyle{plain}
\newtheorem{lem}[thm]{\protect\lemmaname}
\theoremstyle{plain}
\newtheorem{cor}[thm]{\protect\corollaryname}
\date{}
\providecommand{\keywords}[1]{\textbf{\textit{Keywords---}} #1}
\author{Jin Xu and
Natarajan Gautam, \IEEEmembership{Senior Member,~IEEE}
\thanks{Jin Xu is with School of Science and Engineering, the Chinese University of Hong Kong, Shenzhen, China
(Email: xujin@cuhk.edu.cn)}
\thanks{Natarajan Gautam (Corresponding author) is with Department of Industrial and Systems Engineering,
Texas A\&M University, College Station, TX, USA (Email:gautam@tamu.edu)}}
\providecommand{\corollaryname}{Corollary}
\providecommand{\lemmaname}{Lemma}
\providecommand{\theoremname}{Theorem}
\begin{document}
\title{Peak Age of Information in Priority Queueing Systems}
\maketitle
\begin{abstract}
We consider a priority queueing system where a single processor serves
$k$ classes of packets that are generated randomly following Poisson
processes. Our objective is to compute the expected Peak Age of Information
(PAoI) under various scenarios. In particular, we consider two situations
where the buffer size at each queue is one and infinite, and in the
infinite buffer size case we consider First Come First Serve (FCFS)
and Last Come First Serve (LCFS) as service disciplines. For the system
with buffer size one at each queue, we derive PAoI exactly for the
case of exponential service time and bounds (which are excellent approximations)
for the case of general service time, with small $k$. For the system
with infinite buffer size, we provide closed-form expressions of PAoI
for both FCFS and LCFS where service time is general and $k$ could
be large. Using those results we investigated the effect of ordering
of priorities and service disciplines for the various scenarios. We
perform extensive numerical studies to validate our results and develop
insights.

\keywords{Age of Information, Priority Queues, Performance Analysis}
\end{abstract}

\section{Introduction\label{sec:Introduction}}

In the recent years, the notion of Age of Information (AoI) has garnered
attention from many researchers. The main applications that have been
cited include sensor networks, wireless networks, and autonomous vehicle
systems \cite{kaul2012real}, as it is important to know the freshness
of information in all those cases. Our research has been motivated
by an application in smart manufacturing where edge devices, sensors
in particular, with limited processing capabilities, would monitor
the health of various tools, condition of components, and quality
of work pieces in machines. This sensed information would be used
for timely decisions such as tool changes, re-calibration and rework,
thereby improving overall quality of the manufactured products. In
such a scenario, decisions may be made based on delayed information
due to the discontinuous sampling and long information processing
time. Hence it is crucial to consider the freshness of information
in decision making, for some type of which AoI is an ideal choice.

AoI is a metric defined and used by researchers such as Kaul et al
\cite{kaul2012real} to describe the freshness of data. We consider
a system where a data source (sensor or resource) from time to time
sends updates or files (in this paper we call each update or file
a ``packet'') to the processor (also called server). The time when
a packet is generated by a data source is called its arrival time
(also called release time) into the system. The server processes packets
in a non-preemptive way. Unprocessed packets are queued due to the
limited processing capacity of the server. AoI at an arbitrary time
point $t$ is defined as the length of period between time $t$ and
the most recent release time among all the packets that have been
processed. Mathematically, the AoI at time $t$ is defined as $\triangle(t)=t-\max\{r_{l}:C_{l}\leq t\}$,
where $r_{l}$ is the release time of the $l^{th}$ packet that is
generated and $C_{l}$ is the time when it is processed by the server
(also called its completion time). While the time-average AoI could
be a metric to measure data freshness, many researchers consider Peak
Age of Information (PAoI) as a more tractable metric \cite{huang2015optimizing,inoue2019general}.
We let the $n^{th}$ peak value of $\triangle(t)$ be $A_{n}$, which
is a random variable and it is shown in Figure \ref{fig:Age-of-Information}.
By assuming ergodicity, the limit expectation of this peak value,
i.e., $\boldsymbol{E}[A]=\lim_{n\rightarrow\infty}\boldsymbol{E}[A_{n}]$,
is then defined as PAoI for this data source. We will later extend
this notion to multiple sources and formulate our model.

\begin{figure}[h]
\centering{}\includegraphics[scale=0.3]{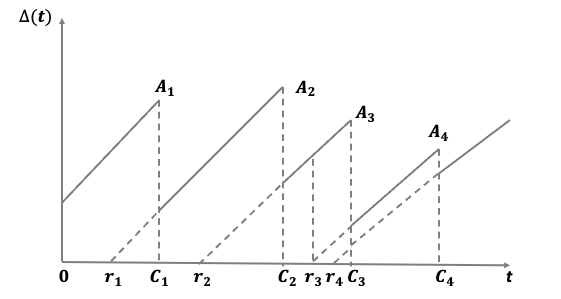}\caption{Age of Information for a Single Queue\label{fig:Age-of-Information}}
\end{figure}

It has been well documented and accepted that monitoring and sensing
according to a Poisson process is effective \cite{masry1978poisson}.
In that light we consider multiple data sources (sensors) that monitor
according to a Poisson process with potentially different rates due
to the difficulty in sensing (recall our motivation example of a manufacturing
setting).\textbf{ }We consider a setting where there are $k$ data
sources prioritized from 1 (highest) to $k$ (lowest). There is a
single server that ``serves'' the $k$ packet streams based on a
static priority mechanism. We study the static queue priorities mainly
because in many cases, there are some data sources whose packets contain
age sensitive information or emergency information such as high temperature,
high pressure (see \cite{najm2019content}). These data packets need
to be transmitted as soon as possible, thus high static priorities
for these data sources are needed. Another example is given by Maatouk
et al \cite{maatouk2019age}, which says that in the vehicle network,
the safety related data should be allocated higher priorities over
the other non-safety related data to improve the traveling experience.
All these real applications motivate us to consider a multi-queue
system with static priorities.

\begin{figure}[h]
\centering{}\includegraphics[scale=0.35]{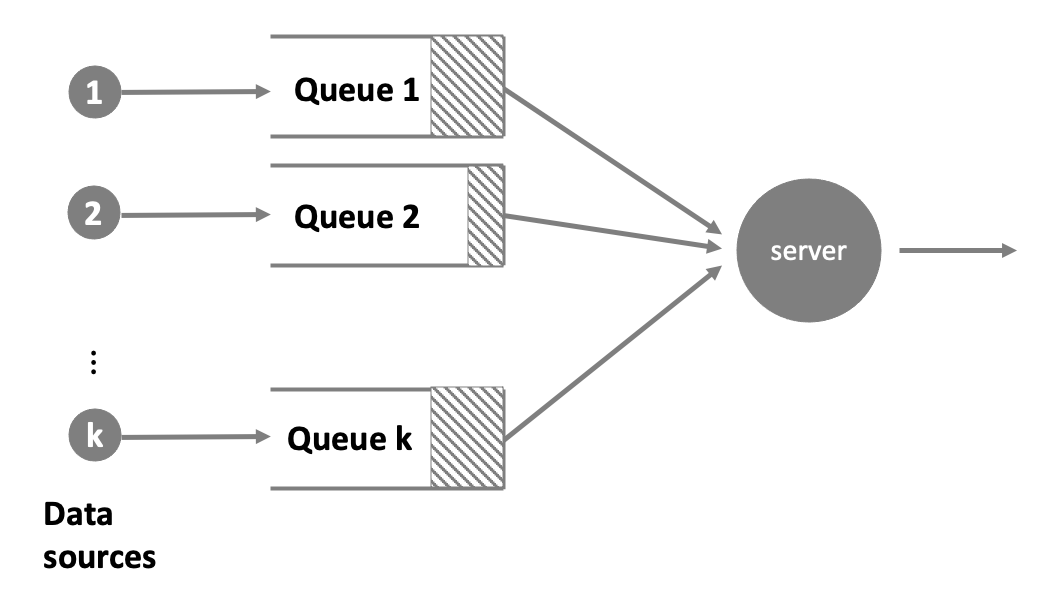}\caption{System Model \label{fig:System-Model}}
\end{figure}

The system model is provided in Figure \ref{fig:System-Model}. We
consider two settings in this paper: one in which there is a buffer
at each queue that can hold at most one packet at a time; the other
where each buffer can hold infinitely many packets. For the first
setting we discuss the system M/G/1/1$+\sum1^{*}$ and M/G/1/1$+\sum1$
for general service times. The notation $1+\sum1^{*}$ means besides
the processing area at the server, each data source has a buffer with
size one. The asterisk means that the packet waiting in the buffer
is replaced by the newest arrival, the same as the notation used in
\cite{costa2016age,zou2019benefis}. If there is no asterisk, i.e.,
M/G/1/1$+\sum1$, then it means that packet that enters the buffer
will not be replaced by new arrivals. For the setting of infinite
buffer size, we discuss the M/G/1 type queues with First Come First
Serve (FCFS) and Last Come First Serve (LCFS) service disciplines
respectively. It is still unknown which setting and which discipline
will result in the smallest PAoI. So our objective is to derive the
PAoI for each setting, and then find the optimal setting and discipline.
The main contributions of our paper are listed as follows:
\begin{enumerate}
\item We first provide a novel modeling method to evaluate PAoI of multi-queue
systems by focusing on the busy period of server and buffer status.
Using this method we provide the exact PAoI for the prioritized system
M/M/1/$1+\sum1^{*}$ and system M/G/1/$1+\sum1$ with small number
of queues $k$. And we further provide the bounds (which are also
excellent approximations) for PAoI of the system M/G/1/$1+\sum1^{*}$.
\item We provide the exact PAoI for the infinite buffer prioritized system
M/G/1 with FCFS and LCFS for general number of queues $k$, and our
analysis for deriving PAoI under LCFS can be applied to other systems
with LCFS as well.
\item By providing the exact PAoI of systems above, we show a surprising
result that LCFS is not the optimal service discipline for minimizing
PAoI among all the non-preemptive work-conserving disciplines in the
system where buffer size of each queue is infinite. We also show a
counter-intuitive finding that having a single buffer at each queue
does not always provide lower PAoI than the having a buffer with infinite
size. We further reveal that it is because the special definition
of the metric PAoI. 
\item We reveal the fact that PAoI of queues with low priorities are sensitive
to the traffic intensity of queues with high priorities, so queues
that contain important or time-sensitive information should be given
high priorities. Also, if the PAoI averaged across queues is to be
minimized, we show that it is beneficial to assign low priorities
to high traffic queues.
\end{enumerate}
The rest of this paper is organized as follows. A summary of the literature
is provided in Section \ref{sec:Related-Work}. Then, in Section \ref{sec:Buffer-Size-One}
we provide the PAoI analysis for M/G/1/1+$\sum1^{*}$ and M/G/1/1$+\sum1$
type queues. In Section \ref{sec:Infinite-Buffer-Size} we provide
the PAoI analysis for queues with infinite buffer size, under both
FCFS and LCFS disciplines within each queue. We perform numerical
studies in Section \ref{sec:Numerical-Study}, and make concluding
remarks as well as discuss the future work in Section \ref{sec:Conclusions-and-Future}.

\section{\label{sec:Related-Work}Related Work}

The idea of data age, information freshness, and timeliness for data
warehouses are introduced and discussed in \cite{theodoratos1999data,bouzeghoub2004framework}.
In recent years, data freshness has drawn much more attention because
of the development of Internet of Things, fog computing and edge data
storage \cite{al2010hedera,vaquero2014finding}. Kaul et al \cite{kaul2012real}
firstly provided average AoI for M/M/1, M/D/1 and D/M/1 type queues.
Costa et al \cite{costa2016age} then obtained analytical results
of average AoI and PAoI under FCFS for M/M/1/1, M/M/1/2 (which allows
drop of new arrivals), as well as M/M/1/2{*} (which allows update
for the waiting packet) queues. The performance of LCFS policy for
the single buffer case where service times are gamma distributed was
provided by Najm and Nasser \cite{najm2016age}. Soysal and Ulukus
\cite{soysal2018age} considered G/G/1/1 type queues and provided
bounds of AoI for different arrival and service processes. Zou et
al \cite{zou2019benefis} discussed PAoI and AoI under the waiting
procedure in M/G/1/1 and M/G/1/2{*} cases. Kosta et al \cite{kosta2019queue}
discussed the performance of AoI and PAoI for the single-queue slotted-time
system with and without packets management. Inoue et al \cite{inoue2019general}
discussed the relationship between PAoI and AoI for the single queue
case, and they provided the AoI/PAoI analysis for different single
queue models including M/G/1 and G/M/1 systems with FCFS, preemptive
LCFS, and non-preemptive LCFS. Some recent works have considered AoI
for the system with single server and multiple queues. Huang and Modiano
\cite{huang2015optimizing} provided the PAoI for multi-class M/G/1
and M/G/1/1 queues where all packets flow into a combined queue. Najm
and Telatar \cite{najm2018status} considered the M/G/1/1 system with
multiple sources updating while allowing preemption. Kosta et al \cite{kosta2019age}
considered a slotted-time system and discussed the performance of
round-robin, working-conserving and random policy. Jiang et al \cite{jiang2018can}
considered an AoI minimization problem with Bernoulli arrivals in
a slotted-time system and modeled it as a MDP problem to determine which
data source to serve next. They showed that Whittle's index policy
is a near optimal policy and they also provided a decentralized policy
which achieves nearly identical performance as the Whittle's index
policy. The optimality of Whittle's index policy was further discussed
by Maatouk et al in \cite{maatouk2020optimality}. Kadota et al \cite{kadota2019scheduling}
considered an AoI minimization problem in a slotted-time system with
throughput constraint considerations. Talak et al \cite{talak2019optimizing}
considered weighted AoI and PAoI minimization problem in a discrete
timed system with channel errors. It is also pointed out by \cite{talak2019optimizing}
that the PAoI/AoI for the discrete time queues may differ significantly
from their continuous time counterpart. Other AoI/PAoI minimization
problems for discrete time systems can be found in \cite{he2017optimal,hsu2017age,jiang2018can,jiang2018timely,jiang2019timely,kadota2018scheduling}.
The multi-class queues with FCFS and LCFS across queues are discussed
in Yates and Kaul \cite{yates2019age}. A detailed review for the
current literature for AoI was also provided in \cite{yates2019age}.

However, we notice that if all the packets arriving into the multi-queue
system are served following FCFS or LCFS regardless of their classes
(queues), queues with high arrival rates will be served more frequently.
It is not always the case that the queues with high arrival rates
are important. Queues with low traffic intensities may also be important,
and their packets may need to be processed as soon as the server becomes
available. Besides, spending too much time processing a certain data
source is a waste of service resource. We thus want to consider a
service policy which gives certain queues higher priorities. Such
a multiple-queue system with queue priorities has been studied for
a long time, however most of previous works focused on different metrics
such as queue lengths and waiting time distributions \cite{jaiswal1968priority,adan2001queueing}.
AoI and PAoI are metrics introduced in recent years, and their performance
under queue priorities are not well understood. Najm et al \cite{najm2019content}
considered a system with two streams of different priorities and discussed
different service disciplines for the low priority stream. Recently,
Kaul and Yates \cite{kaul2018age} modeled the AoI of M/M/1 priority
queues as a hybrid system by assuming the waiting room (buffer size)
for the system is either null or one. However, their model is restrictive
since there is at most one buffer for all queues. Maatouk et al \cite{maatouk2019age}
discussed the model where each queue has an individual buffer, and
provided a closed-form expression for AoI using a hybrid system analysis.
However, it is assumed in \cite{maatouk2019age} that arrival and
service rates for all queues are exponential and identical. It is
still unknown if having finite buffer can help reduce PAoI for each
queue in the multi-queue scenario, especially when arrival and service
rates differ from one queue to another. In our work, we for the first
time provide the exact PAoI for the system where queues are prioritized
and each queue has its own waiting room (buffer), arrival rate, and
service rate. In this paper we only consider static queue priorities
because in many applications, some data streams have more important
information which need to be transmitted as soon as possible. Moreover,
the queue performance is more tractable when queue priorities are
fixed, but the behavior of PAoI in this case is still not fully understood.
In this paper, we provide a new modeling approach of calculating PAoI
by focusing on the buffer status. We derive the exact PAoI for M/M/$1+\sum1^{*}$
system and M/G/$1+\sum1$ system, as well as bounds for M/G/$1+\sum1^{*}$
queues with priorities. Also, for the case of infinite buffer size,
we derive the exact PAoI for M/G/1 system with FCFS and LCFS service
disciplines within each queue. We seek to find a priority order and
service discipline that would result in low average PAoI across queues.
We also seek to understand the effect of arrival rates and service
times on the PAoI for systems under different settings for buffer
size and service disciplines.

\section{Queues with Buffer Size One \label{sec:Buffer-Size-One}}

In this section we discuss the system M/G/1/1+$\sum1^{*}$ with the
arrival process for each queue $i$ being a Poisson process with rate
$\lambda_{i}$. The service time (processing time) $P_{i}$ for packets
from queue $i$ is an i.i.d. random variable with cdf $F_{i}(x)$
and mean $\frac{1}{\mu_{i}}.$ A new arrival will replace the packet
waiting in the queue (if there is one) since the newest packet contains
the most recent information of the source. Note that this model is
different from the M/G/1/1 model introduced in \cite{huang2015optimizing,najm2018status}.
In their model, there is no buffer for each queue, so whenever a packet
arrives and sees the server being busy, the packet is either rejected
or preempts the packet in service. In our model, the assumption of
buffer at each queue allows the server to serve queues in a prioritized
way. Moreover, keeping only the most recent packet in the buffer can
reduce the server's load, and also guarantee that the freshest information
is stored in the buffer.

The difficulty in analyzing such a system with buffer at each queue
is that packets served by the server are only a subset of packets
generated by the data source, due to some getting replaced. Focusing
on how each packet goes through the system often makes modeling more
complicated \cite{kaul2018age}. Instead, in this paper we introduce
a new modeling approach to derive PAoI, which is to incorporate the
buffer state. We can also use this idea to derive PAoI for other systems
with buffer size more than one, as we will see in Section \ref{sec:Infinite-Buffer-Size}.
We depict a sample path of the buffer state for queue $i$ in Figure
\ref{fig:Buffer-State} with notations described subsequently. From
Figure \ref{fig:Buffer-State} we can see that buffer state of queue
$i$ is either $0$ or $1$. When the buffer state is 1 (the buffer
is full), we say the buffer is busy. We use $r_{ij}$, $S_{ij}$ and
$C_{ij}$ to denote the release time, starting time of processing
and completion time of $j^{th}$ packet that arrives at queue $i$
(note that $S_{ij}$ and $C_{ij}$ only exist if the packet is processed
by the server). Suppose at time $r_{i1}$, packet $1$ arrives at
queue $i$. It waits until time $S_{i1}$, when the server becomes
available to serve it. Right after time $S_{i1}$, buffer $i$ keeps
empty until packet 2 arrives at time $r_{i2}$. Packet $2$ stays
in the buffer for a while, then gets replaced by packet $3$ at time
$r_{i3}$. Packet 3 is then replaced by packet 4 at time $r_{i4}.$
At time $S_{i4}$ the server becomes available and starts serving
packet 4, and the buffer becomes empty again. The service of packet
4 is completed at time $C_{i4},$ and the peak age of information
upon the completion of packet 4 is given as $C_{i4}-r_{i1}$, which
is equal to 
\begin{eqnarray}
C_{i4}-r_{i1} & = & (C_{i4}-S_{i4})+(S_{i4}-r_{i2})\nonumber \\
 &  & +(r_{i2}-S_{i1})+(S_{i1}-r_{i1}).\label{eq:1}
\end{eqnarray}
The term $(C_{i4}-S_{i4})$ of Equation (\ref{eq:1}) is the processing
time of packet 4, and $(S_{i4}-r_{i2})$ is the time period during
which the buffer has one packet. The third term $(r_{i2}-S_{i1})$
is the time period during which the buffer stays empty, and the last
term $(S_{i1}-r_{i1})$ is the waiting time of packet 1. Recall that
the processing times of packets from the same source are i.i.d., so
the expected value of $(C_{i4}-S_{i4})$ is $\boldsymbol{E}[P_{i}]=\frac{1}{\mu_{i}}.$
The buffer is empty during time $(r_{i2}-S_{i1})$, and from the memoryless
property of exponential inter-arrival times, we know the expected
length of $(r_{i2}-S_{i1})$ is $\boldsymbol{E}[I_{i}]=\frac{1}{\lambda_{i}}$.
Therefore we can write the PAoI for source $i$ as
\begin{eqnarray}
\boldsymbol{E}[A_{i}] & = & \boldsymbol{E}[P_{i}]+\boldsymbol{E}[W_{i}]+\boldsymbol{E}[I_{i}]+\boldsymbol{E}[G_{i}],\label{eq:2}
\end{eqnarray}

where $\boldsymbol{E}[G_{i}]$ is the expected waiting time (in buffer)
of a packet that is eventually processed by the server, and $\boldsymbol{E}[W_{i}]$
is the expected length of time period when the buffer is continuously
busy. Note that Equation (\ref{eq:2}) holds true for every queue
$i$. For M/G/1/1+$sum1^*$ type queues, we have already stated that $\boldsymbol{E}[P_{i}]=\frac{1}{\mu_{i}}$
and $\boldsymbol{E}[I_{i}]=\frac{1}{\lambda_{i}}$. The difficult
part in deriving PAoI remains in calculating $\boldsymbol{E}[W_{i}]$
and $\boldsymbol{E}[G_{i}]$. Notice that $W_{i}$ is the period that
the buffer is full, and it is not determined by which packet we keep
in the buffer. If we reject the new arrivals (instead of the system
that we are analyzing) when the buffer is full, then $W_{i}$ is the
waiting time for the packet that enters the buffer. Using this property,
if we let $p_{i}$ be the probability that buffer $i$ is full, then
from Little's Law \cite{little2011or} we know the average queue length
is $p_{i}=\lambda_{i}(1-p_{i})\boldsymbol{E}[W_{i}]$. So we have
\begin{eqnarray}
\boldsymbol{E}[W_{i}] & = & \frac{p_{i}}{\lambda_{i}(1-p_{i})}.\label{eq:3}
\end{eqnarray}

\begin{figure}[h]
\begin{centering}
\includegraphics[scale=0.28]{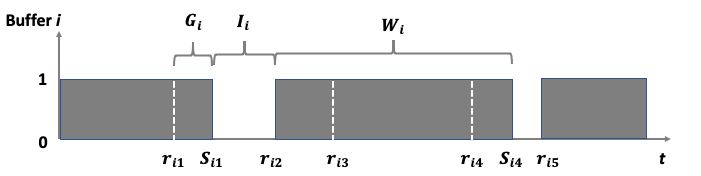}\caption{Buffer State for Queue $i$ \label{fig:Buffer-State}}
\par\end{centering}
\end{figure}

From Equation (\ref{eq:3}), $\boldsymbol{E}[W_{i}]$ can be obtained
once we know $p_{i}.$ We shall discuss how to find $p_{i}$ later
in this section. Now we continue with the system where new arrivals
replace the existing ones in buffer. We first characterize $G_{i}$,
which depends on $W_{i}$, as we will see in Lemma \ref{lem:1}.
\begin{lem}
$\boldsymbol{E}[G_{i}|W_{i}]=\frac{1}{\lambda_{i}}(1-\boldsymbol{E}[e^{-\lambda_{i}W_{i}}]).$\label{lem:1}
\end{lem}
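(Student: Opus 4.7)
The plan is to condition on a busy period of length $t$ and then exploit the memorylessness of the Poisson arrival stream. First I would argue that source $i$'s arrival process remains a rate-$\lambda_i$ Poisson process regardless of buffer state, since arrivals that find the buffer full merely overwrite the waiting packet and do not alter the service or queue-clearing dynamics. Moreover, the length $W_i$ is determined by the moment at which the server becomes available to pull a queue-$i$ packet---a quantity driven by the server state at the start of the busy period and by higher-priority workload, but independent of subsequent queue-$i$ arrivals. Hence, conditional on $\{W_i = t\}$, the queue-$i$ arrivals in the interior of the busy period are still distributed as a rate-$\lambda_i$ Poisson process on $(0,t]$, with an additional arrival at time $0$ initiating the period.

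Placing the origin at the start of the busy period, I would let $T \in [0,t]$ denote the epoch of the most recent queue-$i$ arrival, with the convention $T = 0$ if no further arrival occurs after the initiating one. By the definition of $G_i$ as the gap from the most recent arrival to the moment the buffer empties, we have $G_i = t - T$.

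I would then compute $\boldsymbol{E}[t - T]$ via the tail-integral formula. For each $s \in [0,t]$, the event $\{t - T > s\}$ coincides with the event that no queue-$i$ arrival falls in $(t - s, t]$, which by the Poisson property has probability $e^{-\lambda_i s}$. Integrating this tail yields
\begin{equation*}
\boldsymbol{E}[G_i \mid W_i = t] \;=\; \int_0^t e^{-\lambda_i s}\, ds \;=\; \frac{1}{\lambda_i}\bigl(1 - e^{-\lambda_i t}\bigr),
\end{equation*}
as claimed. An equivalent route is to note, via time reversal of the Poisson process, that $t - T$ has the distribution of $\min(E, t)$ where $E$ is exponential with rate $\lambda_i$, whose mean is again $(1 - e^{-\lambda_i t})/\lambda_i$.

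The only substantive step is the independence argument in the first paragraph: establishing that conditioning on $\{W_i = t\}$ does not bias the Poisson arrival stream inside the busy period. Once that independence is in hand, the rest collapses to a one-line Poisson calculation.
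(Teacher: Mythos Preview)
Your proof is correct and takes a genuinely different route from the paper's. The paper conditions on the number $N(t)=m$ of Poisson arrivals in $(0,t]$, invokes the order-statistics representation (Campbell's theorem) to get the density of the last arrival time $R_m$, computes $\boldsymbol{E}[G_i\mid N(t)=m,W_i=t]=t/(m+1)$, and then unconditions by summing the Poisson series. You bypass all of that by working directly with the tail: the event $\{G_i>s\}$ is simply ``no arrival in $(t-s,t]$'', so independence of increments gives $\boldsymbol{P}(G_i>s\mid W_i=t)=e^{-\lambda_i s}$ for $s\in[0,t)$, and one integral finishes the job. Your time-reversal remark, identifying $G_i$ with $\min(E,t)$ for $E\sim\exp(\lambda_i)$, is an equally clean way to see the same thing. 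Your argument is shorter and uses only the independent-increments property, whereas the paper's route makes the uniform-order-statistics structure explicit; both rest on the same (implicit in the paper, spelled out by you) independence of $W_i$ from the queue-$i$ arrival stream after the initiating packet.
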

\begin{IEEEproof}
Suppose $N(t)=m$ packets arrive during $W_{i}$, then $G_{i}$ is
the time gap from the arrival time of the $m^{th}$ packet to time
$t$. From Campbell's Theorem (P173, Theorem 5.14 in \cite{kulkarni2016modeling})
we have for $x\leq t$, 

\begin{eqnarray*}
 &  & \boldsymbol{P}(G_{i}<x|N(t)=m,W_{i}=t)\\
 & = & \boldsymbol{P}(R_{m}>t-x|N(t)=m,W_{i}=t)\\
 & = & \int_{t-x}^{t}\frac{m}{t}(\frac{u}{t})^{m-1}du\\
 & = & 1-(\frac{t-x}{t})^{m}.
\end{eqnarray*}

Thus by integrating $\boldsymbol{P}(G_{i}>x|N(t)=m,W_{i}=t)$ for
$x$ from $0$ to $t$, we have
\begin{eqnarray*}
 &  & \boldsymbol{E}[G_{i}|N(t)=m,W_{i}=t]=\frac{t}{m+1}.
\end{eqnarray*}
Then, unconditioning using $\boldsymbol{P}(N(t)=m)=e^{-\lambda_{i}t}\frac{(\lambda_{i}t)^{m}}{m!}$,
we get
\begin{eqnarray*}
\boldsymbol{E}[G_{i}|W_{i}=t] & = & \sum_{m=0}^{\infty}\frac{t}{m+1}e^{-\lambda_{i}t}\frac{(\lambda_{i}t)^{m}}{m!}\\
 & = & \sum_{m=0}^{\infty}e^{-\lambda_{i}t}\frac{(\lambda_{i}t)^{m+1}}{(m+1)!}\frac{1}{\lambda_{i}}\\
 & = & \frac{e^{-\lambda_{i}t}}{\lambda_{i}}(e^{\lambda_{i}t}-1).
\end{eqnarray*}
\end{IEEEproof}
Lemma \ref{lem:1} shows that one needs to know the Laplace\textendash Stieltjes
transform (LST) to get $\boldsymbol{E}[G_{i}]$. The exact LST of
$W_{i}$ can be obtained when service times are exponentially distributed,
as we will see in Section \ref{subsec:Exact-Analysis-for}. If service
times are generally distributed, we provide the bounds for PAoI based
on result of Lemma \ref{lem:1}, which we will see in Section \ref{subsec:Bounds-and-Approximation}.

\subsection{Exact Analysis for M/M/1/1+$\mathbf{\sum1^{*}}$ Type Queues\label{subsec:Exact-Analysis-for}}

In this subsection we consider a special case where the processing
time $P_{i}$ is exponentially distributed with $\boldsymbol{E}[P_{i}]=\frac{1}{\mu_{i}}$
for $i=1,2,...,k$. Knowing the LST of $W_{i}$ can help us obtain
both $\boldsymbol{E}[W_{i}]$ and $\boldsymbol{E}[G_{i}]$, so in
this subsection we focus on calculating LST of $W_{i}.$ Since $W_{i}$
is not affected by which packet we keep in the buffer, in this subsection,
we assume new arrivals are rejected if the buffer is full. We adopt
the method used to characterize the busy period in \cite{conway2003theory}
to derive the LST of $W_{i}$, i.e., $\boldsymbol{E}[e^{-sW_{i}}]$.
Let $B_{i}(t)$ be the number of packets in buffer $i$ at time $t$
with $B_{i}(t)\in\{0,1\}$. Let $J(t)\in\{0,1,...,k\}$ be the packet
class that is in service at time $t$, where $J(t)=0$ means the server
is idling. The vector $S(t)=(J(t),B_{1}(t),...,B_{k}(t))$ thus indicates
the state of the system at time $t$. From PASTA \cite{kulkarni2016modeling}
we know that the time average performance of the system is the same
as that seen by Poisson arrivals. If a packet from class $i$ arriving
at time $t$ sees $B_{i}(t^{-})=0$, it then enters the buffer if
the server is busy, or enters the server directly if the server is
idling. Thus the state observed by a packet that enters buffer $i$
(right before its entering time) is always $B_{i}(t^{-})=0$. We let
$\psi_{j}(s)=\frac{\mu_{j}}{\mu_{j}+s}$ be the LST of service time
for packets from queue $j$. Because the service time is exponential,
$\psi_{j}(s)$ is also the LST of remaining service time of the packet
observed by an entering packet, if a class $j$ packet is in service.
Let $U_{i}$ be the remaining service time observed by a packet entering
queue $i$. If a packet from class 1 enters at time 0, then we have

\begin{eqnarray*}
 &  & \boldsymbol{E}[e^{-sW_{1}}|B_{1}(0^{-})=0]\\
 & = & \boldsymbol{E}[e^{-sU_{1}}|B_{1}(0^{-})=0]\\
 & = & \boldsymbol{P}(J(0^{-})=0|B_{1}(0^{-})=0)\\
 &  & +\sum_{j=1}^{k}\psi_{j}(s)\boldsymbol{P}(J(0^{-})=j|B_{1}(0^{-})=0).
\end{eqnarray*}

Before characterizing $\boldsymbol{E}[e^{-sW_{2}}]$ for buffer 2,
we first introduce the busy period of the server. Let $T_{1}$ be
the time period that the server is continuously busy processing packets
from buffer 1, and $\eta_{1}(s)=\boldsymbol{E}[e^{-sT_{1}}]$. The
busy period $T_{1}$ always starts from processing a packet from buffer
1. Suppose the processing time of this packet is of length $P_{1}=l$
and if there is more than one priority 1 packet arriving during $[0,l]$,
then another busy period will start from time $l$ and the new busy
period is identically distributed as $T_{1}$. Thus we have $\boldsymbol{E}[e^{-s(l+T_{1})}|P_{1}=l,B_{1}(l)=1]=e^{-sl}\eta_{1}(s)$.

If there is no arrival during $P_{1}=l$, then the busy period would
be $l$ only. By unconditioning on $B_{1}(l)$ we have 

$\boldsymbol{E}[e^{-(l+T_{1})}|P_{1}=l]=e^{-sl}\eta_{1}(s)(1-e^{-\lambda_{1}l})+e^{-sl}e^{-\lambda_{1}l}.$ 

Unconditioning on $P_{1}=l$ we have 
\begin{eqnarray*}
\eta_{1}(s) & = & \eta_{1}(s)[\psi_{1}(s)-\psi_{1}(s+\lambda_{1})]+\psi_{1}(s+\lambda_{1}).
\end{eqnarray*}

Thus the LST of $T_{1}$ is given by $\eta_{1}(s)=\frac{\psi_{1}(s+\lambda_{1})}{1-\psi_{1}(s)+\psi_{1}(s+\lambda_{1})}=\frac{\mu_{1}(s+\mu_{1})}{s^{2}+2\mu_{1}s+s\lambda_{1}+\mu_{1}^{2}}$
and the derivative of $\eta_{1}(s)$ at $s=0$ is given by $\eta_{1}^{'}(s)|_{s=0}=\frac{-\lambda_{1}-\mu_{1}}{\mu_{1}^{2}}.$

Now we characterize the LST of $W_{2}$ by the fact that $\boldsymbol{E}[e^{-sW_{2}}]=\boldsymbol{E}[e^{-s(U_{2}+T_{1})}]$
and conditioning on different scenarios observed by a packet that
enters buffer 2 at time 0. If the server is idling when the packet
enters the buffer, then $\boldsymbol{E}[e^{-s(U_{2}+T_{1})}|B_{1}(0^{-})=0,J(0^{-})=0,B_{2}(0^{-})=0]=1.$

If the server is busy processing a packet from buffer $j$ for $j\in\{1,...,k\}$,
and buffer 1 is not empty, then we have 

\begin{eqnarray*}
 &  & \boldsymbol{E}[e^{-s(U_{2}+T_{1})}|B_{1}(0^{-})=1,J(0^{-})=j,B_{2}(0^{-})=0]\\
 &  & =\boldsymbol{E}[e^{-sU_{2}}|J(0^{-})=j]\boldsymbol{E}[e^{-sT_{1}}]=\psi_{j}(s)\eta_{1}(s).
\end{eqnarray*}

If the server is busy processing a packet from buffer $j$ for $j\in\{1,...,k\}$,
and buffer 1 is empty, then we have 
\begin{eqnarray*}
 &  & \boldsymbol{E}[e^{-s(U_{2}+T_{1})}|B_{1}(0^{-})=0,J(0^{-})=j,\\
 &  & U_{2}=u,B_{1}(u)=1,B_{2}(0^{-})=0]=e^{-su}\eta_{1}(s),
\end{eqnarray*}
 and 
\begin{eqnarray*}
 &  & \boldsymbol{E}[e^{-s(U_{2}+T_{1})}|B_{1}(0^{-})=0,J(0^{-})=j,\\
 &  & U_{2}=u,B_{1}(u)=0,B_{2}(0^{-})=0]=e^{-su}.
\end{eqnarray*}
 By unconditioning on $B_{1}(u)$ we have 
\begin{eqnarray*}
 &  & \boldsymbol{E}[e^{-s(U_{2}+T_{1})}|B_{1}(0^{-})=0,J(0^{-})=j,\\
 &  & U_{2}=u,B_{2}(0^{-})=0]\\
 &  & =e^{-su}\eta_{1}(s)(1-e^{-\lambda_{1}u})+e^{-su}e^{-\lambda_{1}u}.
\end{eqnarray*}
 By unconditioning on $U_{2}=u$ we have 
\begin{eqnarray*}
 &  & \boldsymbol{E}[e^{-s(U_{2}+T_{1})}|B_{1}(0^{-})=0,J(0^{-})=j,B_{2}(0^{-})=0]\\
 &  & =\psi_{j}(s)\eta_{1}(s)-\psi_{j}(s+\lambda_{1})\eta_{1}(s)+\psi_{j}(s+\lambda_{1}).
\end{eqnarray*}

So far we have characterized the LST of $W_{2}$ conditioning on different
scenarios observed by entering packets. We only need the probabilities
of $\boldsymbol{P}(B_{1}(0^{-})=\{0,1\},J(0^{-})=j|B_{2}(0^{-})=0)$
to obtain $\boldsymbol{E}[e^{-sW_{2}}]$, which we will discuss at
the end of this subsection. Before doing that, we now consider how
to obtain the LST of $W_{3}$ by conditioning on different scenarios.
For simplicity of analysis we here assume $\lambda_{1}=\lambda_{2}$
and $\mu_{1}=\mu_{2}$. The argument for distinct $\lambda_{1}$ and
$\lambda_{2}$ or $\mu_{1}$ and $\mu_{2}$ are similar, however notationally
cumbersome. We let $T_{12}$ be the busy time during which the server
continuously serves packets from buffer 1 and buffer 2 and let $B_{12}(t)=B_{1}(t)+B_{2}(t).$
We now characterize the LST of $T_{12}$ by letting $\eta_{12,0}(s)=\boldsymbol{E}[e^{-sT_{12}}\mid B_{12}(0^{+})=0]$
and $\eta_{12,1}(s)=\boldsymbol{E}[e^{-sT_{12}}\mid B_{12}(0^{+})=1]$.

We suppose the busy period $T_{12}$ starts at time 0 by processing
a packet from either buffer 1 or buffer 2 with processing time $P_{1}=l$.
Then we have 
\begin{eqnarray*}
 &  & \boldsymbol{E}[e^{-s(l+T_{12})}|B_{12}(0^{+})=0,P_{1}=l,B_{12}(l)=0]\\
 &  & =e^{-sl},
\end{eqnarray*}

\begin{eqnarray*}
 &  & \boldsymbol{E}[e^{-s(l+T_{12})}|B_{12}(0^{+})=0,P_{1}=l,B_{12}(l)=1]\\
 &  & =e^{-sl}\eta_{12,0}(s),
\end{eqnarray*}

\begin{eqnarray*}
 &  & \boldsymbol{E}[e^{-s(l+T_{12})}|B_{12}(0^{+})=0,P_{1}=l,B_{12}(l)=2]\\
 &  & =e^{-sl}\eta_{12,1}(s),
\end{eqnarray*}

\begin{eqnarray*}
 &  & \boldsymbol{E}[e^{-s(l+T_{12})}|B_{12}(0^{+})=1,P_{1}=l,B_{12}(l)=1]\\
 &  & =e^{-sl}\eta_{12,0}(s),
\end{eqnarray*}
and

\begin{eqnarray*}
 &  & \boldsymbol{E}[e^{-s(l+T_{12})}|B_{12}(0^{+})=1,P_{1}=l,B_{12}(l)=2]\\
 &  & =e^{-sl}\eta_{12,1}(s).
\end{eqnarray*}

Note that $B_{12}(0^{+})=2$ has probability 0 since right after time
$0$, a packet from either buffer 1 or 2 is in service. Unconditioning
on $B_{12}(l),$ we have 

\begin{eqnarray*}
 &  & \boldsymbol{E}[e^{-s(l+T_{12})}|B_{12}(0^{+})=0,P_{1}=l]\\
 & = & e^{-sl}e^{-2\lambda_{1}l}+2(1-e^{-\lambda_{1}l})e^{-\lambda_{1}l}e^{-sl}\eta_{12,0}(s)\\
 &  & +e^{-sl}(1-e^{-\lambda_{1}l})^{2}\eta_{12,1}(s),
\end{eqnarray*}
and 
\begin{eqnarray*}
 &  & \boldsymbol{E}[e^{-s(l+T_{12})}|B_{12}(0^{+})=1,P_{1}=l]\\
 & = & e^{-sl}e^{-\lambda_{1}l}\eta_{12,0}(s)+e^{-sl}(1-e^{-\lambda_{1}l})\eta_{12,1}(s).
\end{eqnarray*}

Unconditioning on $P_{1}=l$, we have 
\begin{eqnarray*}
 & \eta_{12,0}(s) & =\psi_{1}(s+2\lambda_{1})\\
 &  & +2[\psi_{1}(s+\lambda_{1})-\psi_{1}(s+2\lambda_{1})]\eta_{12,0}(s)\\
 &  & +[\psi_{1}(s)-2\psi_{1}(s+\lambda_{1})+\psi_{1}(s+2\lambda_{1})]\eta_{12,1}(s),
\end{eqnarray*}
 and 
\begin{eqnarray*}
\eta_{12,1}(s) & = & \psi_{1}(s+\lambda_{1})\eta_{12,0}(s)\\
 &  & +[\psi_{1}(s)-\psi_{1}(s+\lambda_{1})]\eta_{12,1}(s).
\end{eqnarray*}

By solving the two equations above for $\eta_{12,0}(s)$ and $\eta_{12,1}(s)$,
we have \begin{widetext}

\begin{eqnarray*}
\eta_{12,0}(s) & = & \frac{\psi_{1}(s+2\lambda_{1})}{1-2[\psi_{1}(s+\lambda_{1})-\psi_{1}(s+2\lambda_{1})]-\frac{\psi_{1}(s+\lambda_{1})}{1-\psi_{1}(s)+\psi_{1}(s+\lambda_{1})}[\psi_{1}(s)-2\psi_{1}(s+\lambda_{1})+\psi_{1}(s+2\lambda_{1})]},
\end{eqnarray*}

\end{widetext}

and

\begin{eqnarray*}
\eta_{12,1}(s) & = & \frac{\eta_{12,0}(s)\psi_{1}(s+\lambda_{1})}{1-\psi_{1}(s)+\psi_{1}(s+\lambda_{1})}.
\end{eqnarray*}

Recall that $U_{3}$ is the remaining service time observed by a packet
that enters buffer 3 at time 0, we then have the LST of busy period
of buffer 3 as conditioned on various scenarios: 

\begin{eqnarray*}
\boldsymbol{E}[e^{-s(U_{3}+T_{12})}|B_{12}(0^{-})=0,J(0^{-})=0,B_{3}(0^{-})=0]=1,
\end{eqnarray*}

\begin{eqnarray*}
 &  & \boldsymbol{E}[e^{-s(U_{3}+T_{12})}|B_{12}(0^{-})=0,J(0^{-})=j,B_{3}(0^{-})=0]\\
 & = & \psi_{j}(s+2\lambda_{1})+2[\psi_{j}(s+\lambda_{1})-\psi_{j}(s+2\lambda_{1})]\eta_{12,0}(s)\\
 &  & +[\psi_{j}(s)-2\psi_{j}(s+\lambda_{1})+\psi_{j}(s+2\lambda_{1})]\eta_{12,1}(s),
\end{eqnarray*}

\begin{eqnarray*}
 &  & \boldsymbol{E}[e^{-s(U_{3}+T_{12})}|B_{12}(0^{-})=1,J(0^{-})=j,B_{3}(0^{-})=0]\\
 &  & =\psi_{j}(s+\lambda_{1})\eta_{12,0}(s)+[\psi_{j}(s)-\psi_{j}(s+\lambda_{1})]\eta_{12,1}(s),
\end{eqnarray*}

and 
\begin{eqnarray*}
 &  & \boldsymbol{E}[e^{-s(U_{3}+T_{12})}|B_{12}(0^{-})=2,J(0^{-})=j,B_{3}(0^{-})=0]\\
 &  & =\psi_{j}(s)\eta_{12,1}(s).
\end{eqnarray*}

Thus we can characterize the LST of $W_{3}$ once we know the stationary
probability of each scenario. For queues with lower priorities, the
analysis requires more argument, but they are all similar (albeit
cumbersome notationally). To get the stationary probability of each
scenario, we model $S(t)=(J(t),B_{1}(t),B_{2}(t),...,B_{k}(t))$ as
a continuous time Markov chain (CTMC) and obtain the stationary probabilities. Here we only show
the example for the case of $k=2$. For $k>2$ the analysis is similar.
The rate matrix $Q$ of the two-queue case is given as follows:

\begin{widetext}

\footnotesize {%
}

\scriptsize{

\begin{equation*} Q =      \kbordermatrix{  & (0,0,0) & (1,0,0) & (2,0,0) & (1,1,0) & (1,0,1) & (2,1,0) & (2,0,1) & (1,1,1) & (2,1,1)\\ (0,0,0) & -\lambda_{1}-\lambda_{2} & \lambda_{1} & \lambda_{2} & 0 & 0 & 0 & 0 & 0 & 0\\ (1,0,0) & \mu_{1} & -\lambda_{1}-\lambda_{2}-\mu_{1} & 0 & \lambda_{1} & \lambda_{2} & 0 & 0 & 0 & 0\\ (2,0,0) & \mu_{2} & 0 & -\lambda_{1}-\lambda_{2}-\mu_{2} & 0 & 0 & \lambda_{1} & \lambda_{2} & 0 & 0\\ (1,1,0) & 0 & \mu_{1} & 0 & -\lambda_{2}-\mu_{1} & 0 & 0 & 0 & \lambda_{2} & 0\\ (1,0,1) & 0 & 0 & \mu_{1} & 0 & -\lambda_{1}-\mu_{1} & 0 & 0 & \lambda_{1} & 0\\ (2,1,0) & 0 & \mu_{2} & 0 & 0 & 0 & \lambda_{2}-\mu_{2} & 0 & 0 & \lambda_{2}\\ (2,0,1) & 0 & 0 & \mu_{2} & 0 & 0 & 0 & -\lambda_{1}-\mu_{2} & 0 & \lambda_{1}\\ (1,1,1) & 0 & 0 & 0 & 0 & \mu_{1} & 0 & 0 & -\mu_{1} & 0\\ (2,1,1) & 0 & 0 & 0 & 0 & \mu_{2} & 0 & 0 & 0 & -\mu_{2} }.\qquad 
\end{equation*}

}

\normalsize{}

\end{widetext}

The stationary distribution $\hat{\pi}$ (which is a vector) is given
by solving $\hat{\pi}Q=0$ and $\hat{\pi}\mathbf{1}=1$, and we have
\begin{eqnarray*}
p_{1} & = & \hat{\pi}(1,1,0)+\hat{\pi}(2,1,0)+\hat{\pi}(1,1,1)+\hat{\pi}(2,1,1),
\end{eqnarray*}
\begin{eqnarray*}
p_{2} & = & \hat{\pi}(1,0,1)+\hat{\pi}(2,0,1)+\hat{\pi}(1,1,1)+\hat{\pi}(2,1,1),
\end{eqnarray*}
\begin{eqnarray*}
\boldsymbol{P}(J(0^{-})=1|B_{1}(0^{-})=0) & = & \frac{\hat{\pi}(1,0,0)+\hat{\pi}(1,0,1)}{1-p_{1}},
\end{eqnarray*}
and 
\begin{eqnarray*}
\boldsymbol{P}(B_{1}(0^{-})=0,J(0^{-})=1|B_{2}(0^{-})=0) & = & \frac{\hat{\pi}(1,0,0)}{1-p_{2}}.
\end{eqnarray*}
The other conditional probabilities can be calculated similarly. 

In summary, in order to obtain the exact PAoI for queue $i$ in M/M/1/1+$\mathbf{\sum1^{*}}$
type queues, one needs to first have the LST of $W_{i}$ conditioning
on each event of $(B_{1}(0^{-})=\{0,1\},...,B_{i-1}(0^{-})=\{0,1\},B_{i}(0^{-})=0,J(0^{-})=\{0,1,...,k\})$,
then by PASTA and the CTMC analysis to obtain the steady state probability
of each event of $(B_{1}(0^{-})=\{0,1\},...,B_{i-1}(0^{-})=\{0,1\},B_{i}(0^{-})=0,J(0^{-})=\{0,1,...,k\})$.
By further unconditioning on each event one can eventually get the
LST of $W_{i}$. This approach becomes cumbersome when the number
of queues becomes large. However, this modeling method by focusing
on the busy period of the server could be useful in many cases as we
will see in Section \ref{sec:Infinite-Buffer-Size}.

\subsection{Bounds and Approximation for M/G/1/1+$\mathbf{\sum1^{*}}$ Type Queues\label{subsec:Bounds-and-Approximation}}

Here we generalize the analysis in Subsection \ref{subsec:Exact-Analysis-for}
to the M/G/1/1$+\sum1^{*}$ system, where service times are general.
The CTMC analysis used in Subsection \ref{subsec:Exact-Analysis-for}
cannot be applied here. However, since arrivals still follow Poisson
processes, Lemma \ref{lem:1} holds. We can write the PAoI of queue
$i$ as 
\begin{eqnarray}
\boldsymbol{E}[A_{i}] & = & \boldsymbol{E}[P_{i}]+\boldsymbol{E}[W_{i}]+\boldsymbol{E}[I_{i}]+\boldsymbol{E}[G_{i}]\nonumber \\
 & = & \frac{1}{\mu_{i}}+\frac{p_{i}}{\lambda_{i}(1-p_{i})}+\frac{2}{\lambda_{i}}-\frac{1}{\lambda_{i}}\boldsymbol{E}[e^{-\lambda_{i}W_{i}}]\nonumber \\
 & \leq & \frac{1}{\mu_{i}}+\frac{p_{i}}{\lambda_{i}(1-p_{i})}+\frac{2}{\lambda_{i}}-\frac{1}{\lambda_{i}}e^{-\frac{p_{i}}{1-p_{i}}}.\label{eq:5}
\end{eqnarray}

Inequality (\ref{eq:5}) follows from the Jensen's inequality by knowing
that $e^{-\lambda_{i}x}$ is a convex function. Notice that Inequality
(\ref{eq:5}) gives an upper bound of PAoI in terms of probability
$p_{i}$ (which is the steady state probability that buffer $i$ is
full). Takenaka \cite{takenaka1989analysis} considered a multi-queue
M/G/1 system with each queue having a unique buffer size. Our single-buffer
system thus becomes a special case of the model in Takenaka \cite{takenaka1989analysis}.
Takenaka \cite{takenaka1989analysis} introduced the relationship
between $p_{i}$ and the stationary state seen by departures, for
the system in which service times for all queues are identically distributed
with $F_{i}(x)=F(x)$ and $\mu_{i}=\mu$ for all $i$. Thus one can
get the stationary distribution of states by solving an embedded Markov
chain. It is important to note that the result in \cite{takenaka1989analysis}
only works for identically distributed service times. For heterogenous
service times with $k>2$, the results are difficult to obtain \cite{takenaka1984buffer,takenaka1989analysis}.
So till the end of this subsection, we assume that service times for
packets across queues are identically distributed. To use the result
in \cite{takenaka1989analysis} to get $p_{i}$'s, we first introduce
some notations here. Let $\psi(s)$ be the LST of service time. Let
$\mathcal{S}_{k}$ be our original system which has $k$ queues. Say
$\mathcal{S}_{l}$ is the subsystem of $\mathcal{S}_{k}$ which contains
only queue 1 to queue $l$, and packets from queue $l+1$ to $k$
do not arrive in system $\mathcal{S}_{l}$. Let $\pi_{l}(B_{1},B_{2},...,B_{l})$
be the stationary distribution in which the system $\mathcal{S}_{l}$
has $B_{i}\in\{0,1\}$ number of packets in queue $i$ immediately
after the departure of a packet. Now we re-write a theorem from \cite{takenaka1989analysis}
for our model.
\begin{thm}
\label{thm:(Theorem-3-of}(Theorem 3 of \cite{takenaka1989analysis})
The steady state probability of the buffer with size one at queue
$i$ being full is given by $p_{i}=1-\frac{\pi_{i-1}(0,...0)-\pi_{i}(0,...,0)}{\frac{\lambda_{i}}{\mu}+\frac{\lambda_{i}}{\sum_{j=1}^{k}\lambda_{j}}\pi_{k}(0,...,0)}-\frac{\pi_{k}(0,...,0)}{\frac{\sum_{j=1}^{k}\lambda_{j}}{\mu}+\pi_{k}(0,...,0)}$
for all $i\in\{1,2,...,k\}$, where $\pi_{0}(0,...,0)=1$.
\end{thm}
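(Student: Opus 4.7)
The plan is to adapt Takenaka's embedded Markov chain approach. For each subsystem $\mathcal{S}_l$ (containing only queues $1,\ldots,l$ with the common service-time distribution), define the discrete-time Markov chain whose state after each departure is $(B_1,\ldots,B_l)\in\{0,1\}^l$, solve for its stationary distribution $\pi_l$, and in particular extract the empty-state probability $\pi_l(0,\ldots,0)$. The task is then to relate $p_i$ in the full system $\mathcal{S}_k$ to this family of scalars via a cycle-length and rate-conservation argument, bridged by PASTA.

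The first building block is a cycle-length identity for $\mathcal{S}_k$. Between two consecutive departures the server either begins another service immediately (if some queue is non-empty) or, with probability $\pi_k(0,\ldots,0)$, idles for an $\mathrm{Exp}(\Lambda)$ period until the next Poisson arrival, where $\Lambda=\sum_j\lambda_j$. Hence the mean cycle length is $D:=1/\mu+\pi_k(0,\ldots,0)/\Lambda$, so the long-run fraction of time the server is idle is $\pi_k(0,\ldots,0)/\bigl(\Lambda/\mu+\pi_k(0,\ldots,0)\bigr)$, which is precisely the third summand of the claimed formula.

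The next step is a subsystem-nesting identity: because the service-time distribution is common to every class, the dynamics of the higher-priority queues, observed at suitably chosen departure epochs, are distributionally insensitive to the presence of lower-priority classes. Formalizing this lets us project $\mathcal{S}_k$ onto $\mathcal{S}_l$ and identify the stationary weight of the event that queues $1,\ldots,i-1$ are empty but queue $i$ is non-empty with $\pi_{i-1}(0,\ldots,0)-\pi_i(0,\ldots,0)$. By PASTA, $p_i$ is the time-stationary probability that $B_i=1$, and splitting $1-p_i$ into the server-idle fraction and the server-busy-with-$B_i=0$ fraction gives, after scaling the latter by the per-cycle class-$i$ arrival intensity $\lambda_i D=\lambda_i/\mu+\lambda_i\pi_k(0,\ldots,0)/\Lambda$, the second summand of the formula. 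Rearranging yields the stated expression for $p_i$.

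The hard part will be the subsystem-nesting identity. Intuitively, the server's cycle structure as seen by any prefix of higher-priority queues is invariant under adding or removing lower-priority classes when all service times have the same law; however, a rigorous proof requires a coupling argument that tracks how class-$1,\ldots,i-1$ arrivals accumulate in their (size-one) buffers during the services of lower-priority packets and argues that the resulting embedded chain coincides in law with that of $\mathcal{S}_{i-1}$. This is exactly where heterogeneous service times break the argument, as noted in \cite{takenaka1984buffer,takenaka1989analysis}, and it explains why the theorem is restricted to the identical-service-time setting.
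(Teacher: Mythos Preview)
The paper does not actually prove this theorem: it is quoted verbatim as Theorem~3 of \cite{takenaka1989analysis} and then used as a black box to compute the $p_i$'s. So there is no ``paper's own proof'' to compare against; your proposal is a sketch of the argument behind the cited result.

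That said, your outline is essentially the right one and matches the structure of Takenaka's argument. Two remarks.

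First, the subsystem-nesting identity is less delicate than you suggest once you observe the following: at departure epochs in $\mathcal{S}_k$, the transition of the marginal vector $(B_1,\ldots,B_l)$ depends only on (i) which of $B_1,\ldots,B_l$ (if any) is removed at the start of the next service, and (ii) which of buffers $1,\ldots,l$ fill during a generic $F$-distributed service. Because all classes share the same service law, step~(ii) is identical regardless of whether the packet being served is of class $\le l$ or $>l$; and step~(i) is determined by the priority rule applied to $(B_1,\ldots,B_l)$ alone (when that vector is nonzero) or leaves it at zero (when it is zero). Hence the marginal chain on $(B_1,\ldots,B_l)$ at departures in $\mathcal{S}_k$ has exactly the transition kernel of the embedded chain in $\mathcal{S}_l$, so its stationary law is $\pi_l$. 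No coupling is needed; the argument is purely a check on transition probabilities, and this is exactly where heterogeneous service times would break it.

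Second, your description of how the middle summand arises is muddled. The decomposition ``$1-p_i =$ (idle fraction) $+$ (busy with $B_i=0$)'' is correct as a time-stationary split, but the formula is not obtained by ``scaling the latter by $\lambda_i D$.'' The clean route is rate conservation: the throughput of class~$i$ is $\lambda_i(1-p_i)$ on one hand, and on the other equals the fraction of departure cycles in which a class-$i$ packet is served, divided by the mean cycle length $D=1/\mu+\pi_k(0,\ldots,0)/\Lambda$. That fraction is
\[
\boldsymbol{P}_{\pi_k}\bigl(B_1=\cdots=B_{i-1}=0,\ B_i=1\bigr)\;+\;\pi_k(0,\ldots,0)\,\frac{\lambda_i}{\Lambda},
\]
the first term being, by the nesting identity, $\pi_{i-1}(0,\ldots,0)-\pi_i(0,\ldots,0)$. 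Equating and solving for $1-p_i$ gives the stated formula directly.
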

To obtain the probability $p_{i}$, we only need to find the stationary
distribution that is seen by departures. For that, we model the system
state seen by departures as an embedded Markov chain. We only introduce
the case for $k=2$ here. For $k>2$ the analysis is similar but not
presented here for notational and space restrictions. Since the departure
can see at most one packet waiting at each buffer, the transition
matrix for $k=2$ is given as follows:

$$\tilde{P}_2 =      \kbordermatrix{ & (0,0) & (0,1) & (1,0) & (1,1) \\       (0,0) & a_{0} & a_{1} & a_{2} & a_{3}\\ (0,1) & a_{0} & a_{1} & a_{2} & a_{3}\\ (1,0) & a_{0} & a_{1} & a_{2} & a_{3}\\ (1,1) & 0 & b_{0} & 0 & 1-b_{0} },\qquad $$

where $a_{0}=\int_{0}^{\infty}e^{-(\lambda_{1}+\lambda_{2})x}dF(x)$,
$a_{1}=\int_{0}^{\infty}e^{-\lambda_{1}x}(1-e^{-\lambda_{2}x})dF(x)$,
$a_{2}=\int_{0}^{\infty}(1-e^{-\lambda_{1}x})e^{-\lambda_{2}x}dF(x)$,
$a_{3}=\int_{0}^{\infty}(1-e^{-\lambda_{1}x})(1-e^{-\lambda_{2}x})dF(x)$
and $b_{0}=\int_{0}^{\infty}e^{-\lambda_{1}x}dF(x)$. The stationary
distribution $\pi_{2}(0,0)$ can thus be obtained by solving the linear
system $\pi_{2}\tilde{P}_{2}=\pi_{2}$ with $\pi_{2}\boldsymbol{1}=1,$
where $\pi_{2}=(\pi_{2}(0,0),\pi_{2}(0,1),\pi_{2}(1,0),\pi_{2}(1,1))$.
Notice from Theorem \ref{thm:(Theorem-3-of} that we also need $\pi_{1}(0)$
to get $p_{i}$'s. To obtain $\pi_{1}(0)$ we solve the subsystem
$\mathcal{S}_{1}$ with $\pi_{1}\tilde{P}_{1}=\pi_{1}$ and $\pi_{1}\boldsymbol{1}=1,$
where the transition matrix $\tilde{P}_{1}$ of the embedded Markov
chain is given by

$$\tilde{P}_1 =      \kbordermatrix{ & (0) & (1)\\ (0) & b_{0} & 1-b_{0}\\ (1) & b_{0} & 1-b_{0} }.\qquad $$

By solving the embedded Markov chains, we have $\pi_{1}(0)=\psi(\lambda_{1})$
and $\pi_{2}(0,0)=\frac{\psi(\lambda_{1}+\lambda_{2})\psi(\lambda_{1})}{1-\psi(\lambda_{2})+\psi(\lambda_{1}+\lambda_{2})}$.
Then using Theorem \ref{thm:(Theorem-3-of}, we have $p_{1}=1-\frac{1-\psi(\lambda_{1})}{\frac{\lambda_{1}}{\mu}+\frac{\lambda_{1}}{\lambda_{1}+\lambda_{2}}\pi_{2}(0,0)}-\frac{\pi_{2}(0,0)}{\frac{\lambda_{1}+\lambda_{2}}{\mu}+\pi_{2}(0,0)}$
and $p_{2}=1-\frac{\psi(\lambda_{1})-\pi_{2}(0,0)}{\frac{\lambda_{2}}{\mu}+\frac{\lambda_{2}}{\lambda_{1}+\lambda_{2}}\pi_{2}(0,0)}-\frac{\pi_{2}(0,0)}{\frac{\lambda_{1}+\lambda_{2}}{\mu}+\pi_{2}(0,0)}$.
In summary, to obtain the probability $p_{i}$ of a system with $k$
queues, one needs to compute the stationary probability $\pi_{j}(0,...,0)$
for $j=1,...,i-1$ by solving the embedded Markov chain and applying
Theorem \ref{thm:(Theorem-3-of}. For systems with large $k$, solving
all the embedded Markov chains could be tedious. Fast approximations
for $p_{i}$'s are provided in \cite{takenaka1989characteristics}.

So far we have characterized the probability $p_{i}$ for Inequality
(\ref{eq:5}), which we can use to obtain the PAoI upper bound for
each queue. In fact, the upper bounds that we provide in Inequality
(\ref{eq:5}) are decent approximations of PAoI for queues. We will
show it numerically in Section \ref{sec:Numerical-Study}. 

It was found by Costa et al \cite{costa2016age} that for single-queue
systems such as M/M/1/1, M/M/1/2, and M/M/1/2{*}, increasing the arrival
rate can reduce PAoI continuously. However, it is not the case in
our model with multiple queues. We find that by increasing the arrival
rate of a certain queue, its own PAoI will decrease. However, PAoI
for the other queues will not necessarily decrease. We will show the
detail numerically in Section \ref{sec:Numerical-Study}. Besides,
we have the following theorem discussing the scenario when the arrival
rate of a certain queue becomes large. 
\begin{thm}
\label{thm:infty bound}For $1\leq i\leq k$, if $\lambda_{i}\rightarrow\infty$,
then $\boldsymbol{E}[A_{j}]\rightarrow\infty$ for $j>i$, and $\boldsymbol{E}[A_{j}]$
will be bounded for $j\leq i$. 
\end{thm}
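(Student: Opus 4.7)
The plan is a throughput (work-conservation) argument combined with the priority structure. Writing $\theta_m := \lambda_m(1-p_m)$ for queue $m$'s admitted rate, the server utilization $\sum_{m=1}^{k}\theta_m/\mu$ cannot exceed $1$, so $\sum_{m=1}^{k}\theta_m \leq \mu$. The crucial input is a saturation claim: as $\lambda_i\to\infty$, $\sum_{m=1}^{i}\theta_m\to\mu$. I would argue this by restricting attention to the subsystem $\mathcal{S}_i$ of queues $1,\dots,i$. Because the server allocates among $\{1,\dots,i\}$ in the same way whether or not queues $i+1,\dots,k$ exist, the throughputs $\theta_1,\dots,\theta_i$ in $\mathcal{S}_k$ coincide with those in $\mathcal{S}_i$; inside $\mathcal{S}_i$, driving $\lambda_i\to\infty$ keeps buffer $i$ full almost surely, so the server is idle with probability tending to $0$ and $\sum_{m=1}^{i}\theta_m = \mu\cdot\boldsymbol{P}(\text{server busy in }\mathcal{S}_i)\to\mu$.

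The divergence for $j>i$ is then immediate: combining the saturation with the conservation bound yields $\sum_{m=i+1}^{k}\theta_m\to 0$, hence $\theta_j\to 0$. Since $\lambda_j$ is fixed, $p_j\to 1$, and Equation (\ref{eq:3}) gives $\boldsymbol{E}[W_j]=p_j/\theta_j\to\infty$. Because the remaining summands in Equation (\ref{eq:2}) are non-negative, $\boldsymbol{E}[A_j]\to\infty$.

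For the bounded part ($j\leq i$), I would inspect the terms of Equation (\ref{eq:2}) one by one. The arrival term $\boldsymbol{E}[I_j]=1/\lambda_j$ and the gap $\boldsymbol{E}[G_j]\leq 1/\lambda_j$ (Lemma \ref{lem:1}) are both finite for every $\lambda_i$. For the waiting term, strict priority of queue $j$ over queue $i$ prevents the class-$i$ saturation from starving class $j$, so $\theta_j$ converges to a strictly positive limit as $\lambda_i\to\infty$. When $j=i$ this limit is $\mu-\sum_{m<i}\theta_m>0$, positive under the standing stability of the higher-priority subsystem $\mathcal{S}_{i-1}$; when $j<i$ the same throughput argument applied inside $\mathcal{S}_j$---whose limit as $\lambda_i\to\infty$ is the well-defined stochastic system with queue $i$'s buffer permanently full---keeps $\theta_j$ bounded below by a positive constant. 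Since $p_j\leq 1$, $\boldsymbol{E}[W_j]=p_j/\theta_j$ stays bounded, and thus so does $\boldsymbol{E}[A_j]$.

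The main obstacle is rigorously establishing the saturation identity $\sum_{m\leq i}\theta_m\to\mu$ in $\mathcal{S}_i$. Intuitively this is clear---an arbitrarily fast refilling of buffer $i$ leaves the server no interval to rest---but a formal proof requires either a sample-path coupling with the standalone M/G/1 buffer-one queue whose throughput tends to $\mu$, or, in the homogeneous-service case, passage to the limit in the embedded-chain formulas of Subsection \ref{subsec:Bounds-and-Approximation} using $\psi(\lambda_i)\to 0$. Everything else reduces to algebra on Equations (\ref{eq:2}) and (\ref{eq:3}).
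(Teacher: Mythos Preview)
Your throughput framing is appealing, but the step ``the throughputs $\theta_1,\dots,\theta_i$ in $\mathcal{S}_k$ coincide with those in $\mathcal{S}_i$'' is false in this non-preemptive model, and both halves of your argument lean on it. In $\mathcal{S}_k$ the server can be mid-service on a class $m>i$ packet when a class-$j$ ($j\le i$) arrival occurs; that packet must wait out the residual low-priority service, whereas in $\mathcal{S}_i$ the server would have been idle and the packet would enter service immediately. Consequently $p_j^{(\mathcal{S}_k)}\ge p_j^{(\mathcal{S}_i)}$ and $\theta_j^{(\mathcal{S}_k)}\le \theta_j^{(\mathcal{S}_i)}$, with strict inequality in general. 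This breaks the saturation step as you wrote it (you only get $\sum_{m\le i}\theta_m^{(\mathcal{S}_k)}\le \mu$, which is the wrong direction), and it also undermines the ``$\theta_j$ bounded below'' claim for $j<i$, which you justify by passing to $\mathcal{S}_j$. Separately, invoking ``stability of $\mathcal{S}_{i-1}$'' for the $j=i$ case is not the right notion here: the buffer-size-one system is always stable, and what you actually need is $\sum_{m<i}\theta_m<\mu$ in the limit, which is true but requires its own argument.

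The paper does not go through throughputs at all. It works directly with Takenaka's formula (Theorem~\ref{thm:(Theorem-3-of}) expressing $p_j$ via the departure-embedded stationary probabilities $\pi_l(0,\dots,0)$ of the subsystems $\mathcal{S}_l$. The key analytic fact is that $a_0=\int_0^\infty e^{-(\sum_{l\le j}\lambda_l)x}\,dF(x)=\psi\!\big(\sum_{l\le j}\lambda_l\big)\to 0$ as $\lambda_i\to\infty$ for every $j\ge i$, which forces $\pi_j(0,\dots,0)\to 0$ for $j\ge i$; plugging into the formula gives $p_j\to 1$ for $j>i$ and an explicit finite bound on $p_i/(\lambda_i(1-p_i))$. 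For $j<i$ the paper uses that $\pi_{j-1}(0,\dots,0)$ and $\pi_j(0,\dots,0)$ are computed in $\mathcal{S}_{j-1}$ and $\mathcal{S}_j$ and hence do not depend on $\lambda_i$ at all, yielding a $\lambda_i$-free upper bound $p_j\le 1-\mu(\pi_{j-1}-\pi_j)/\lambda_j<1$. You flag exactly this embedded-chain route as your fallback for the saturation step; in fact it is the whole proof, and once you use it there is no need for the throughput layer. If you want to keep the conservation viewpoint, the honest version of saturation in $\mathcal{S}_k$ is: at every service completion, buffer $i$ is nonempty with probability at least $1-\psi(\lambda_i)\to 1$, so the fraction of services (hence of busy time) devoted to classes $>i$ tends to $0$; this avoids the incorrect subsystem identification.
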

\begin{IEEEproof}
We first show that as $\lambda_{i}\rightarrow\infty$, then $\pi_{j}(0,...,0)\rightarrow0$
for $j\geq i$. To show this, we know that in the subsystem $\mathcal{S}_{j}$,
the first element of the transition matrix for the embedded Markov
chain is given by $a_{0}=\int_{0}^{\infty}e^{-(\sum_{l=1}^{j}\lambda_{l})x}dF(x)=\int_{0}^{\infty}(\sum_{l=1}^{j}\lambda_{l})e^{-(\sum_{l=1}^{j}\lambda_{l})x}F(x)dx.$
Since $F(x)\leq1$ for any $x\in[0,\infty)$, by dominated convergence
theorem, we have $\lim_{\lambda_{i}\rightarrow\infty}\int_{0}^{\infty}e^{-(\sum_{l=1}^{j}\lambda_{l})x}dF(x)dx=0.$
By the result from \cite{takenaka1989analysis} that $\pi_{j}(0,...,0)=\sum_{l=1}^{j}a_{0}\pi_{j}(0,...,\stackrel{l}{1},...,0)+a_{0}\pi_{j}(0,...,0)$,
we have $\pi_{j}(0,...,0)\rightarrow0$ for $j\geq i$. From Theorem
\ref{thm:(Theorem-3-of} we have $p_{j}=1-\frac{\pi_{j-1}(0,...0)-\pi_{j}(0,...,0)}{\frac{\lambda_{j}}{\mu}+\frac{\lambda_{j}}{\sum_{l=1}^{k}\lambda_{l}}\pi_{k}(0,...,0)}-\frac{\pi_{k}(0,...,0)}{\frac{\sum_{l=1}^{k}\lambda_{l}}{\mu}+\pi_{k}(0,...,0)}\rightarrow1$
for $j>i$. We then have $\boldsymbol{E}[A_{j}]\rightarrow\infty$
for $j>i$. 

For $j<i$, we have $p_{j}\leq1-\frac{\pi_{j-1}(0,...0)-\pi_{j}(0,...,0)}{\frac{\lambda_{j}}{\mu}+\frac{\lambda_{j}}{\sum_{l=1}^{k}\lambda_{l}}}.$
From the fact that $\pi_{j}(0,...,0)$ for $j<i$ will not be affected
by $\lambda_{i}$ and $\pi_{j}(0,...,0)<\pi_{j-1}(0,...,0)$ if $\lambda_{j}\neq0$
(see Theorem 1 of \cite{takenaka1989analysis}), we then have $p_{j}\leq1-\frac{\pi_{j-1}(0,...0)-\pi_{j}(0,...,0)}{\frac{\lambda_{j}}{\mu}}<1$
as $\lambda_{i}\rightarrow\infty$. Thus $\boldsymbol{E}[A_{j}]$
is bounded by Inequality (\ref{eq:5}). 

For $j=i$, we have 
\begin{eqnarray*}
 &  & \frac{p_{i}}{\lambda_{i}(1-p_{i})}\\
 & = & \frac{1-\frac{\pi_{i-1}(0,...0)-\pi_{i}(0,...,0)}{\frac{\lambda_{i}}{\mu}+\frac{\lambda_{i}}{\sum_{j=1}^{k}\lambda_{j}}\pi_{k}(0,...,0)}-\frac{\pi_{k}(0,...,0)}{\frac{\sum_{j=1}^{k}\lambda_{j}}{\mu}+\pi_{k}(0,...,0)}}{\lambda_{i}(\frac{\pi_{i-1}(0,...0)-\pi_{i}(0,...,0)}{\frac{\lambda_{i}}{\mu}+\frac{\lambda_{i}}{\sum_{j=1}^{k}\lambda_{j}}\pi_{k}(0,...,0)}+\frac{\pi_{k}(0,...,0)}{\frac{\sum_{j=1}^{k}\lambda_{j}}{\mu}+\pi_{k}(0,...,0)})}\\
 & = & \frac{1-\frac{\pi_{i-1}(0,...0)-\pi_{i}(0,...,0)}{\frac{\lambda_{i}}{\mu}+\frac{\lambda_{i}}{\sum_{j=1}^{k}\lambda_{j}}\pi_{k}(0,...,0)}-\frac{\pi_{k}(0,...,0)}{\frac{\sum_{j=1}^{k}\lambda_{j}}{\mu}+\pi_{k}(0,...,0)}}{\frac{\pi_{i-1}(0,...0)-\pi_{i}(0,...,0)}{\frac{1}{\mu}+\frac{1}{\sum_{j=1}^{k}\lambda_{j}}\pi_{k}(0,...,0)}-\frac{\lambda_{i}\pi_{k}(0,...,0)}{\frac{\sum_{j=1}^{k}\lambda_{j}}{\mu}+\pi_{k}(0,...,0)}}\\
 & \leq & \frac{1}{\frac{\pi_{i-1}(0,...0)-\pi_{i}(0,...,0)}{\frac{1}{\mu}+\frac{1}{\sum_{j=1}^{k}\lambda_{j}}\pi_{k}(0,...,0)}-\frac{\pi_{k}(0,...,0)}{\frac{\sum_{j=1}^{k}\lambda_{j}}{\mu\lambda_{i}}+\frac{\pi_{k}(0,...,0)}{\lambda_{i}}}}.
\end{eqnarray*}
 As $\lambda_{i}\rightarrow\infty$, $\frac{p_{i}}{\lambda_{i}(1-p_{i})}\leq\frac{1}{\mu\pi_{i-1}(0,...,0)}.$
By Inequality (\ref{eq:5}) we prove the theorem.
\end{IEEEproof}
Theorem \ref{thm:infty bound} shows that if we increase the arrival
rate of a queue, the PAoI of queues with lower priorities will be
greatly increased, while PAoI of queues with higher priorities will
remain bounded. Like we discussed in Section \ref{sec:Introduction},
some data sources may have information more important or time-sensitive
than other data sources. Theorem \ref{thm:infty bound} implies that
if higher priorities are given to data sources which are more important,
the PAoI of these important data sources can always be guaranteed
at a bounded level. It also implies that if we have queues with traffic
intensity significantly greater than the others, it is better to give
high priorities to those queues with low traffic intensities to guarantee
that all queues have a relatively low PAoI. 

\subsection{Exact Analysis for M/G/1/1+$\mathbf{\sum1}$ Type Queues\label{subsec:Exact-Analysis-for-1}}

Notice that in system M/G/1/1+$\mathbf{\sum1^{*}}$, we keep the most
recent arrival in the buffer. If we instead, keep the first arrival
in the buffer and reject the future arrivals before the buffer becomes
empty, the system becomes M/G/1/1+$\mathbf{\sum}1$. In the single
queue case, the system is denoted as M/G/1/2 which was analyzed in
\cite{costa2016age}. From Equation (\ref{eq:2}), the PAoI of source
$i$ in M/G/1/1+$\mathbf{\sum1}$ is given by 
\begin{eqnarray}
\boldsymbol{E}[A_{i}] & = & \boldsymbol{E}[P_{i}]+\boldsymbol{2E}[W_{i}]+\boldsymbol{E}[I_{i}].\label{eq:4}
\end{eqnarray}

From Subsection \ref{subsec:Bounds-and-Approximation}, we know the
method of calculating the probability $p_{i}$. By Equation (\ref{eq:3})
we can obtain the exact PAoI of M/G/1/1+$\mathbf{\sum1}$ system.
In addition, the following theorem states that the PAoI of each queue
in M/G/1/1+$\mathbf{\sum1}$ system is always larger than or equal
to that of an M/G/1/1+$\mathbf{\sum1}^{*}$ system.
\begin{thm}
\label{thm:PAoI-of-each}PAoI of each queue in M/G/1/1+$\mathbf{\sum1}$
is always greater than or equal to that of M/G/1/1+$\mathbf{\sum1}^{*}$
system, if both systems have the same parameters.
\end{thm}
\begin{IEEEproof}
From Equation (\ref{eq:2}) and (\ref{eq:4}) we only need to show
that $\boldsymbol{E}[G_{i}]\leq\boldsymbol{E}[W_{i}].$ Since $\boldsymbol{E}[G_{i}]=\frac{1}{\lambda_{i}}(1-\boldsymbol{E}[e^{-\lambda_{i}W_{i}}])$
and from the fact that $e^{-\lambda_{i}W_{i}}\geq1-\lambda_{i}W_{i}$,
we have $\boldsymbol{E}[G]\leq\frac{1}{\lambda_{i}}(1-(1-\lambda_{i}W_{i}))=\boldsymbol{E}[W_{i}]$.
Hence proved.
\end{IEEEproof}
Corollary \ref{thm:PAoI-of-each} reveals the fact that if service
times are i.i.d. for each source, then for the system with buffer
size one at each queue, it is always beneficial to keep the most recent
arrival in the buffer for reducing PAoI.

\section{Infinite Buffer Size \label{sec:Infinite-Buffer-Size}}

Although dropping redundant packets such as in system M/G/1+$\sum1^{*}$
can potentially reduce the system traffic, it is not clear if keeping
all the packets can result in a smaller PAoI. More importantly, for
some applications, dropping packets is not an option when the entire
data stream must been obtained for performing offline diagnostics
(also see \cite{kosta2019queue}). In such a scenario, processing
all the generated packets is necessary and for that, buffer size of
each queue needs to be large enough. In this section we discuss a
model in which buffer size of each queue is infinite. This model has
been discussed in \cite{huang2015optimizing,yates2019age}, however
they do not consider queues with priorities. In this section, since
there could be multiple packets waiting in each queue, it is necessary
to ascertain the order of service within a queue. We consider FCFS
and LCFS service disciplines separately when the server serves packets
from the same queue. Still, the server starts serving packets from
high priority queues when it becomes available. Throughout this section,
we assume that $\sum_{j=1}^{k}\frac{\lambda_{j}}{\mu_{j}}<1$ so that
the system is stable.

\subsection{Exact Analysis for M/G/1 Type Queues with FCFS\label{subsec:Exact-Analysis-for-2}}

We first discuss the model in which each queue is served according
to FCFS discipline. From the definition of PAoI we know that when
processing is complete for the $j^{th}$ arrival from queue $i$,
the random variable corresponding to PAoI is equal to $A_{ij}=C_{ij}-r_{i(j-1)}=(C_{ij}-r_{ij})+(r_{ij}-r_{i(j-1)}).$
Since $C_{ij}-r_{ij}$ is the sojourn time of packet $j$ and $r_{ij}-r_{i(j-1)}$
is the inter-arrival time between packet $j-1$ and $j$, the PAoI
for queue $i$ can be written as $\boldsymbol{E}[A_{i}]=\boldsymbol{E}[P_{i}]+\boldsymbol{E}[W_{i}]+\boldsymbol{E}[I_{i}],$
where $\boldsymbol{E}[W_{i}]$ is the expected waiting time in queue
and $\boldsymbol{E}[I_{i}]=\frac{1}{\lambda_{i}}$ is the expected
inter-arrival time. From \cite{gautam2012analysis} we have the exact
expression of $\boldsymbol{E}[W_{i}]$ for M/G/1 type queues with
priority, thus the PAoI of queue $i$ is given by:

\begin{align}
\boldsymbol{E}[A_{i}] & =\boldsymbol{E}[W_{i}]+\boldsymbol{E}[I_{i}]+\boldsymbol{E}[P_{i}]\nonumber \\
 & =\frac{\frac{1}{2}\sum_{j=1}^{k}\lambda_{j}\boldsymbol{E}[P_{j}^{2}]}{(1-\sum_{j=1}^{i}\frac{\lambda_{j}}{\mu_{j}})(1-\sum_{j=1}^{i-1}\frac{\lambda_{j}}{\mu_{j}})}+\frac{1}{\lambda_{i}}+\frac{1}{\mu_{i}}.\label{eq:6}
\end{align}

Interestingly, from the expression of $\boldsymbol{E}[W_{i}]$, we
find that the packets from higher priority queues always have shorter
expected waiting times compared with those from low priority queues.
However, Equation (\ref{eq:6}) shows that higher priority queues
do not always have smaller PAoI because $\frac{1}{\lambda_{i}}$ and
$\frac{1}{\mu_{i}}$ also contribute to PAoI. Another interesting
point from Equation (\ref{eq:6}) is that by increasing arrival rate
$\lambda_{i}$ we can reduce the PAoI for queue $i$ but greatly enlarge
the PAoI for queues with priority lower than $i$. We will also show
this result numerically in Section \ref{sec:Numerical-Study}.

Bedewy et al \cite{bedewy2018age} considered scheduling policies
to minimize the average PAoI across queues, i.e., $\frac{1}{k}\sum_{i=1}^{k}\boldsymbol{E}[A_{i}]$.
If we also consider the same objective and ask the design question
of how to minimize the average PAoI across queues by assigning queue
priorities, the answer is assigning high priorities to queues with
low $\rho_{i}=\frac{\lambda_{i}}{\mu_{i}}$, as we see in Theorem
\ref{thm:Low-priority}. 
\begin{thm}
\label{thm:Low-priority}If the queue priorities satisfy $\rho_{1}\leq\rho_{2}\leq...\leq\rho_{k},$
then the average PAoI across queues given by this priority order is
the smallest among all the priority orders.
\end{thm}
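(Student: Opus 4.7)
The plan is to reduce the optimization to a clean adjacent-swap interchange argument on Equation (\ref{eq:6}). First, I would observe that the numerator $R=\frac{1}{2}\sum_{j=1}^{k}\lambda_{j}\boldsymbol{E}[P_{j}^{2}]$ of the waiting-time term and the sum $\sum_{i=1}^{k}\left(\frac{1}{\lambda_{i}}+\frac{1}{\mu_{i}}\right)$ are both permutation-invariant, depending only on the multiset of queues and not on which queue receives which priority. Hence minimizing the average PAoI across priority orderings is equivalent to minimizing
\begin{equation*}
F(\rho_{1},\ldots,\rho_{k})\;=\;\sum_{i=1}^{k}\frac{1}{(1-\sigma_{i})(1-\sigma_{i-1})},\qquad \sigma_{i}=\sum_{j=1}^{i}\rho_{j},\;\;\sigma_{0}=0,
\end{equation*}
over all permutations of the traffic intensities, with stability ($\sigma_{k}<1$) guaranteeing each factor in the denominators is positive.

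Next, I would compare $F$ at two orderings differing only in a single adjacent swap. Fix consecutive priority levels $i$ and $i+1$ with current values $\rho_{i}=a$ and $\rho_{i+1}=b$, and let $u=1-\sigma_{i-1}$, $v=1-\sigma_{i+1}$. Swapping $a$ and $b$ leaves every $\sigma_{j}$ with $j\neq i$ unchanged, so in the difference of $F$ only the $i$-th and $(i+1)$-th summands survive. These two terms share the common factor $1/(u-a)$ before the swap, giving
\begin{equation*}
\frac{1}{(u-a)u}+\frac{1}{v(u-a)}\;=\;\frac{u+v}{uv(u-a)},
\end{equation*}
while the post-swap sum equals $(u+v)/[uv(u-b)]$ by the symmetric computation. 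Subtracting yields
\begin{equation*}
F_{\mathrm{after}}-F_{\mathrm{before}}\;=\;\frac{(u+v)(b-a)}{uv(u-a)(u-b)},
\end{equation*}
whose sign matches that of $b-a$ because every factor in the denominator is positive.

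Finally, I would invoke the standard interchange argument: any ordering that is not ascending in $\rho$ contains an adjacent pair with $\rho_{i}>\rho_{i+1}$, and swapping that pair strictly decreases $F$; iterating drives any ordering to the ascending one, which is therefore the global minimizer of the average PAoI. I do not expect a genuine obstacle; the only subtle step is the algebraic observation that both affected summands share the factor $(u+v)/(uv)$, which collapses the comparison to the single sign of $b-a$ and cleanly decouples it from the untouched portions of the sum.
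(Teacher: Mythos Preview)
Your proposal is correct and follows essentially the same interchange approach as the paper: both observe that the numerator $R$ and the $\sum_i(1/\lambda_i+1/\mu_i)$ terms are permutation-invariant, reduce to minimizing $F(\rho_1,\ldots,\rho_k)=\sum_i 1/[(1-\sigma_i)(1-\sigma_{i-1})]$, and then use a swap argument. The only cosmetic difference is that the paper swaps two possibly non-adjacent positions $i$ and $i+m$ and appeals to monotonicity of each summand in the partial sums $\sigma_j$, whereas you restrict to adjacent swaps and compute the explicit difference $(u+v)(b-a)/[uv(u-a)(u-b)]$; either variant yields the same conclusion.
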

\begin{IEEEproof}
Since
\begin{align}
 & \frac{1}{k}\sum_{i=1}^{k}\boldsymbol{E}[A_{i}]\nonumber \\
= & \frac{1}{k}\sum_{i=1}^{k}\left[\frac{\frac{1}{2}\sum_{j=1}^{k}\lambda_{j}\boldsymbol{E}[P_{j}^{2}]}{(1-\sum_{j=1}^{i}\rho_{j})(1-\sum_{j=1}^{i-1}\rho_{j})}+\frac{1}{\lambda_{i}}+\frac{1}{\mu_{i}}\right],\label{eq:7}
\end{align}
changing priority orders only affects the denominator of the first
term in Equation (\ref{eq:7}). So minimizing the average PAoI across
queues is equivalent to minimizing $\sum_{i=1}^{k}\frac{1}{(1-\sum_{j=1}^{i}\rho_{j})(1-\sum_{j=1}^{i-1}\rho_{j})}$.
If $(\rho_{1},\rho_{2},...,\rho_{k})$ is the optimal priority order
with $\rho_{i}\geq\rho_{i+m}$, by switching the order of $\rho_{i}$
and $\rho_{i+m}$ we have a new priority order $(\rho_{1}^{*},\rho_{2}^{*},...,\rho_{k}^{*})$
with $\rho_{i}^{*}=\rho_{i+m}$, $\rho_{i+m}^{*}=\rho_{i}$ and $\rho_{j}^{*}=\rho_{j}$
for $j\in\{1,...,k\}\backslash\{i,i+m\}$. Then we have $\sum_{l=1}^{j}\rho_{l}=\sum_{l=1}^{j}\rho_{l}^{*}$
for $j<i$, $\sum_{l=1}^{j}\rho_{l}\geq\sum_{l=1}^{j}\rho_{l}^{*}$
for $i\leq j<i+m$, and $\sum_{l=1}^{j}\rho_{l}=\sum_{l=1}^{j}\rho_{l}^{*}$
for $j\geq i+m$. Thus we have 
\begin{eqnarray*}
 &  & \sum_{l=1}^{k}\frac{1}{(1-\sum_{j=1}^{l}\rho_{j})(1-\sum_{j=1}^{l-1}\rho_{j})}\\
 &  & -\sum_{l=1}^{k}\frac{1}{(1-\sum_{j=1}^{l}\rho_{j}^{*})(1-\sum_{j=1}^{l-1}\rho_{j}^{*})}\\
 & = & \sum_{l=i}^{i+m}\bigg[\frac{1}{(1-\sum_{j=1}^{l}\rho_{j})(1-\sum_{j=1}^{l-1}\rho_{j})}\\
 &  & -\frac{1}{(1-\sum_{j=1}^{l}\rho_{j}^{*})(1-\sum_{j=1}^{l-1}\rho_{j}^{*})}\bigg]\\
 & \geq & 0,
\end{eqnarray*}

which contradicts to the assumption that $(\rho_{1},\rho_{2},...,\rho_{k})$
is the optimal priority order. Therefore we prove the theorem.
\end{IEEEproof}
From Theorem \ref{thm:Low-priority} we see that for M/G/1 type queues
with FCFS discipline, it is always better to give high priorities
to queues with small traffic intensities when the objective is to
minimize the average PAoI across all queues. In fact, this observation
is also true for M/G/1/1+$\sum1^{*}$ queues that we discussed in
Section \ref{sec:Buffer-Size-One}. The intuitive reason for this
is if we do the opposite, i.e., allowing high traffic queues to have
high priority, the server would be busy serving high traffic intensity
queues and barely have chance to serve low priority queues. Packets
from low priority queues therefore would suffer a large waiting time.
We will show this numerically in Section \ref{sec:Numerical-Study}. 

\subsection{Exact Analysis for M/G/1 Type Queues with LCFS\label{subsec:M/G/1-LCFS}}

In this subsection, we derive the PAoI for priority queues with LCFS
within each queue. The server chooses the highest priority queue when
it becomes available, and from each queue it serves the last arrived
packet first. We now introduce a new service scheme which has the
same PAoI as LCFS. We first divide each queue into two virtual parts:
initial buffer and main queue. The initial buffer can hold only one
packet. Whenever a new arrival occurs, we send this new arrival into
the initial buffer, and move the stale packet (if there is one) from
the initial buffer into the main queue. When the server starts serving
a queue, it serves the packet from initial buffer first if it is not
empty, then serves packets from main queue in an arbitrary order with
the understanding that service times are i.i.d.. A demonstrative graph
of the idea of initial buffer and main queue is shown in Figure \ref{fig:Initial-Buffer-and-1}.
For queue 1 in Figure \ref{fig:Initial-Buffer-and-1}, the initial
buffer is empty since its most recent arrival has been processed.
The initial buffer of queue 2 is full. When the server switches to
queue 2, the packet in initial buffer 2 will be processed first.

This service scheme has the same PAoI as LCFS since under both schemes,
only the freshest packets result in age peaks. We can thus characterize
the PAoI of each queue by focusing on the initial buffer status. The
state of the initial buffer is either 0 or 1, and each period length
of state $0$ (when the initial buffer is empty), is equal to the
inter-arrival time $I_{i}$ between packets. We abuse our notation
by letting the time period of state $1$ (when the initial buffer
is full) be $W_{i}$, which we call the busy period of the initial
buffer. Using the analysis in Section \ref{subsec:Exact-Analysis-for},
the PAoI for queue $i$ is given as $\boldsymbol{E}[A_{i}]=\boldsymbol{E}[P_{i}]+\boldsymbol{E}[W_{i}]+\boldsymbol{E}[I_{i}]+\boldsymbol{E}[G_{i}],$
where $\boldsymbol{E}[P_{i}]=\frac{1}{\mu_{i}}$ is the expected service
time, $\boldsymbol{E}[W_{i}]$ is the expected length of period when
initial buffer is full, $\boldsymbol{E}[I_{i}]=\frac{1}{\lambda_{i}}$
is the expected inter-arrival time, and $\boldsymbol{E}[G_{i}]=\boldsymbol{E}[\frac{1}{\lambda_{i}}(1-e^{-\lambda_{i}W_{i}})]$
is the expected waiting time of the most recently arrived packet before
the buffer becomes empty, which is given in Lemma \ref{lem:1}. 

\begin{figure}[h]
\begin{centering}
\includegraphics[scale=0.35]{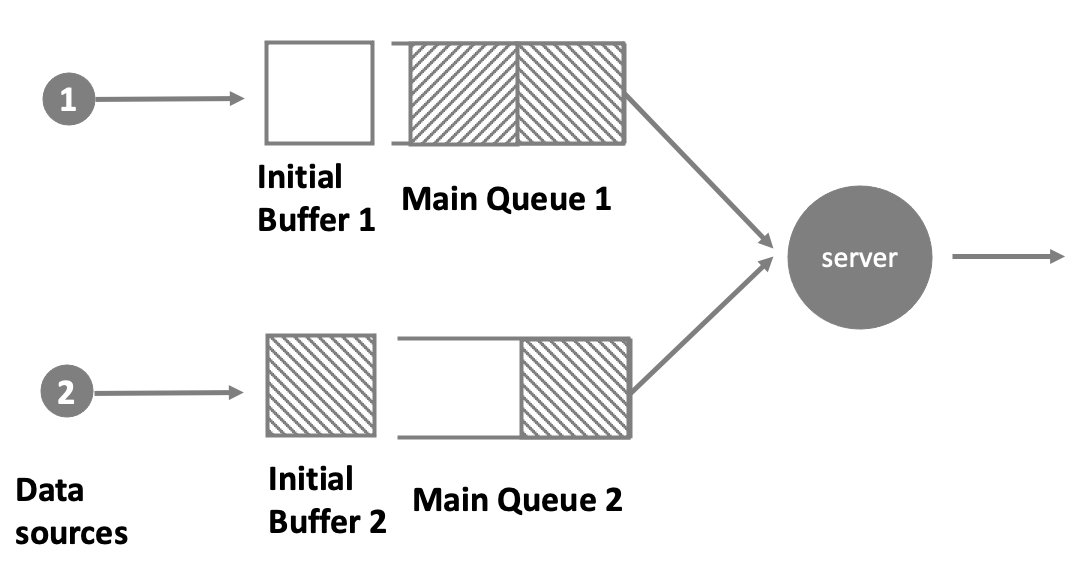}\caption{Initial Buffer and Main Queue for LCFS\label{fig:Initial-Buffer-and-1}}
\par\end{centering}
\end{figure}

Similar to what we did in Section \ref{subsec:Bounds-and-Approximation},
since in the system of M/G/1 type queues with LCFS, there is one initial
buffer in each queue, we can thus derive the PAoI for queue $i$ as:

\begin{eqnarray}
\boldsymbol{E}[A_{i}] & = & \boldsymbol{E}[P_{i}]+\boldsymbol{E}[W_{i}]+\boldsymbol{E}[I_{i}]+\boldsymbol{E}[G_{i}]\nonumber \\
 & = & \frac{1}{\mu_{i}}+\frac{p_{i}}{\lambda_{i}(1-p_{i})}+\frac{2}{\lambda_{i}}-\frac{1}{\lambda_{i}}\boldsymbol{E}[e^{-\lambda_{i}W_{i}}],\label{eq:9}
\end{eqnarray}

for all $i\in\{1,...,k\},$ where we abuse our notation here by letting
$p_{i}$ be the steady state probability that the initial buffer is
full (notice that in Section \ref{sec:Buffer-Size-One} we used it
as the probability that the buffer is full). Now we introduce the
method of finding $p_{i}$'s by providing Lemma \ref{lem:Single-Queue }
for the case of $k=1$ first.
\begin{lem}
\label{lem:Single-Queue }For the M/G/1 queue with LCFS of $k=1$,
the probability that the initial buffer is full is given by $p_{1}=\frac{\lambda_{1}}{\mu_{1}}-1+\psi_{1}(\lambda_{1})$,
where $\psi_{1}(u)$ is the LST of the service time.
\end{lem}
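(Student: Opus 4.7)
The plan is to express $p_{1}$ as a product of two easily handled quantities and then evaluate a single Laplace integral. The key structural observation is that, under this scheme, the initial buffer is empty at the instant any new service begins: if the initial buffer had been full when the server became free, the waiting packet is pulled immediately into processing; otherwise the server takes the next packet from the main queue (or becomes idle), and in neither case does anything remain in the initial buffer. Because the only way for a packet to occupy the initial buffer is through a fresh Poisson arrival, during any service the initial buffer is full at time $t$ if and only if at least one arrival has occurred since that service began.

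Using this observation I would decompose
\[
p_{1} = P(\text{server busy}) \cdot P(\text{at least one Poisson arrival during the elapsed service} \mid \text{server busy}).
\]
The first factor is the utilization $\rho_{1} = \lambda_{1}/\mu_{1}$. For the second, let $V$ denote the age (elapsed duration) of the in-progress service, conditional on the server being busy. By the standard equilibrium-age/inspection argument applied to the iid sequence of service times with distribution $F_{1}$ and mean $1/\mu_{1}$, the conditional density of $V$ is $f_{V}(v) = \mu_{1}(1 - F_{1}(v))$. Given $V=v$, the probability of no arrival during the elapsed service is $e^{-\lambda_{1}v}$, so
\[
p_{1} = \rho_{1}\bigl(1 - E[e^{-\lambda_{1}V}]\bigr).
\]

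The final step is a short transform computation: using the elementary identity $\int_{0}^{\infty} e^{-sv}(1 - F(v))\,dv = (1 - \psi(s))/s$, I would obtain $E[e^{-\lambda_{1}V}] = (\mu_{1}/\lambda_{1})(1 - \psi_{1}(\lambda_{1}))$, and substituting collapses the expression to $\lambda_{1}/\mu_{1} - 1 + \psi_{1}(\lambda_{1})$, which is the stated form. As a sanity check, in the exponential case $\psi_{1}(\lambda_{1}) = \mu_{1}/(\mu_{1}+\lambda_{1})$ this reduces to $\lambda_{1}^{2}/[\mu_{1}(\mu_{1}+\lambda_{1})]$, which agrees with a direct M/M/1 calculation (server busy with probability $\rho_{1}$, memoryless elapsed service of rate $\mu_{1}$, arrival within it with probability $\lambda_{1}/(\mu_{1}+\lambda_{1})$).

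The step I expect to take the most care is justifying the equilibrium density of $V$. The server does not run continuously, so the usual stationary renewal-theory derivation does not apply verbatim. The cleanest remedy is a renewal-reward argument restricted to the busy portion of the timeline: over a long horizon of length $T$, the server accumulates $\rho_{1}T$ busy time and approximately $\lambda_{1}T$ completed services whose durations are iid copies of $F_{1}$, and dividing the time-integral of any bounded $g(V)$ over busy time by $\rho_{1}T$ gives $\mu_{1}\int_{0}^{\infty} g(v)(1 - F_{1}(v))\,dv$ in the limit. Once that conditional density is in place, the remainder is routine. An equivalent route, avoiding direct appeal to the equilibrium density, is to apply Little's law to the initial buffer ($p_{1} = \lambda_{1} \cdot \bar{T}_{\mathrm{init}}$) together with PASTA and the identity $E[\min(R, E_{\lambda_{1}})] = (1 - E[e^{-\lambda_{1}R}])/\lambda_{1}$ for the residence time of an arriving packet, which yields the same answer.
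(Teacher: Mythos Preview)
Your proof is correct and reaches the stated formula, but the decomposition differs from the paper's. The paper conditions on the service \emph{length} $P_{1}=u$ and, via the order-statistics property of Poisson arrivals (Campbell's theorem), computes the expected amount of time within that service during which the initial buffer is occupied: given $m$ arrivals in $(0,u]$, the first one lands at expected position $u/(m+1)$, so the expected occupied time is $\sum_{m\ge 1}\frac{m}{m+1}\,u\,e^{-\lambda_{1}u}(\lambda_{1}u)^{m}/m! = u - \lambda_{1}^{-1} + \lambda_{1}^{-1}e^{-\lambda_{1}u}$. Unconditioning gives $\boldsymbol{E}[\hat W]=\mu_{1}^{-1}-\lambda_{1}^{-1}+\lambda_{1}^{-1}\psi_{1}(\lambda_{1})$, and since services complete at long-run rate $\lambda_{1}$, a renewal-reward step yields $p_{1}=\lambda_{1}\boldsymbol{E}[\hat W]$. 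Your route instead conditions on the server being busy and uses the equilibrium \emph{age} $V$ of the in-progress service, reducing everything to the single Laplace integral $\boldsymbol{E}[e^{-\lambda_{1}V}]$. The two are dual: the paper integrates over the full service duration and invokes order statistics of arrivals; you integrate over the elapsed age and invoke Poisson memorylessness directly. Your argument is a bit shorter and avoids Campbell's theorem, at the cost of the equilibrium-density justification you already flagged (and handled correctly via the restricted renewal-reward argument). One practical advantage of the paper's route is that the per-service integrand $u-\lambda_{i}^{-1}+\lambda_{i}^{-1}e^{-\lambda_{i}u}$ is reused verbatim in the multi-class bound immediately following the lemma, where it is integrated against the aggregated service distributions $F_{ai}$ and $F_{bi}$.
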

\begin{IEEEproof}
From Figure \ref{fig:Initial-Buffer-and} we find that a busy period
of the initial buffer always occurs when the server is processing,
and ends when the service is complete. Suppose the processing time
of a packet is $P_{1}=u$, then by Campbell's Theorem \cite{kulkarni2016modeling},
the busy period of the initial buffer $\hat{W}$ during time $P_{1}=u$
is given by 

\begin{eqnarray*}
\boldsymbol{E}[\hat{W}|P_{1}=u] & = & \sum_{m=1}^{\infty}\frac{m}{m+1}ue^{-\lambda_{1}u}\frac{(\lambda_{1}u)^{m}}{m!}\\
 & = & u-\frac{1}{\lambda_{1}}+\frac{1}{\lambda_{1}}e^{-\lambda_{1}u}.
\end{eqnarray*}

By unconditioning on $P_{1}=u$ we have $\boldsymbol{E}[\hat{W}]=\int_{0}^{\infty}(u-\frac{1}{\lambda_{1}}+\frac{1}{\lambda_{1}}e^{-\lambda_{1}u})dF_{1}(u)=\frac{1}{\mu_{1}}-\frac{1}{\lambda_{1}}+\frac{\psi_{1}(\lambda_{1})}{\lambda_{1}}$.
However, it is important to note that this $\boldsymbol{E}[\hat{W}]$
is the expected busy time of initial buffer during the processing
time of a packet. To obtain $p_{i}$, we need the following argument.
Suppose $n(t)$ packets have been served during $(0,t]$. Thus the
amount of time that the initial buffer being full during $(0,t]$
is $n(t)\boldsymbol{E}[\hat{W}]$. If the queue is stable, we have
$n(t)$ converging to $\lambda_{1}t$ as $t\rightarrow\infty.$ Therefore
\begin{eqnarray*}
p_{1}=\lim_{t\rightarrow\infty}\frac{n(t)\boldsymbol{E}[\hat{W}]}{t}=\lambda_{1}\boldsymbol{E}[\hat{W}],
\end{eqnarray*}

which is the stationary probability that the initial buffer is full.
Note that $\frac{\lambda_{1}}{\mu_{1}}-1+\psi_{1}(\lambda_{1})$ is
a legitimate probability as it always lies within $[0,1]$. To show
this, from the fact that $\psi_{1}(\lambda_{1})=\int_{0}^{\infty}e^{-\lambda x}dF_{1}(x)\leq1$,
we have $\frac{\lambda_{1}}{\mu_{1}}-1+\psi_{1}(\lambda_{1})\leq\frac{\lambda_{1}}{\mu_{1}}<1$
from stability assumption. Since $\psi_{1}(\lambda_{1})=\int_{0}^{\infty}e^{-\lambda x}dF_{1}(x)\geq\int_{0}^{\infty}(1-\lambda_{1}x)dF_{1}(x)=1-\frac{\lambda_{1}}{\mu_{1}}$,
we have $\frac{\lambda_{1}}{\mu_{1}}-1+\psi_{1}(\lambda_{1})\geq0$.
Thus $\frac{\lambda_{1}}{\mu_{1}}-1+\psi_{1}(\lambda_{1})$ is a legitimate
probability.
\end{IEEEproof}
\begin{figure}[h]
\includegraphics[scale=0.3]{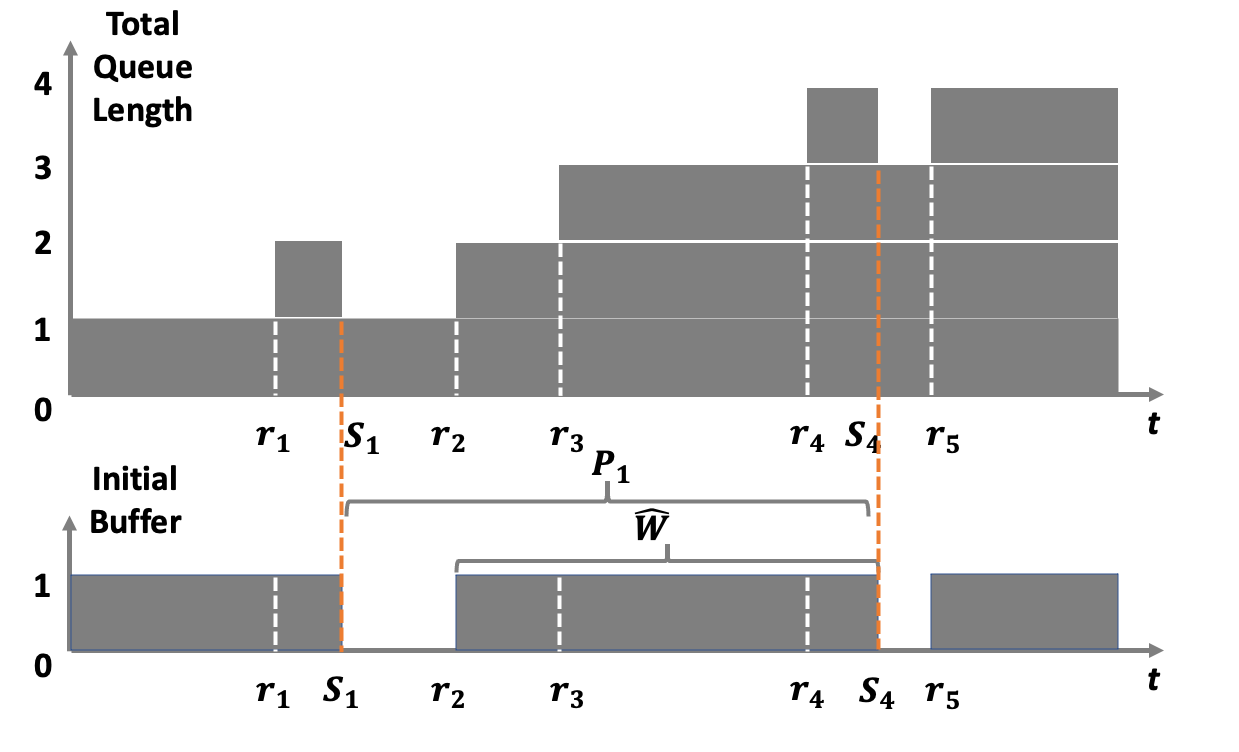}
\begin{centering}
\caption{Initial Buffer and Total Queue Length (Including Initial Buffer and
Main Queue) for LCFS \label{fig:Initial-Buffer-and}}
\par\end{centering}
\end{figure}

Now we discuss the case when $k\geq2$. Notice that for each packet
that is in service, if there is a new arrival from queue 1 occurring
during this service time, then the busy period for initial buffer
1 is from the arrival time of this new packet to the completion time
of the packet being processed. From Lemma \ref{lem:Single-Queue },
the busy period for initial buffer 1 if a type $i$ packet is being
processed when the busy period starts, is given by $\frac{1}{\mu_{i}}-\frac{1}{\lambda_{1}}+\frac{1}{\lambda_{1}}\psi_{i}(\lambda_{1}).$
Thus we have $p_{1}=\sum_{i=1}^{k}\lambda_{i}(\frac{1}{\mu_{i}}-\frac{1}{\lambda_{1}}+\frac{1}{\lambda_{1}}\psi_{i}(\lambda_{1}))$
.

To get the probability $p_{i}$ for queue $i\geq2$, we use the idea
introduced by Kella and Yechiali \cite{kella1988priorities}. We merge
the queues with priority higher than $i$ as one class $\mathcal{C}_{ai}$
and the other queues as another class $\mathcal{C}_{bi}$ by letting
$\lambda_{ai}=\sum_{j=1}^{i-1}\lambda_{j}$, $\lambda_{bi}=\sum_{j=i}^{k}\lambda_{j}$,
$\rho_{ai}=\sum_{j=1}^{i-1}\rho_{j}$ and $\rho_{bi}=\sum_{j=i}^{k}\rho_{j}$.
We also let $F_{ai}(x)=\sum_{j=1}^{i-1}\frac{\lambda_{j}}{\lambda_{ai}}F_{j}(x)$
be the service time distribution for packets from class $\mathcal{C}_{ai}$,
with mean $\boldsymbol{E}[P_{ai}]$ and $F_{bi}(x)=\sum_{j=i}^{k}\frac{\lambda_{j}}{\lambda_{bi}}F_{j}(x)$
be the service time distribution for packets from class $\mathcal{C}_{bi}$,
with mean $\boldsymbol{E}[P_{bi}].$ Notice that the busy period of
initial buffer $i$ ends only when there is no packet from class $\mathcal{C}_{ai}$.
We now classify the busy periods of server (the time period during
which the server is continuously serving packets) into two types.
One type of busy period $V_{ai}$ starts with processing a packet
from $\mathcal{C}_{ai}$, and ends when there is no packet of from
$\mathcal{C}_{ai}$ left in the system. The other type of busy period
$V_{bi}$ starts with processing a packet from $\mathcal{C}_{bi}$,
and also ends when there is no packet from $\mathcal{C}_{ai}$ left
in the system. If no packet from $\mathcal{C}_{ai}$ arrives during
processing the first packet in $V_{bi}$, then the length of $V_{bi}$
is just $P_{bi}$. If one packet from $\mathcal{C}_{ai}$ arrives
during processing the first packet in $V_{bi}$, then after the processing
the first packet, a busy period $V_{ai}$ is followed. Similar to
the analysis in \cite{kella1988priorities,conway2003theory} and what
we did in Section \ref{subsec:Exact-Analysis-for}, by conditioning
on service time of the first packet in a busy period, the LST of $V_{ai}$
and $V_{bi}$, denoted as $\tilde{V}_{ai}(s)$ and $\tilde{V}_{bi}(s)$,
are given as 
\begin{eqnarray}
\tilde{V}_{ai}(s) & = & \psi_{ai}(s+\lambda_{ai}-\lambda_{ai}\tilde{V}_{ai}(s))\label{eq:10}
\end{eqnarray}
 and 
\begin{eqnarray}
\tilde{V}_{bi}(s) & = & \psi_{bi}(s+\lambda_{ai}-\lambda_{ai}\tilde{V}_{ai}(s)),\label{eq:11}
\end{eqnarray}
where $\psi_{ai}(s)$ is the LST of $F_{ai}(x)$ and $\psi_{bi}(s)$
is the LST of $F_{bi}(x).$ By taking the derivative of $\tilde{V}_{ai}(s)$
and $\tilde{V}_{bi}(s)$ at $s=0$, the expected length of server's
busy periods can be given as
\begin{eqnarray*}
\boldsymbol{E}[V_{ai}] & = & \frac{\boldsymbol{E}[P_{ai}]}{1-\rho_{ai}},
\end{eqnarray*}
and 
\begin{eqnarray*}
\boldsymbol{E}[V_{bi}] & = & \frac{\boldsymbol{E}[P_{bi}]}{1-\rho_{ai}}.
\end{eqnarray*}

Note that busy period $V_{bi}$ always starts by serving a packet
from $\mathcal{C}_{bi}$, thus we know 

\begin{eqnarray*}
\boldsymbol{P}(system\ in\ V_{bi}) & = & \lambda_{bi}\frac{\boldsymbol{E}[P_{bi}]}{1-\rho_{ai}}.
\end{eqnarray*}

Since when the server is busy, it is either in busy period $V_{ai}$
or $V_{bi}$ , we have 
\begin{eqnarray*}
\boldsymbol{P}(system\ in\ V_{ai}) & = & \sum_{j=1}^{k}\rho_{j}-\lambda_{bi}\frac{\boldsymbol{E}[P_{bi}]}{1-\rho_{ai}}=\hat{\lambda}_{ai}\frac{\boldsymbol{E}[P_{ai}]}{1-\rho_{ai}},
\end{eqnarray*}
where 
\begin{eqnarray}
\hat{\lambda}_{ai} & = & \frac{\sum_{j=1}^{k}\rho_{j}-\lambda_{bi}\frac{\boldsymbol{E}[P_{bi}]}{1-\rho_{ai}}}{\frac{\boldsymbol{E}[P_{ai}]}{1-\rho_{ai}}}\label{eq:12}
\end{eqnarray}
is the ``arrival rate'' of busy period $V_{ai}.$ We now use $F_{ai}^{V}(x)$
and $F_{bi}^{V}(x)$ to denote the CDF of $V_{ai}$ and $V_{bi}$.
From Lemma \ref{lem:Single-Queue } we know that during busy period
$V_{ai}$, the time period of initial buffer being busy is given as
\begin{eqnarray}
\boldsymbol{E}[\hat{W}_{ai}] & = & \int_{0}^{\infty}(u-\frac{1}{\lambda_{i}}+\frac{1}{\lambda_{i}}e^{-\lambda_{i}u})dF_{ai}^{V}(u)\nonumber \\
 & = & \frac{\boldsymbol{E}[P_{ai}]}{1-\rho_{ai}}-\frac{1}{\lambda_{i}}+\frac{1}{\lambda_{i}}\tilde{V}_{ai}(\lambda_{i}).\label{eq:13}
\end{eqnarray}
Similarly, we have $\boldsymbol{E}[\hat{W}_{bi}]=\frac{\boldsymbol{E}[P_{bi}]}{1-\rho_{ai}}-\frac{1}{\lambda_{i}}+\frac{1}{\lambda_{i}}\tilde{V}_{bi}(\lambda_{i}).$
In many cases where $\tilde{V}_{ai}(\lambda_{i})$ and $\tilde{V}_{bi}(\lambda)$
cannot be solved analytically, numerical methods such as bisection
method or Newton's method (see \cite{cheney2012numerical}) can be
applied to find the roots numerically. From the same argument in Lemma
\ref{lem:Single-Queue }, we have 
\begin{eqnarray}
p_{i} & = & \hat{\lambda}_{ai}\boldsymbol{E}[\hat{W}_{ai}]+\lambda_{bi}\boldsymbol{E}[\hat{W}_{bi}].\label{eq:13_1}
\end{eqnarray}

Next we introduce the process of obtaining $\boldsymbol{E}[e^{-\lambda_{i}W_{i}}]$
for Equation (\ref{eq:9}). Notice that $\hat{W}_{ai}$ is the length
of initial buffer being full during period $V_{ai}$. Similar to the
argument in Lemma \ref{lem:Single-Queue }, we have 

\begin{eqnarray*}
 &  & \boldsymbol{E}[e^{-s\hat{W}_{ai}}|V_{ai}=t]\\
 & = & \int_{0}^{t}e^{-sx}\sum_{m=1}^{\infty}\frac{mx^{m-1}}{t^{m}}e^{-\lambda_{i}t}\frac{(\lambda_{i}t)^{m}}{m!}dx+e^{-\lambda_{i}t}\\
 & = & \int_{0}^{t}e^{-sx}e^{-\lambda_{i}t}\sum_{m=1}^{\infty}x^{m-1}\frac{(\lambda_{i})^{m}}{(m-1)!}dx+e^{-\lambda_{i}t}\\
 & = & e^{-\lambda_{i}t}\frac{\lambda_{i}}{\lambda_{i}-s}(e^{(\lambda_{i}-s)t}-1)+e^{-\lambda_{i}t}\\
 & = & \frac{\lambda_{i}}{\lambda_{i}-s}e^{-st}-\frac{s}{\lambda_{i}-s}e^{-\lambda_{i}t}.
\end{eqnarray*}

By unconditioning on \textbf{$V_{ai}=t$} we have 
\begin{eqnarray*}
\boldsymbol{E}[e^{-s\hat{W}_{ai}}] & = & \frac{\lambda_{i}}{\lambda_{i}-s}\tilde{V}_{ai}(s)-\frac{s}{\lambda_{i}-s}\tilde{V}_{ai}(\lambda_{i}),
\end{eqnarray*}
and 
\begin{eqnarray*}
\boldsymbol{E}[e^{-s\hat{W}_{bi}}] & = & \frac{\lambda_{i}}{\lambda_{i}-s}\tilde{V}_{bi}(s)-\frac{s}{\lambda_{i}-s}\tilde{V}_{bi}(\lambda_{i}).
\end{eqnarray*}
Using L'Hospital rule taking the limit $s\rightarrow\lambda_{i}$,
we have 
\begin{eqnarray}
\boldsymbol{E}[e^{-\lambda_{i}\hat{W}_{ai}}] & = & -\lambda_{i}\tilde{V}_{ai}^{'}(\lambda_{i})+\tilde{V}_{ai}(\lambda_{i}),\label{eq:14}
\end{eqnarray}
and 
\begin{eqnarray}
\boldsymbol{E}[e^{-\lambda_{i}\hat{W}_{bi}}] & = & -\lambda_{i}\tilde{V}_{bi}^{'}(\lambda_{i})+\tilde{V}_{bi}(\lambda_{i}).\label{eq:15}
\end{eqnarray}
From the formula of $\tilde{V}_{ai}(s)$ and $\tilde{V}_{bi}(s)$
given above we have 
\begin{eqnarray}
\tilde{V}_{ai}^{'}(\lambda_{i}) & = & \frac{\psi_{ai}^{'}(\lambda_{i}+\lambda_{ai}-\lambda_{ai}\tilde{V}_{ai}(\lambda_{i}))}{1+\lambda_{ai}\psi_{ai}^{'}(\lambda_{i}+\lambda_{ai}-\lambda_{ai}\tilde{V}_{ai}(\lambda_{i}))},\label{eq:16}
\end{eqnarray}
and
\begin{eqnarray}
 &  & \tilde{V}_{bi}^{'}(\lambda_{i})\nonumber \\
 & = & \psi_{bi}^{'}(\lambda_{i}+\lambda_{ai}-\lambda_{ai}\tilde{V}_{ai}(\lambda_{i}))(1-\lambda_{ai}\tilde{V}_{ai}^{'}(\lambda_{i})).\label{eq:17}
\end{eqnarray}

Notice that $W_{i}$ is the busy period of the initial buffer for
each age peak, and only ($1-p_{i}$) portion of arrivals in queue
$i$ incur age peaks, so the ``arrival rate'' for $W_{i}$ is $\lambda_{i}(1-p_{i})$.
Since $\hat{W}_{i}$ is the busy period of initial buffer during each
$V_{ai}$ and $V_{bi}$ with arrival rate $\hat{\lambda}_{ai}$ and
$\lambda_{bi}$ respectively, from the fact that $\lambda_{bi}=\sum_{j=i}^{k}\lambda_{j}\geq\lambda_{i}(1-p_{i})$,
we have the following relationship

\begin{eqnarray*}
 &  & \hat{\lambda}_{ai}\boldsymbol{E}[e^{-\lambda_{i}\hat{W}_{ai}}]+\lambda_{bi}\boldsymbol{E}[e^{-\lambda_{i}\hat{W}_{bi}}]\\
 & = & \lambda_{i}(1-p_{i})\boldsymbol{E}[e^{-\lambda_{i}W_{i}}]+(\hat{\lambda}_{ai}+\lambda_{bi}-\lambda_{i}(1-p_{i})).
\end{eqnarray*}

Therefore, 
\begin{eqnarray}
\boldsymbol{E}[e^{-\lambda_{i}W_{i}}] & = & \frac{1}{\lambda_{i}(1-p_{i})}\bigg[\hat{\lambda}_{ai}\boldsymbol{E}[e^{-\lambda_{i}\hat{W}_{ai}}]+\lambda_{bi}\boldsymbol{E}[e^{-\lambda_{i}\hat{W}_{bi}}]\nonumber \\
 &  & -(\hat{\lambda}_{ai}+\lambda_{bi}-\lambda_{i}(1-p_{i}))\bigg].\label{eq:18}
\end{eqnarray}

A closed-form formula of PAoI in M/G/1 type queues with LCFS is then
given in the following theorem.
\begin{thm}
The PAoI of queue $i$ in M/G/1 system with LCFS is given by $\boldsymbol{E}[A_{i}]=\frac{1}{\mu_{i}}+\frac{1}{\lambda_{i}(1-p_{i})}-\frac{1}{\lambda_{i}^{2}(1-p_{i})}\bigg[\hat{\lambda}_{ai}(-\lambda_{i}\tilde{V}_{ai}^{'}(\lambda_{i})+\tilde{V}_{ai}(\lambda_{i})-1)+\lambda_{bi}(-\lambda_{i}\tilde{V}_{bi}^{'}(\lambda_{i})+\tilde{V}_{bi}(\lambda_{i})-1)\bigg],$
where $\tilde{V}_{ai}(s)=\psi_{ai}(s+\lambda_{ai}-\lambda_{ai}\tilde{V}_{ai}(s))$,
$\tilde{V}_{bi}(s)=\psi_{bi}(s+\lambda_{ai}-\lambda_{ai}\tilde{V}_{ai}(s))$,
$p_{i}=\hat{\lambda}_{ai}\left[\frac{\boldsymbol{E}[P_{ai}]}{1-\rho_{ai}}-\frac{1}{\lambda_{i}}+\frac{1}{\lambda_{i}}\tilde{V}_{ai}(\lambda_{i})\right]+\lambda_{bi}\left[\frac{\boldsymbol{E}[P_{bi}]}{1-\rho_{ai}}-\frac{1}{\lambda_{i}}+\frac{1}{\lambda_{i}}\tilde{V}_{bi}(\lambda_{i})\right]$,
and $\hat{\lambda}_{ai}=\frac{\sum_{j=1}^{k}\rho_{j}-\lambda_{bi}\frac{\boldsymbol{E}[P_{bi}]}{1-\rho_{ai}}}{\frac{\boldsymbol{E}[P_{ai}]}{1-\rho_{ai}}}$.
\end{thm}
\begin{IEEEproof}
To obtain the exact PAoI for queue $i$ of M/G/1 system with LCFS,
we first solve $\tilde{V}_{ai}(\lambda_{i})$ and $\tilde{V}_{bi}(\lambda_{i})$
using Equation (\ref{eq:10}) and (\ref{eq:11}). And all the required
components for computing $\boldsymbol{E}[e^{-\lambda_{i}W_{i}}]$
in Equation (\ref{eq:18}) can be obtained from Equations (\ref{eq:12})
- (\ref{eq:17}). After that the PAoI of queue $i$ can be obtained
using Equation (\ref{eq:9}).
\end{IEEEproof}
Note here that this approach of calculating PAoI of queue $i$ in
M/G/1 type system LCFS can still be applied even when the number of
queues $k$ is large. The numerical test of this approach will be
provided in Section \ref{sec:Numerical-Study}.

\subsection{Discussion of the Single Queue Systems\label{subsec:Discussion-of-the}}

The LST of PAoI for single queue with LCFS was provided in \cite{inoue2019general},
however its expression is quite involved (see Equation (99) in \cite{inoue2019general}).
Here we use our approach introduced in Subsection \ref{subsec:M/G/1-LCFS}
to provide a concise expression for PAoI of M/G/1/LCFS queue in the
following corollaries. Since we only have one queue here, for simplicity
of the notation, in this subsection we remove the subscript of each
variable. 
\begin{cor}
\label{cor:The-PAoI-of}The PAoI of M/G/1/LCFS is given by $\boldsymbol{E}[A]=\frac{1}{\mu}+\frac{1}{\lambda}+\frac{\frac{1}{\mu}+\psi^{'}(\lambda)}{2-\frac{\lambda}{\mu}-\psi(\lambda)}$,
where $\psi(\lambda)$ is the LST of service time.
\end{cor}
\begin{IEEEproof}
Since we know $\boldsymbol{E}[e^{-\lambda\hat{W}}]=-\lambda\psi^{'}(\lambda)+\psi(\lambda)$
from Equation (\ref{eq:15}) and $\boldsymbol{E}[e^{-\lambda W}]=\frac{\boldsymbol{E}[e^{-\lambda\hat{W}}]-p}{1-p}$
from Equation (\ref{eq:18}), also from Lemma \ref{lem:Single-Queue }
we know $p=\frac{\lambda}{\mu}-1+\psi(\lambda)$, we have our corollary
proved.
\end{IEEEproof}
\begin{cor}
The PAoI of M/M/1/LCFS is given by $\boldsymbol{E}[A]=\frac{1}{\lambda}+\frac{1}{\mu}+\frac{2\lambda\mu+\lambda^{2}}{(\lambda+\mu)(\mu^{2}+\lambda\mu-\lambda^{2})}$.
\end{cor}
It is shown in \cite{costa2016age} that M/M/1/$2^{*}$ system has
a smaller PAoI than M/M/1/2 and M/M/1/1 systems. Interestingly, we
find that the PAoI in M/G/1/$2^{*}$ system is no greater than that
in M/G/1/LCFS system, as shown in the following theorem.
\begin{thm}
\textcolor{black}{The PAoI in }\textup{\textcolor{black}{M/G/1/}}\textcolor{black}{$2^{*}$
system is no greater than that in M/G/1/LCFS system, for any $\rho=\frac{\lambda}{\mu}<1$.}
\end{thm}
\begin{IEEEproof}
It is shown in \cite{xu2020vacations} that the PAoI in M/G/1/$2^{*}$
system is given by $\boldsymbol{E}[A^{M/G/1/2^{*}}]=\frac{2}{\mu}+\frac{1}{\lambda}+\psi^{'}(\lambda).$
Using the result of Corollary \ref{cor:The-PAoI-of}, we have 
\begin{eqnarray*}
 &  & \boldsymbol{E}[A^{M/G/1/LCFS}]-\boldsymbol{E}[A^{M/G/1/2^{*}}]\\
 & = & \frac{1}{\mu}+\frac{1}{\lambda}+\frac{\frac{1}{\mu}+\psi^{'}(\lambda)}{2-\frac{\lambda}{\mu}-\psi(\lambda)}-\left(\frac{2}{\mu}+\frac{1}{\lambda}+\psi^{'}(\lambda)\right)\\
 & = & \left(\frac{1}{\mu}+\psi^{'}(\lambda)\right)\frac{\frac{\lambda}{\mu}+\psi(\lambda)-1}{2-\frac{\lambda}{\mu}-\psi(\lambda)}.
\end{eqnarray*}

We first have $\frac{1}{\mu}+\psi^{'}(\lambda)=\frac{1}{\mu}-\boldsymbol{E}[Pe^{-\lambda P}]=\boldsymbol{E}[P]-\boldsymbol{E}[Pe^{-\lambda P}]\geq0$.
Then from the facts that 
\begin{eqnarray*}
\frac{\lambda}{\mu}+\psi(\lambda) & = & \lambda\boldsymbol{E}[P]+\boldsymbol{E}[e^{-\lambda P}]\\
 & \geq & \lambda\boldsymbol{E}[P]+\boldsymbol{E}[1-\lambda P]=1
\end{eqnarray*}
 and $\frac{\lambda}{\mu}+\psi(\lambda)<2$, we have $\boldsymbol{E}[A^{M/G/1/LCFS}]\geq\boldsymbol{E}[A^{M/G/1/2^{*}}].$
\end{IEEEproof}
We now show that LCFS is actually not the optimal service discipline
for minimizing PAoI among all the non-preemptive work-conserving service
disciplines. To do this, we simply consider the case of exponential
service. The PAoI of FCFS in this case is given by $\boldsymbol{E}[A^{FCFS}]=\frac{1}{\lambda}+\frac{1}{\mu}+\frac{\lambda}{\mu(\mu-\lambda)}$.
We then have 
\begin{eqnarray*}
 &  & \boldsymbol{E}[A^{FCFS}]-\boldsymbol{E}[A^{LCFS}]\\
 & = & \frac{\lambda}{\mu(\mu-\lambda)}-\frac{2\lambda\mu+\lambda^{2}}{(\lambda+\mu)(\mu^{2}+\lambda\mu-\lambda^{2})}\\
 & = & \frac{-\lambda^{4}+\lambda^{3}\mu+3\lambda^{2}\mu^{2}-\lambda\mu^{3}}{\mu(\mu-\lambda)(\lambda+\mu)(\mu^{2}+\lambda\mu-\lambda^{2})}.
\end{eqnarray*}

If we let $\mu=1$ in the formula above, we have $\boldsymbol{E}[A^{FCFS}]-\boldsymbol{E}[A^{LCFS}]=\frac{-\lambda^{4}+\lambda^{3}+3\lambda^{2}-\lambda}{(1-\lambda)(\lambda+1)(1+\lambda-\lambda^{2})}$.
By numerically solving it we know that $\boldsymbol{E}[A^{FCFS}]\leq\boldsymbol{E}[A^{LCFS}]$
when $0\leq\lambda<\lambda^{*}=0.3111,$ and $\boldsymbol{E}[A^{FCFS}]>\boldsymbol{E}[A^{LCFS}]$
when $1>\lambda>\lambda^{*}$, which is also shown in Figure \ref{fig:FCFS-for-M/M/1}(a).
Similarly, in Section \ref{sec:Numerical-Study}, we will show that
LCFS is not the optimal service discipline for the multi-queue case
either, when the objective is to minimize PAoI of each queue. 

This result contradicts to the conclusion in \cite{bedewy2019age}
since it was correctly proven in Theorem 3 of \cite{bedewy2019age},
that for an arbitrary sample path (arrival and service times realizations),
LCFS always results in a smaller age process $\Delta(t)$ for $t\in[0,\infty)$
than FCFS. The authors of \cite{bedewy2019age} then concluded that
PAoI under LCFS is smaller than that under FCFS, which is not accurate.
The reason is that the average age peaks (up to time $T$) is not
a non-decreasing functional (that is defined in \cite{bedewy2019age})
of the age process $\Delta(t)$ during $t\in[0,T)$. Figure \ref{fig:PAoI-under-FCFS}
shows the age functions of FCFS and LCFS under the same sample path,
where LCFS always has a smaller age $\Delta(t)$ than FCFS, but FCFS
can have smaller average age peaks than LCFS. In fact, PAoI is an
expected value conditioning on those packets that cause age peaks
(they are also called informative packets \cite{talak2018can,chen2016age}).
Under FCFS, all the data packets arriving into the system are informative,
but the informative packets under LCFS are only a subset of all the
data packets. As shown in Figure \ref{fig:PAoI-under-FCFS}, the LCFS
age function may not have peaks at all the time instances the FCFS
age function has peaks. This explains why FCFS can sometimes have
smaller PAoI than LCFS. However, the advantage of applying FCFS may
only come from the definition of metric PAoI. We will discuss more
about the metric PAoI in Subsection \ref{subsec:Discussion-on-Metric}. 

\begin{figure}
\subfloat[PAoI under FCFS. The first 6 packets result in 6 age peaks]{\includegraphics[scale=0.26]{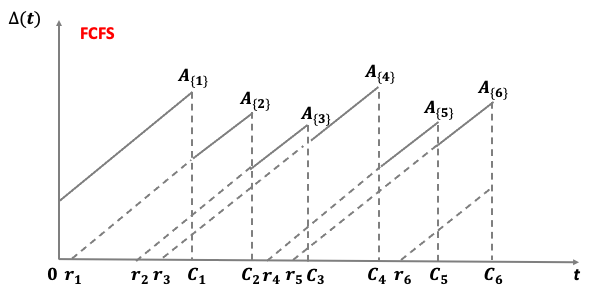}

}

\subfloat[PAoI under LCFS. The first 6 packets result in 4 age peaks ]{\includegraphics[scale=0.26]{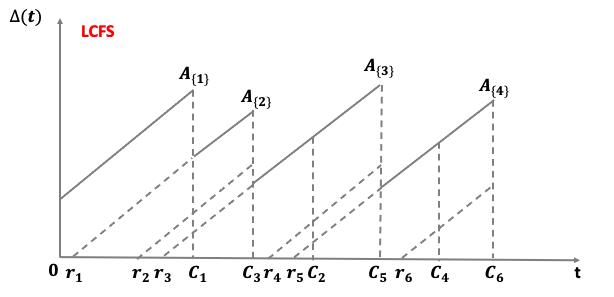}

}

\caption{PAoI under FCFS and LCFS \label{fig:PAoI-under-FCFS}}

\end{figure}

In fact, FCFS can also sometimes have a smaller PAoI than M/M/1/$2^{*}$
system, as shown in Figure \ref{fig:FCFS-for-M/M/1}(b). This is also
due the special property of the metric PAoI, since in M/M/1/$2^{*}$
system, not all the packets eventually result in age peaks. In Section
\ref{sec:Numerical-Study} we will show that for the multi-queue case,
having buffer with size one can sometimes result a larger PAoI than
FCFS.

\begin{figure}
\subfloat[FCFS versus LCFS]{\includegraphics[scale=0.5]{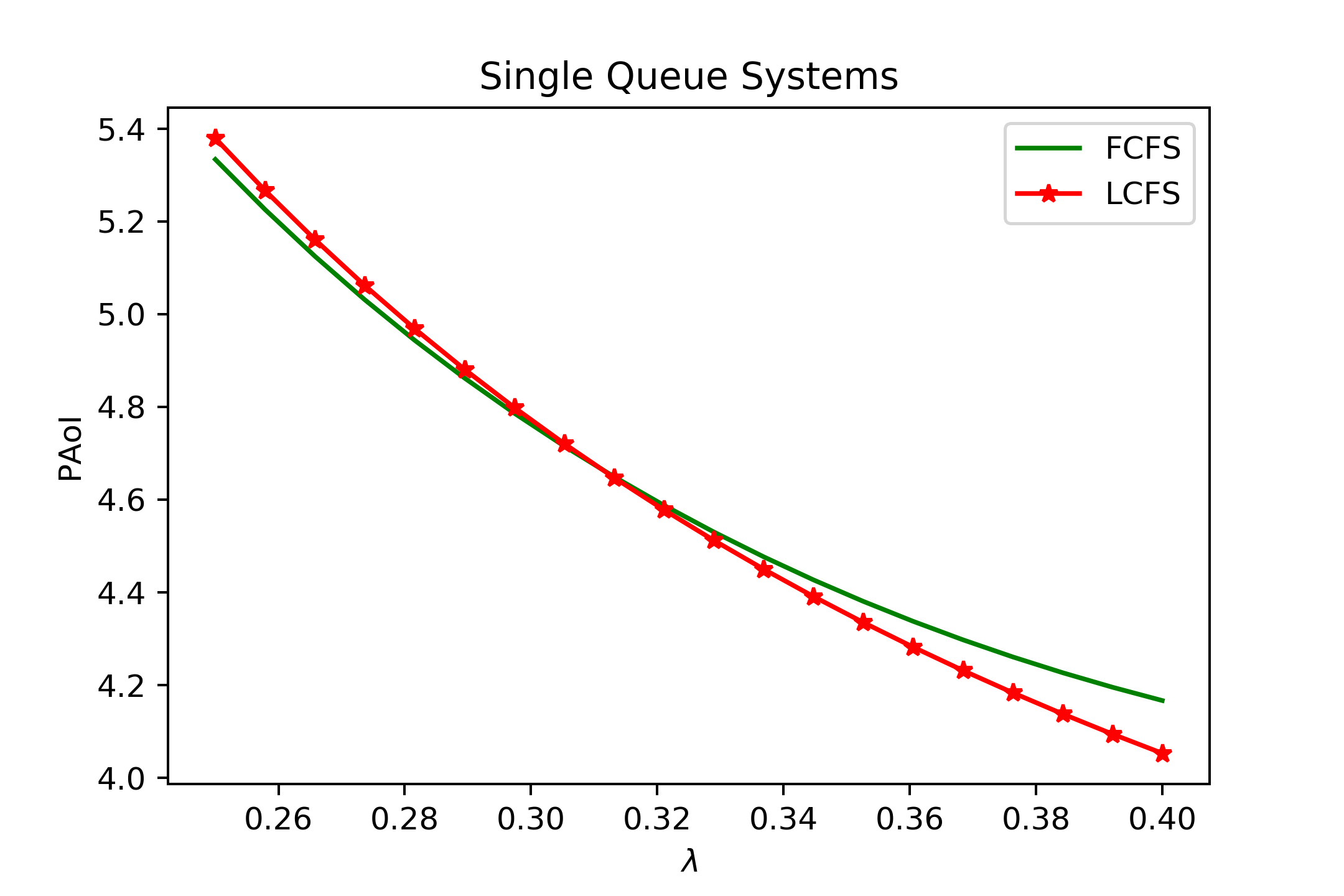}}

\subfloat[FCFS versus $M/M/1/2^{*}$]{\includegraphics[scale=0.5]{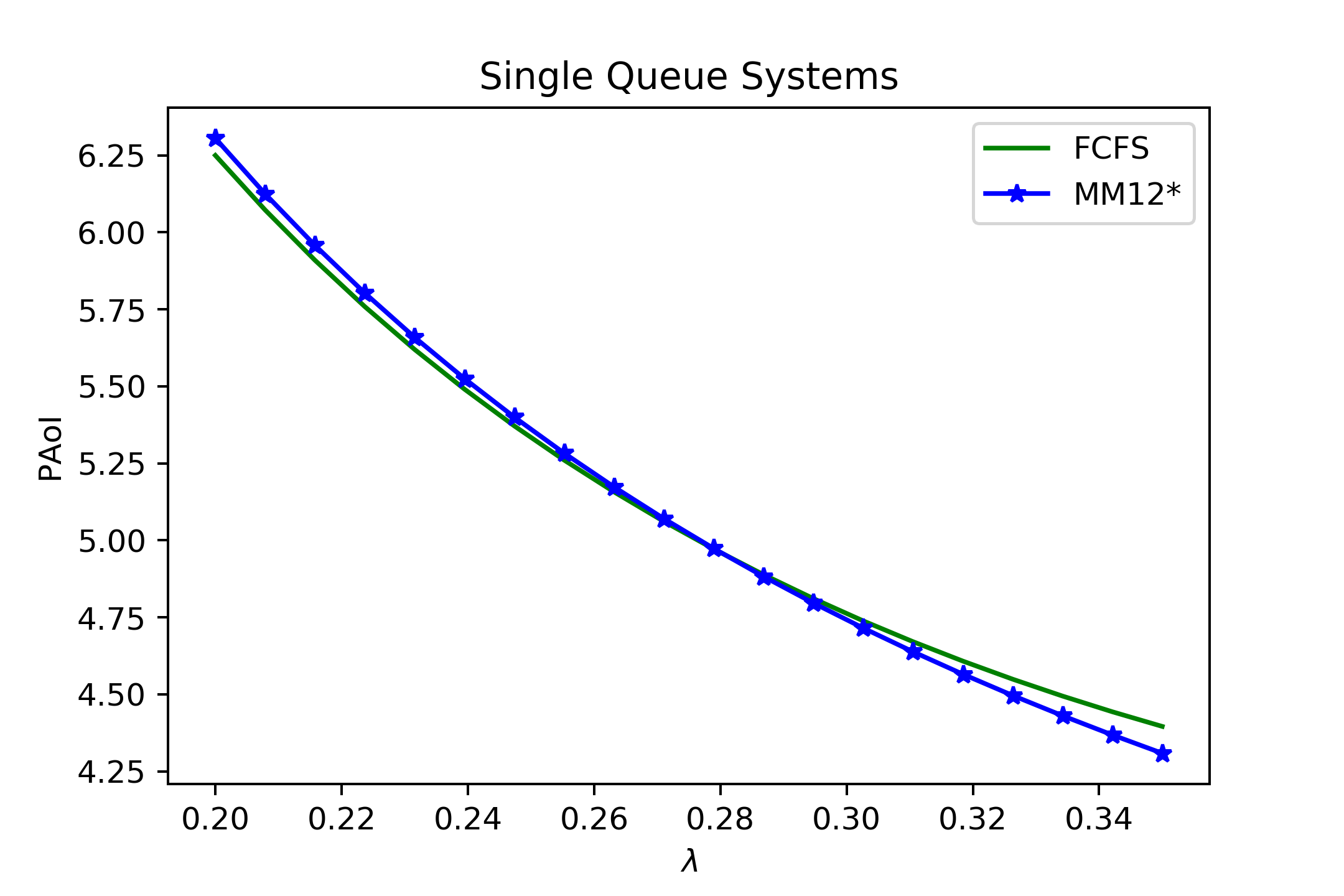}

}\caption{FCFS for M/M/1 Queue with $\mu=1$ \label{fig:FCFS-for-M/M/1}}
\end{figure}

\subsection{\label{subsec:Discussion-on-Metric}Discussion on the Metric PAoI }

We show in Subsection \ref{subsec:Discussion-of-the} that FCFS can
sometimes have a smaller PAoI than LCFS or the single buffer system.
However, any advantage of FCFS may only come from the special definition
of PAoI. FCFS may not have real advantage over LCFS, since LCFS always
results in a smaller age than FCFS for an arbitrary sample path all
the time (as correctly proven in \cite{bedewy2019age}). Considering
PAoI as the objective metric may lead us to the conclusion of LCFS
being non-optimal, which, to some extent, implies that PAoI may not
be a perfect metric to measure the information freshness.

In fact, the peak age (not its expected value PAoI) is worth of studying
in many senses. First of all, it can be used to characterize age violation
penalties (see \cite{costa2016age,ostman2019peak,devassy2019reliable}),
especially when the interest is in the worst case (see \cite{chiariotti2020peak}).
Secondly, peak age can be used to derive other age-related metrics
such as AoI (see \cite{xu2020vacations,inoue2019general}). The metric
PAoI however, as the expected value of peak age, is not able to fully
characterize the properties of peak age. The maximal peak age (see
\cite{he2016optimal}), peak age distribution (see \cite{devassy2019reliable,chiariotti2020peak}),
or the Laplace-Stieltjes Transform (LST) of peak age (see \cite{kosta2020cost,xu2020vacations,inoue2019general})
may also be interesting metrics to study.

On the other hand, PAoI is also a useful alternative metric to AoI,
especially when AoI is difficult to obtain analytically. In many systems,
PAoI has a simple form and some classic queueing results can be easily
applied to derive PAoI (see \cite{costa2016age,chen2016age,huang2015optimizing,kosta2019age,xu2020vacations,kosta2017age}).
In some cases, PAoI can also be used as a bound or an approximation
for AoI (see \cite{tripathi2019age,huang2015optimizing}). PAoI and
AoI also perform similarly when changing the traffic rate in many
systems (see \cite{costa2016age,xu2020vacations}). As a metric to
measure information freshness, a small PAoI means that the average
maximal age stays at a low level, which indicates that the system
receives frequent update in the long run. While our analysis reveals
the limitation of PAoI in some specific settings, more properties
of PAoI are worth exploring in the future. 

\section{Numerical Study \label{sec:Numerical-Study}}

In this section we will firstly use a numerical study to verify the
exact solutions for M/M/1+$\sum1^{*}$ that we provided in Section
\ref{subsec:Exact-Analysis-for}, and then test the bounds of M/G/1+$\sum1^{*}$
which we provided in Section \ref{subsec:Bounds-and-Approximation}.
We will then verify the exact solution for M/G/1 type queues with
LCFS. Besides, we will compare the performance of different service
disciplines, and develop our insights based on the numerical studies. 

We begin our discussion by comparing simulation results with exact
solutions for M/M/1+$\sum1^{*}$ system with $k=2$. The comparison
is done by changing one parameter from $\lambda_{1},\ \lambda_{2},\ \mu_{1}$
and $\mu_{2}$ while keeping the others fixed. The results are shown
in Figure \ref{fig:M/M/1-Queues-with}. From plots in Figure \ref{fig:M/M/1-Queues-with}
we can see that the simulation results match the exact solutions that
we provide in Section \ref{subsec:Exact-Analysis-for}, thus verifying
our results. Figure \ref{fig:M/M/1-Queues-with}(a) shows that when
we increase the arrival rate for the priority 1 queue, its PAoI is
drastically decreased, while PAoI for queue 2 increases constantly.
Figure \ref{fig:M/M/1-Queues-with}(b) shows that if we increase the
arrival rate of queue 2, its PAoI will decrease dramatically, while
PAoI of queue 1 increases slowly. Figure \ref{fig:M/M/1-Queues-with}(c)
and (d) show that when service rate increases, PAoI for both queues
are decreased. Interestingly, we find that when queue 1 has a low
service rate, PAoI for both queues will be large, while PAoI of queue
1 is not significantly affected by the service rate change of queue
2. We then test how the average PAoI across queues (i.e., $\frac{1}{k}\sum_{i=1}^{k}\boldsymbol{E}[A_{i}]$)
is affected by parameters, which we show in Figure \ref{fig:Average-PAoI-of}.
From Figure \ref{fig:Average-PAoI-of}(a) we see that by increasing
the service rate of either queue, the average PAoI across queues will
be reduced, and increasing the service rate of queue 1 makes this
reduction more significant. Figure \ref{fig:Average-PAoI-of}(b) shows
that by increasing the arrival rate of queue 2, the average PAoI across
queues is decreased. This is because the PAoI for queue 1 is not sensitive
to the arrival rate of queue 2, as we also show in Theorem \ref{thm:infty bound}.
However, when we increase the arrival rate of queue 1, the average
PAoI will decrease drastically at the beginning, and increase afterwards.
This is because the PAoI of queue 2 increases constantly when we increase
$\lambda_{1}$, which we also see from Figure \ref{fig:M/M/1-Queues-with}(a).
Note that although we only discuss the optimization problem of minimizing
average PAoI across queues here, since we have the exact solution
for PAoI, we could also formulate and solve optimization problems
such as minimizing average weighted PAoI (similar to \cite{talak2019optimizing})
and minimizing the maximum PAoI (similar to \cite{huang2015optimizing}).

\begin{figure}[tph]
\subfloat[$\mu_{1}=\frac{1}{10},\lambda_{2}=\frac{1}{10},\mu_{2}=\frac{1}{10}$]{\includegraphics[scale=0.5]{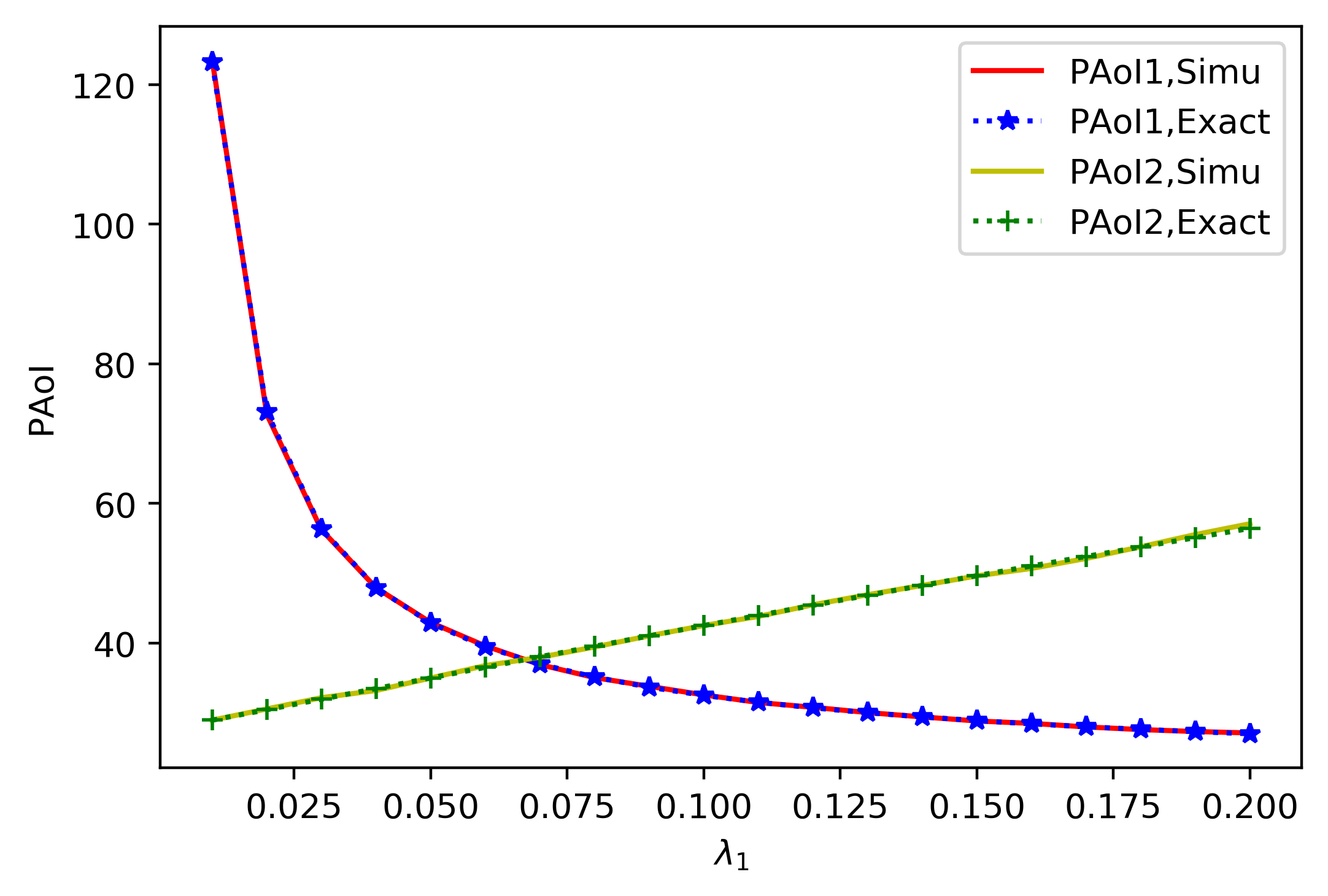}

}

\subfloat[$\lambda_{1}=\frac{1}{10},\mu_{1}=\frac{1}{10},\mu_{2}=\frac{1}{10}$]{\includegraphics[scale=0.5]{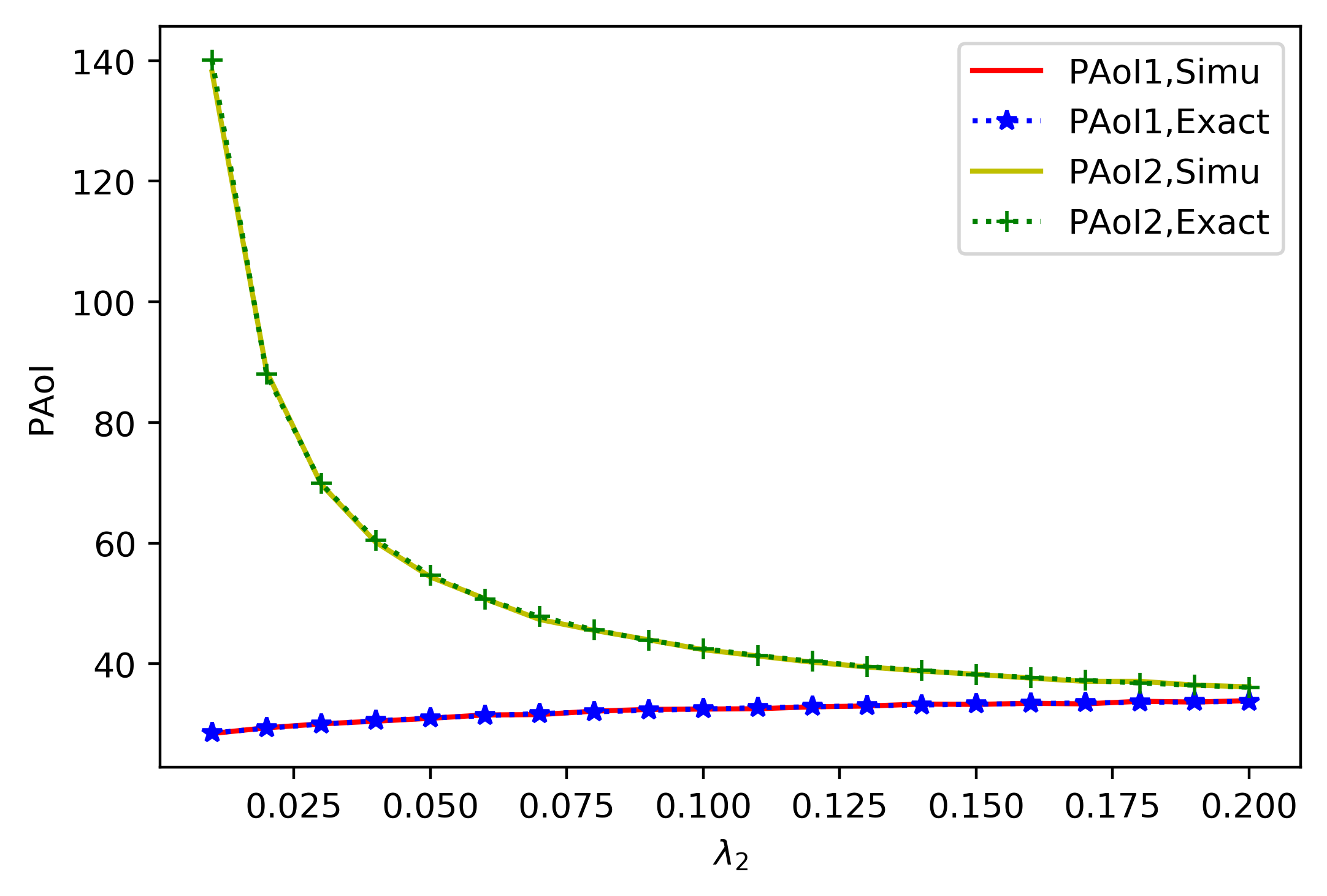}

}

\subfloat[$\lambda_{1}=\frac{1}{10},\lambda_{2}=\frac{1}{10},\mu_{2}=\frac{1}{10}$]{\includegraphics[scale=0.5]{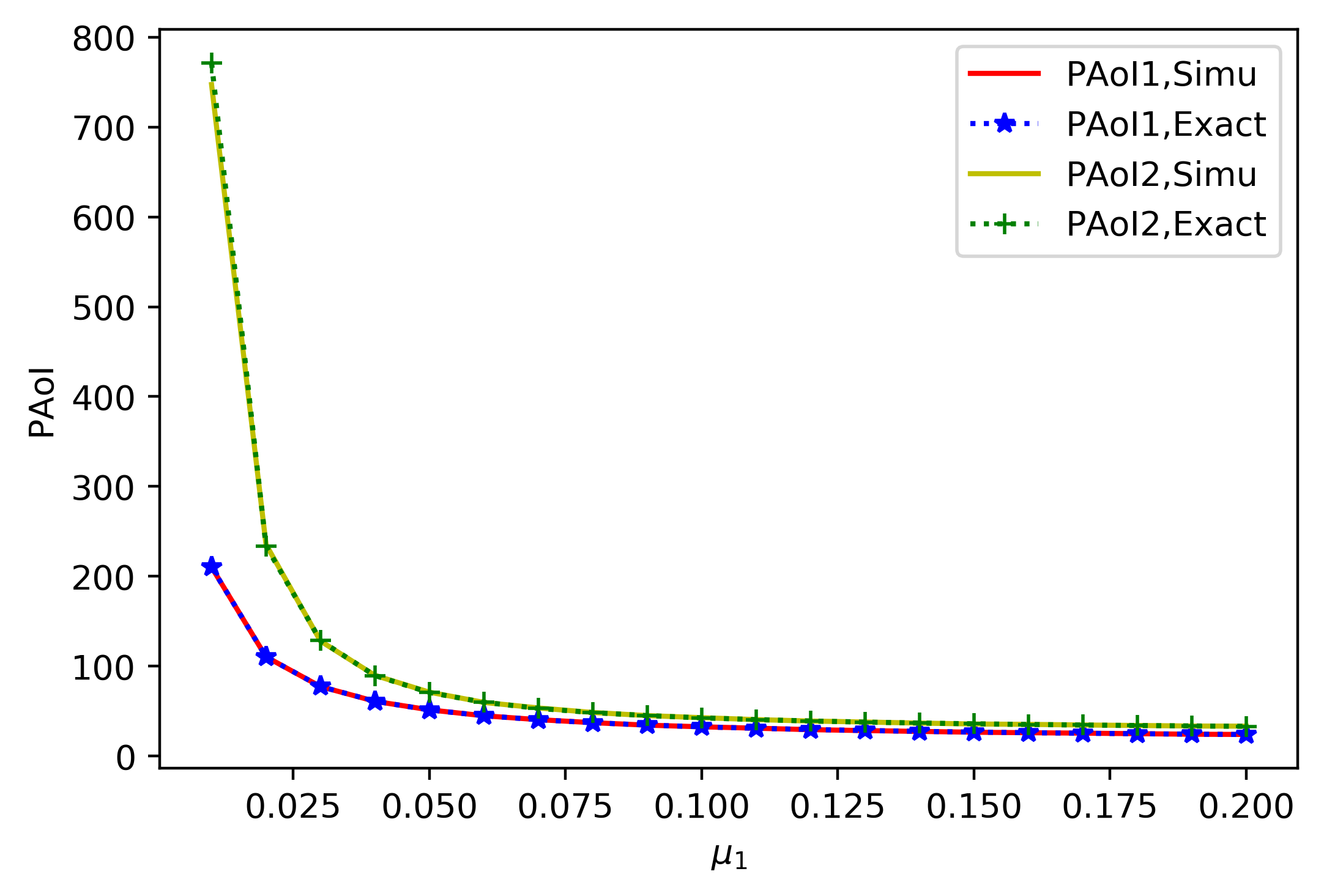}

}

\subfloat[$\lambda_{1}=\frac{1}{10},\mu_{1}=\frac{1}{10},\lambda_{2}=\frac{1}{10}$]{\includegraphics[scale=0.5]{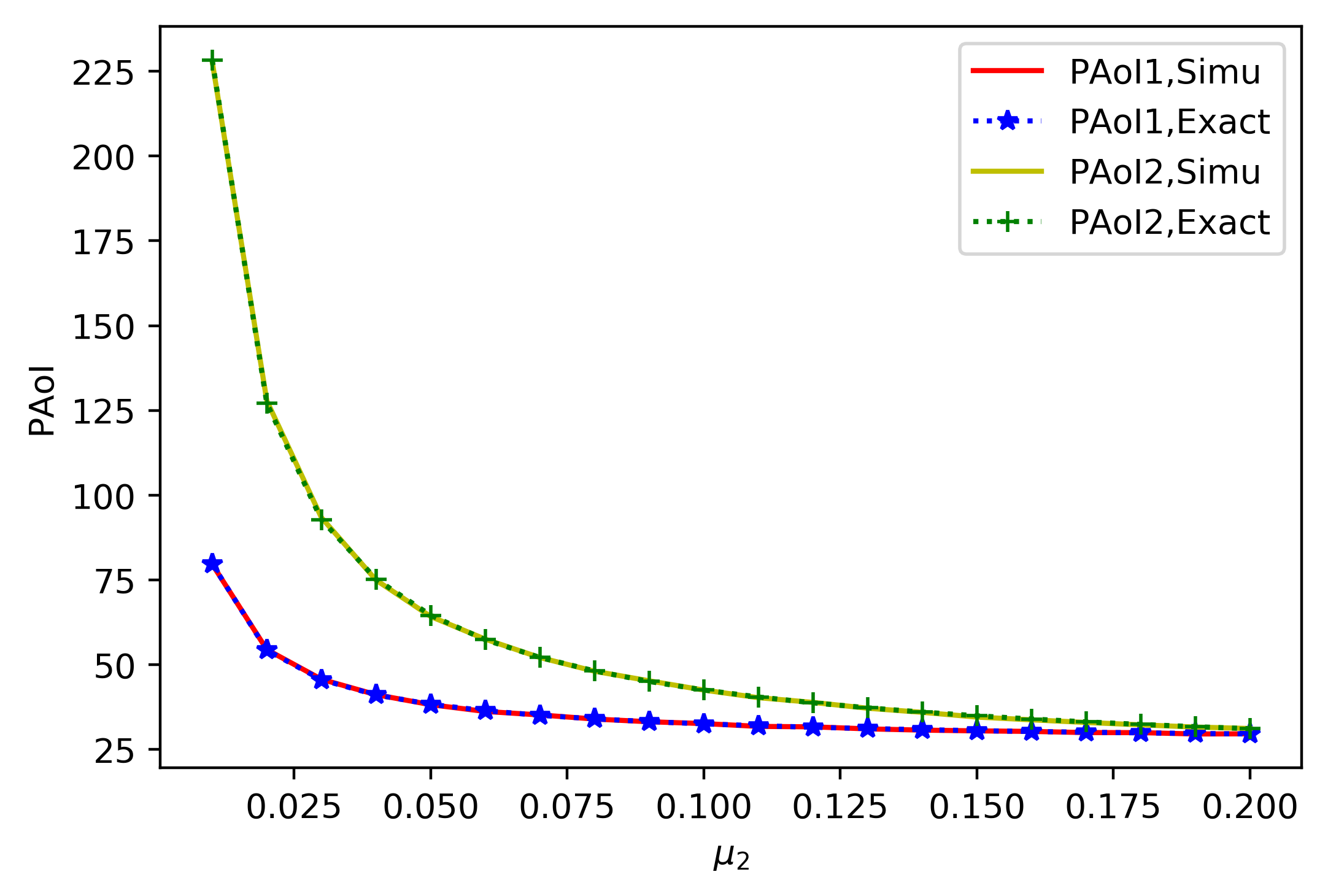}

}\caption{M/M/1+$\sum1^{*}$ Type Queues with Buffer Size One \label{fig:M/M/1-Queues-with}}
\end{figure}

\begin{figure}[h]
\subfloat[$\lambda_{1}=\frac{1}{10},\lambda_{2}=\frac{1}{10}$]{\includegraphics[scale=0.5]{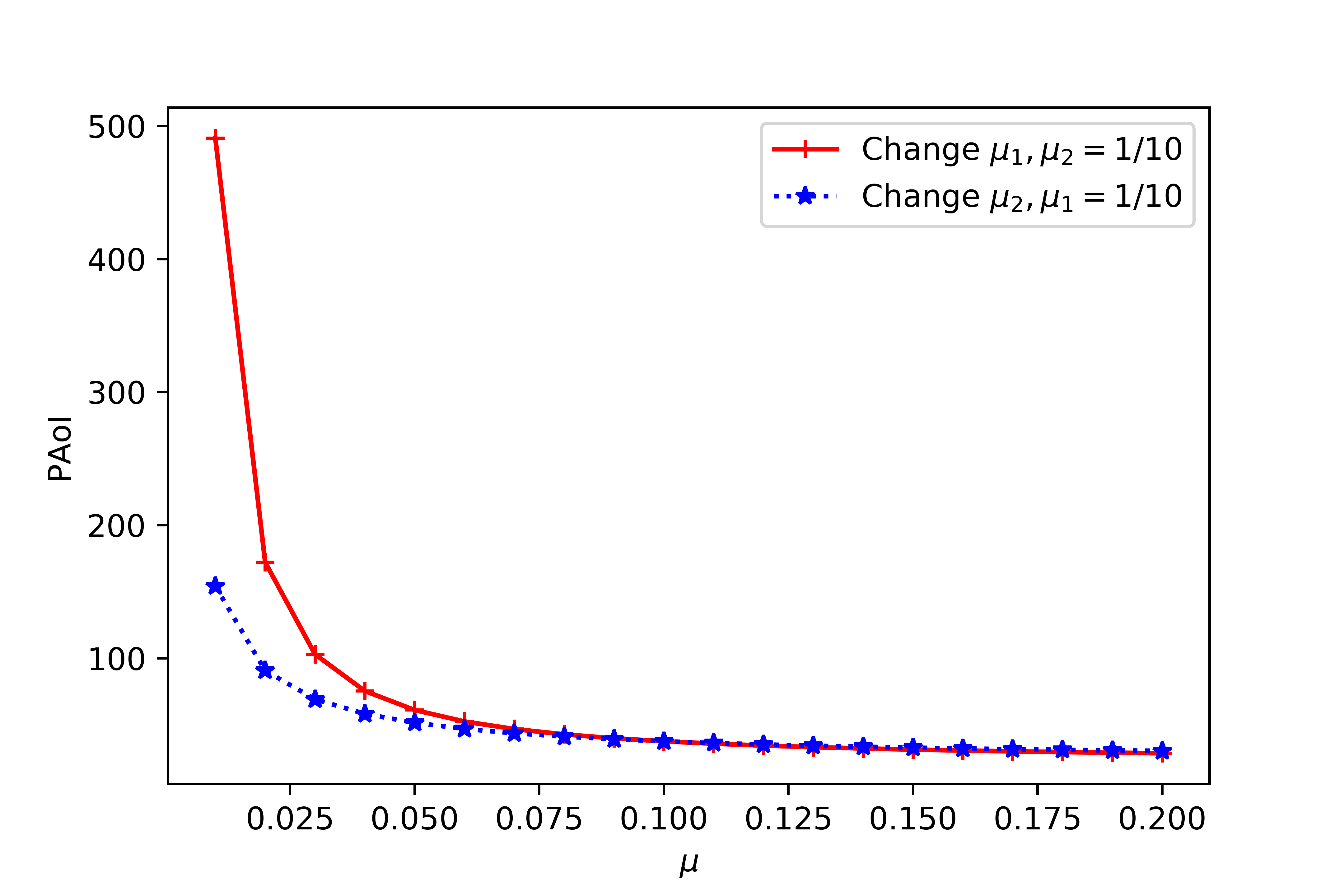}

}

\subfloat[$\mu_{1}=\frac{1}{10},\mu_{2}=\frac{1}{10}$]{\includegraphics[scale=0.5]{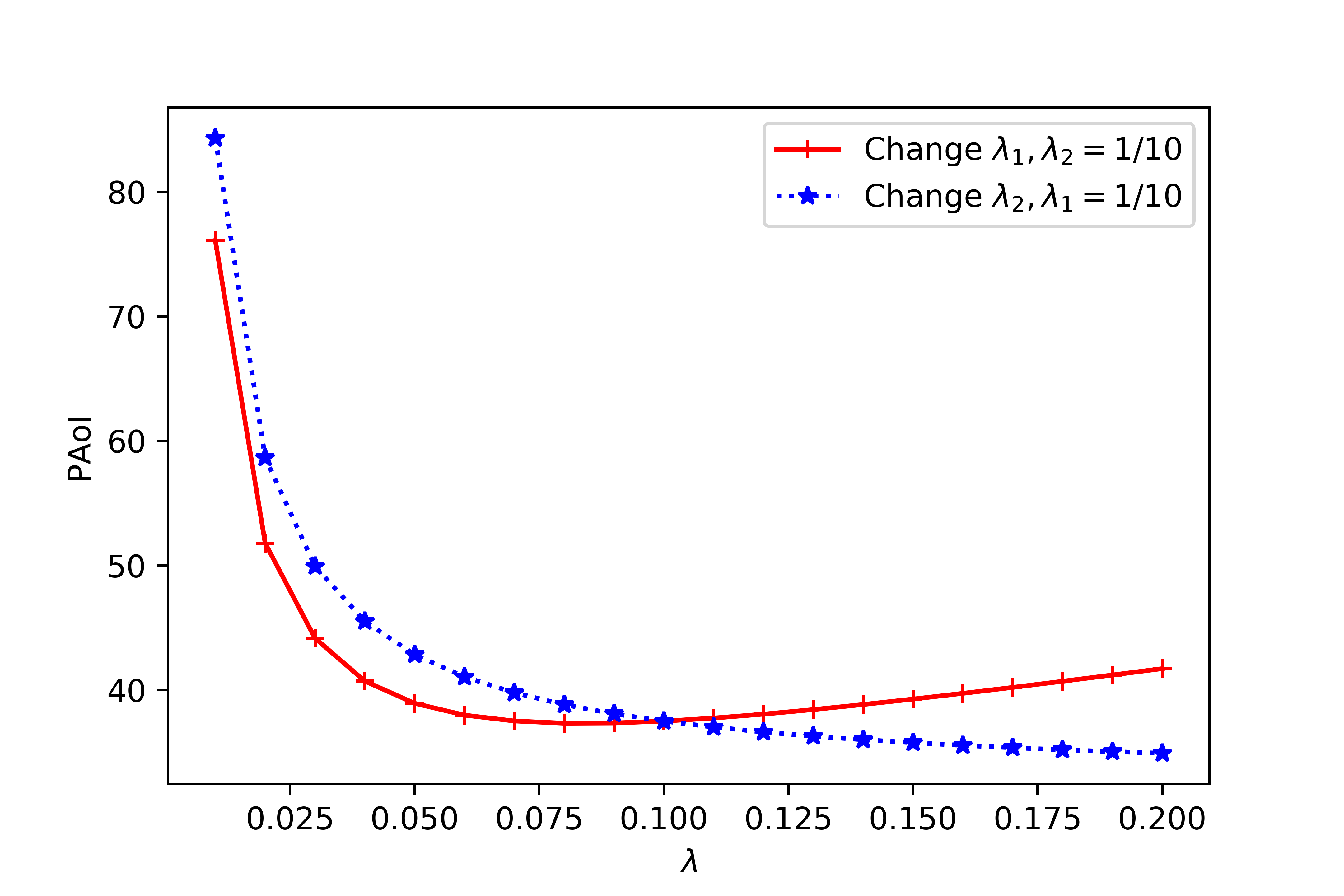}

}\caption{Average PAoI of M/M/1+$\sum1^{*}$ Queues with Buffer Size One\label{fig:Average-PAoI-of}}

\end{figure}

Next we consider queues with general service times. The bounds for
M/G/1+$\sum1^{*}$ type queues with buffer size one and $k=3$ are
shown in Figure \ref{fig:Bounds-for-M/G/1}, where the bounds are
provided by Inequality (\ref{eq:5}). We test the bounds by letting
service time follow exponential, uniform and gamma distributions.
Note that in Figure \ref{fig:Bounds-for-M/G/1} we provide the bounds
for exponential service case too, although we have the exact solution
for PAoI when service times are exponential. We find from Figure \ref{fig:Bounds-for-M/G/1}
that Inequality (\ref{eq:5}) serves as a decent approximation for
the actual PAoI since the bounds and simulation curves for all queues
are close. The three service distributions in Figure \ref{fig:Bounds-for-M/G/1}
have the same mean but the LST of these distributions vary from each
other. From our discussion in Section \ref{sec:Buffer-Size-One} we
find that the probability $p_{i}$ is related to the LST of service
time. Therefore, different service time distributions result in different
probability $p_{i},$ and further result in different PAoI. 

Then we consider queues with infinite buffer size. The exact PAoI
and simulation results for M/G/1 type queues with LCFS are shown in
Figure \ref{fig:Bounds-for-M/G/1-1}. We also test the cases for exponential,
uniform and gamma distributed service times. In Figure \ref{fig:Bounds-for-M/G/1-1}
we see that the exact PAoI that we provide in Subsection \ref{subsec:M/G/1-LCFS}
match the simulation results. We also find that in M/G/1 type queues
with LCFS, by increasing the arrival rate of queue 1, PAoI of queue
1 is significantly reduced, and PAoI for lower priority queues is
increased at the same time. We do not present the numerical test for
M/G/1 queues with FCFS here, as its analysis is exact and also straightforward. 

\begin{figure}[h]
\subfloat[$\lambda_{2}=\lambda_{3}=\frac{1}{30}$, $P_{1},P_{2},P_{3}\sim exp(\frac{1}{10})$]{\includegraphics[scale=0.5]{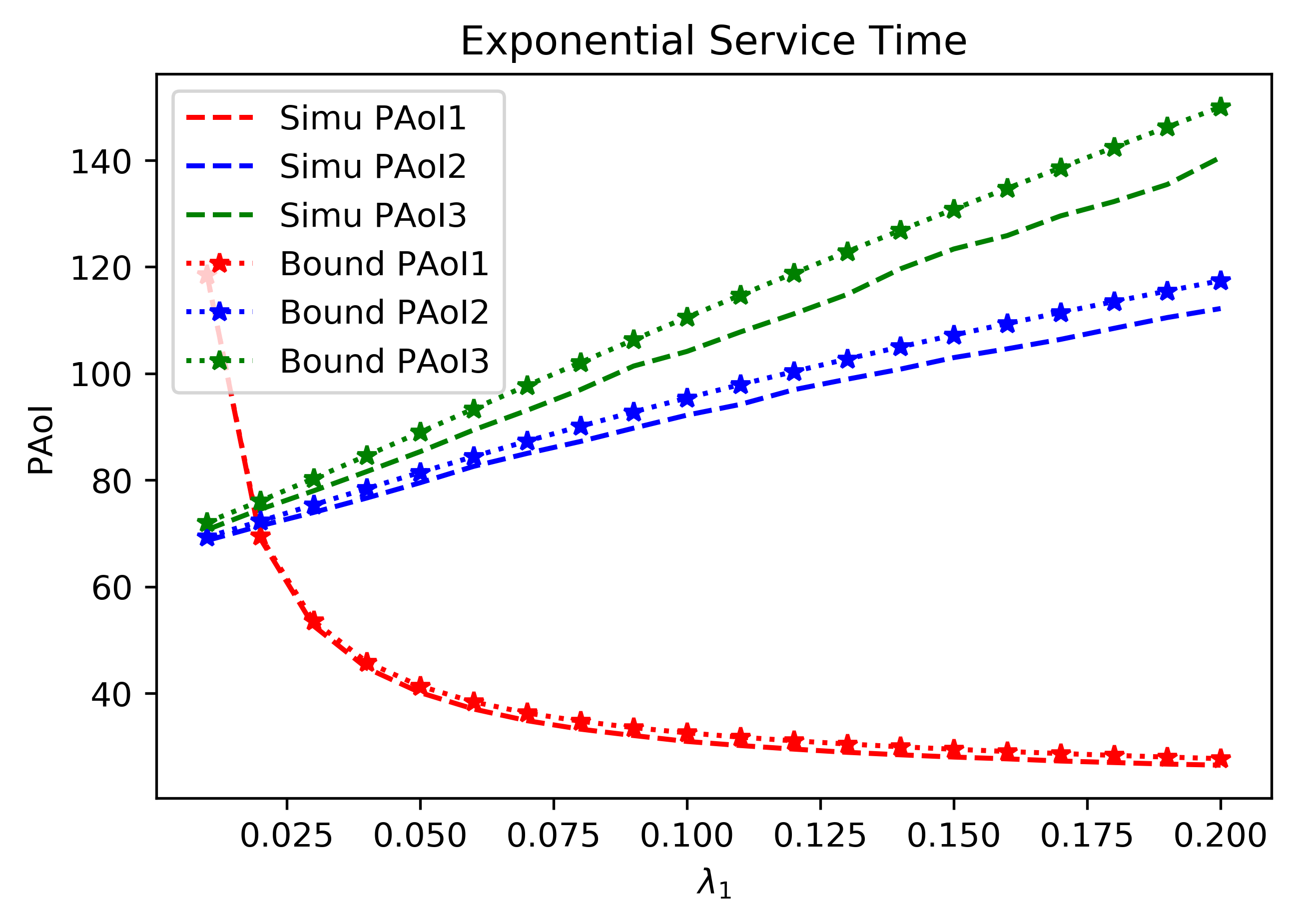}}

\subfloat[$\lambda_{2}=\lambda_{3}=\frac{1}{30},$ $P_{1},P_{2},P_{3}\sim Unif(0,20)$]{\includegraphics[scale=0.5]{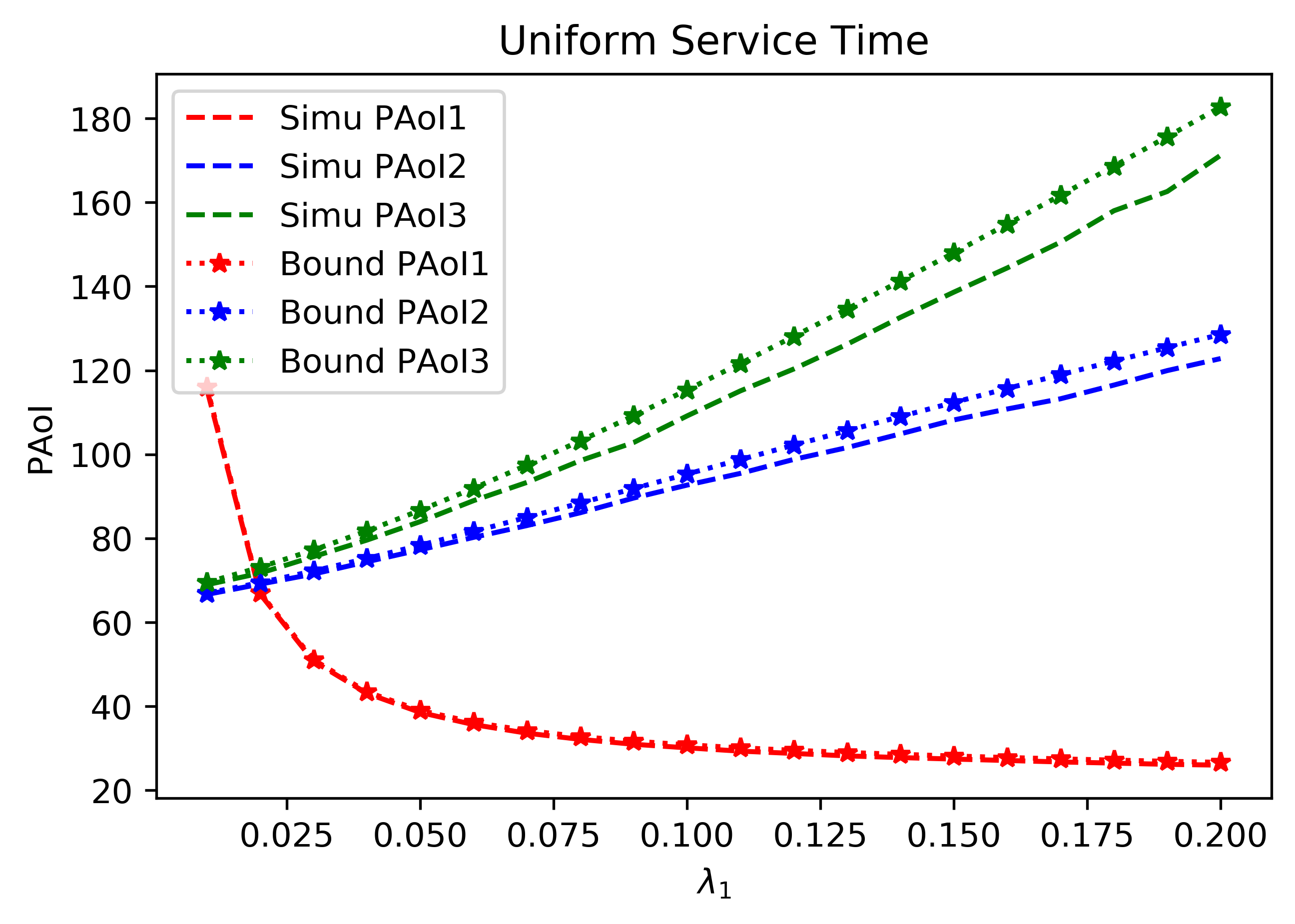}}

\subfloat[$\lambda_{2}=\lambda_{3}=\frac{1}{30},$ $P_{1},P_{2},P_{3}\sim Gamma(10,1)$]{\includegraphics[scale=0.5]{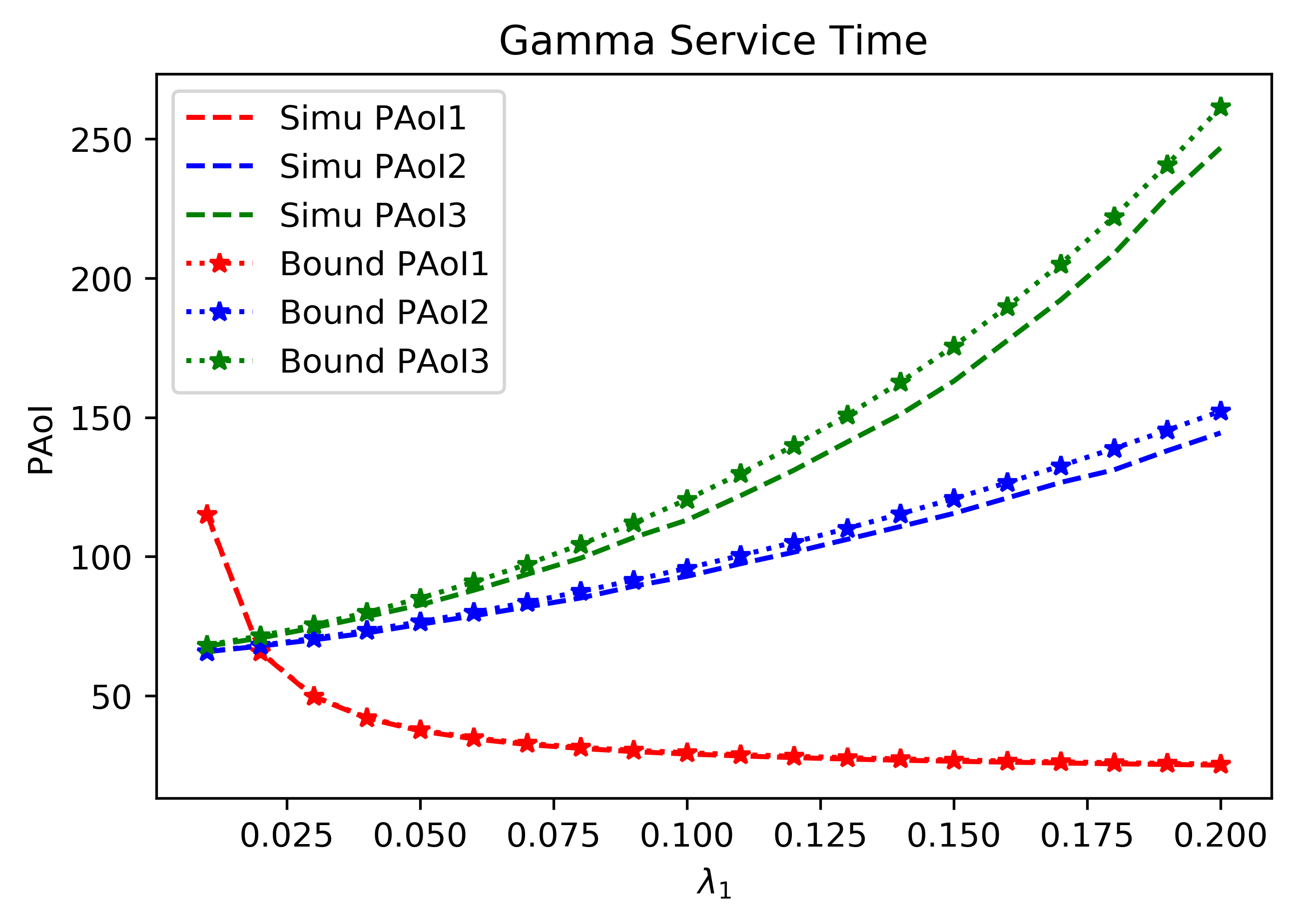}}

\caption{Bounds for M/G/1+$\sum1^{*}$ Type Queues with Buffer Size One\label{fig:Bounds-for-M/G/1}}
\end{figure}

\begin{figure}[h]
\subfloat[$\lambda_{2}=\lambda_{3}=\frac{1}{50},$ $P_{1},P_{2},P_{3}\sim exp(\frac{1}{10})$]{\includegraphics[scale=0.5]{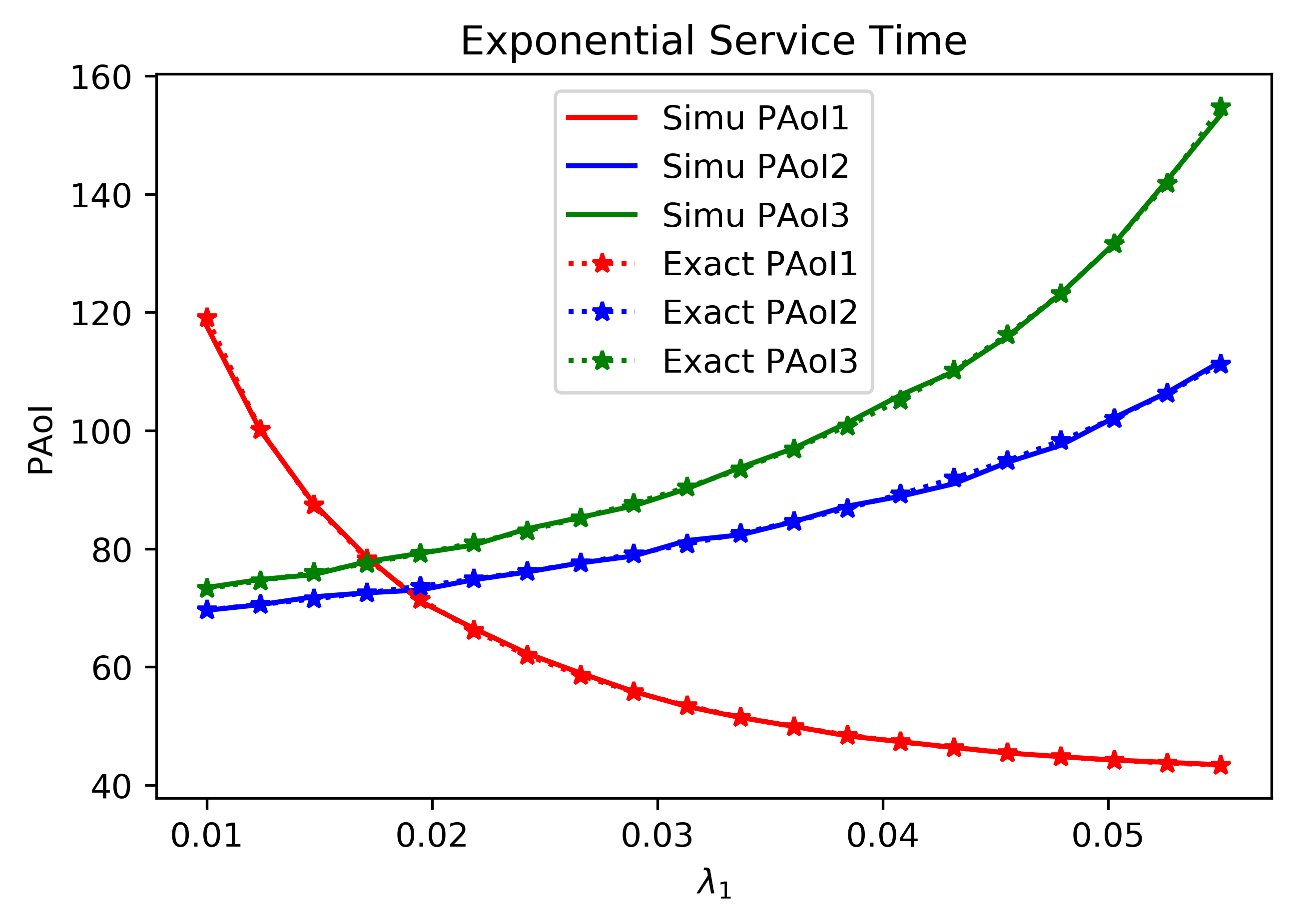}

}

\subfloat[$\lambda_{2}=\lambda_{3}=\frac{1}{50},$$P_{1},P_{2},P_{3}\sim Unif(0,20)$]{

\includegraphics[scale=0.5]{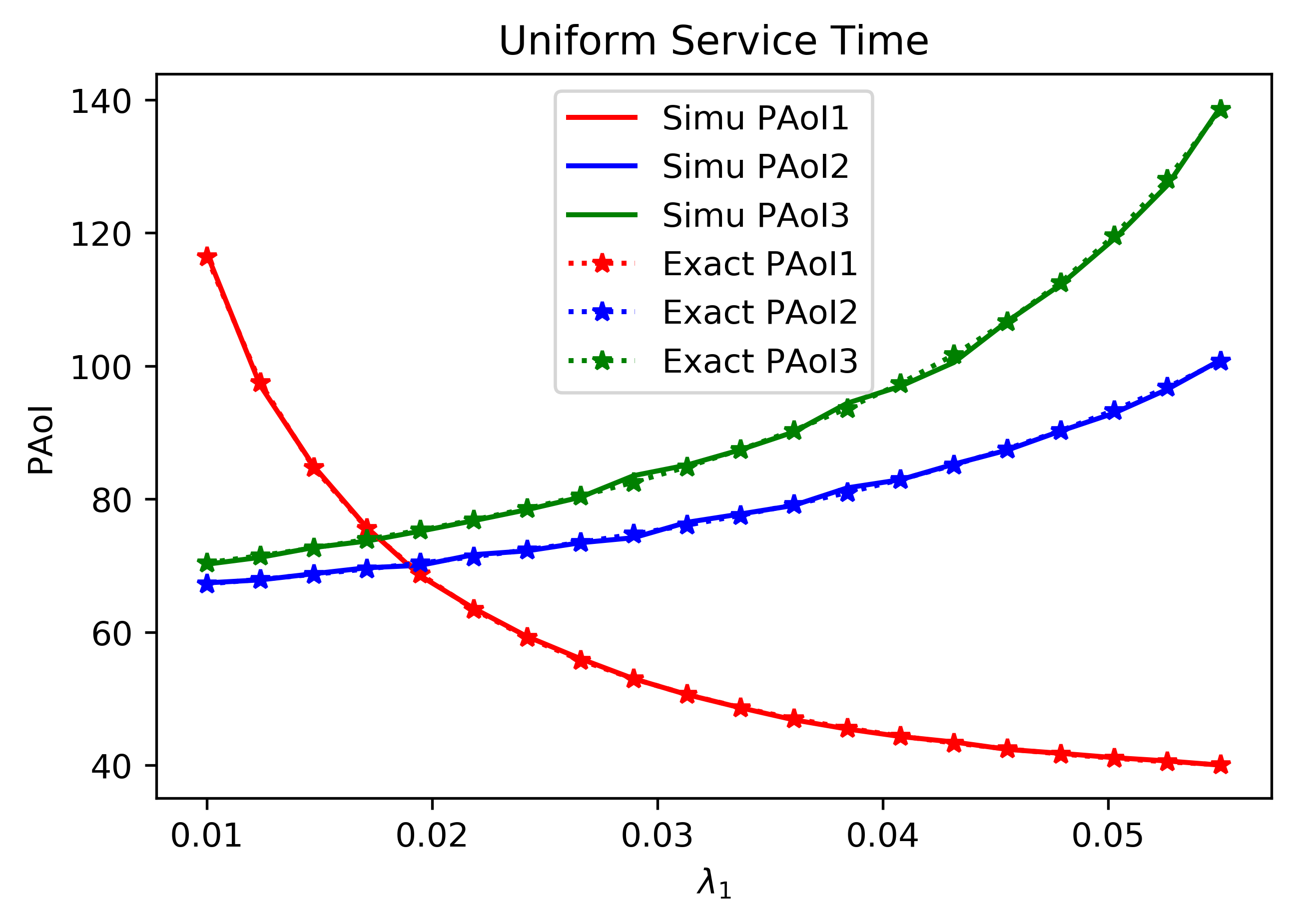}}

\subfloat[$\lambda_{2}=\lambda_{3}=\frac{1}{50},$$P_{1},P_{2},P_{3}\sim Gamma(10,1)$]{\includegraphics[scale=0.5]{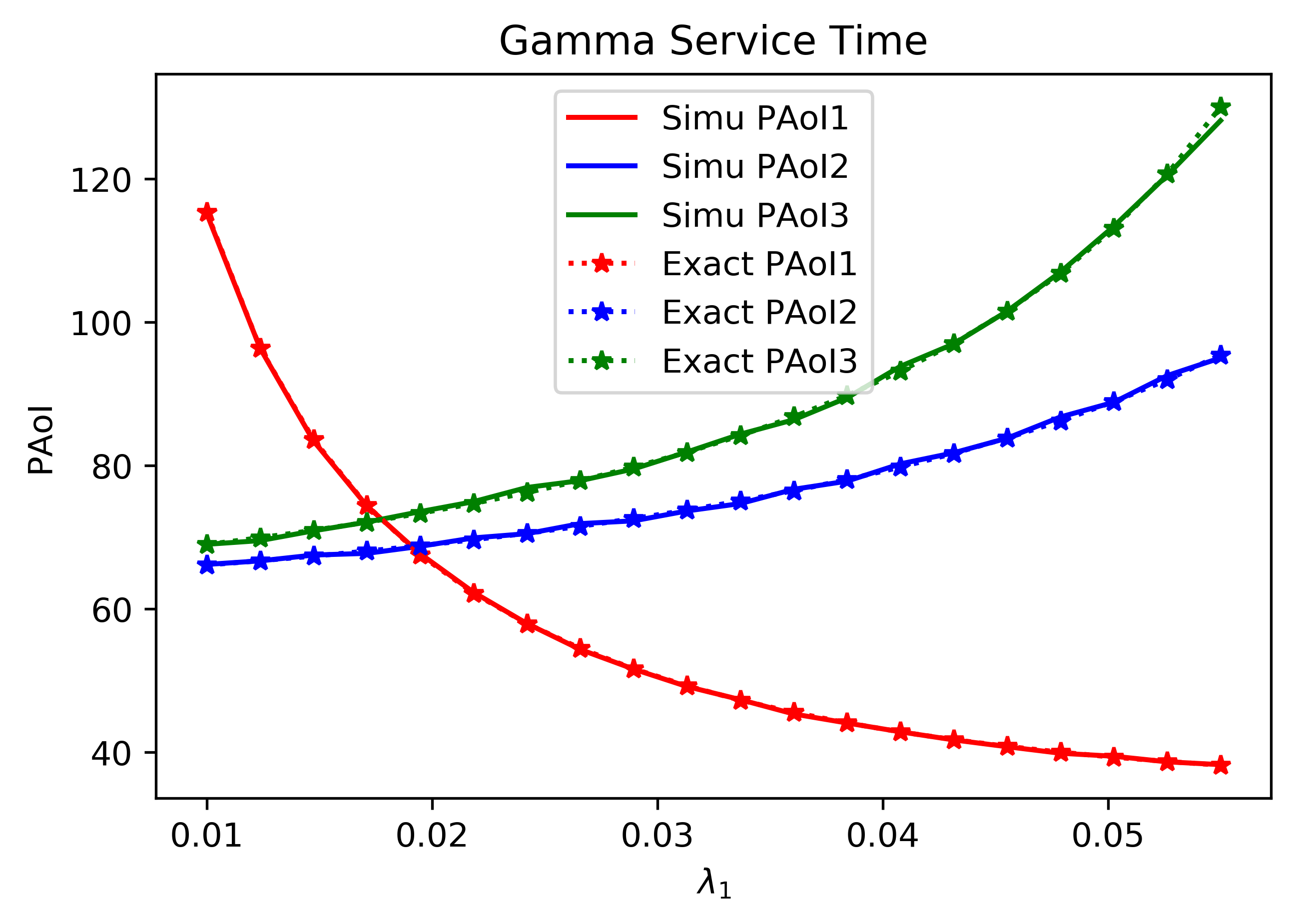}

}\caption{M/G/1 Type Queues with LCFS\label{fig:Bounds-for-M/G/1-1} }
\end{figure}

\begin{figure}[h]
\subfloat[$\lambda_{2}=\frac{1}{100},\mu_{1}=\mu_{2}=\frac{1}{10}$]{\includegraphics[scale=0.5]{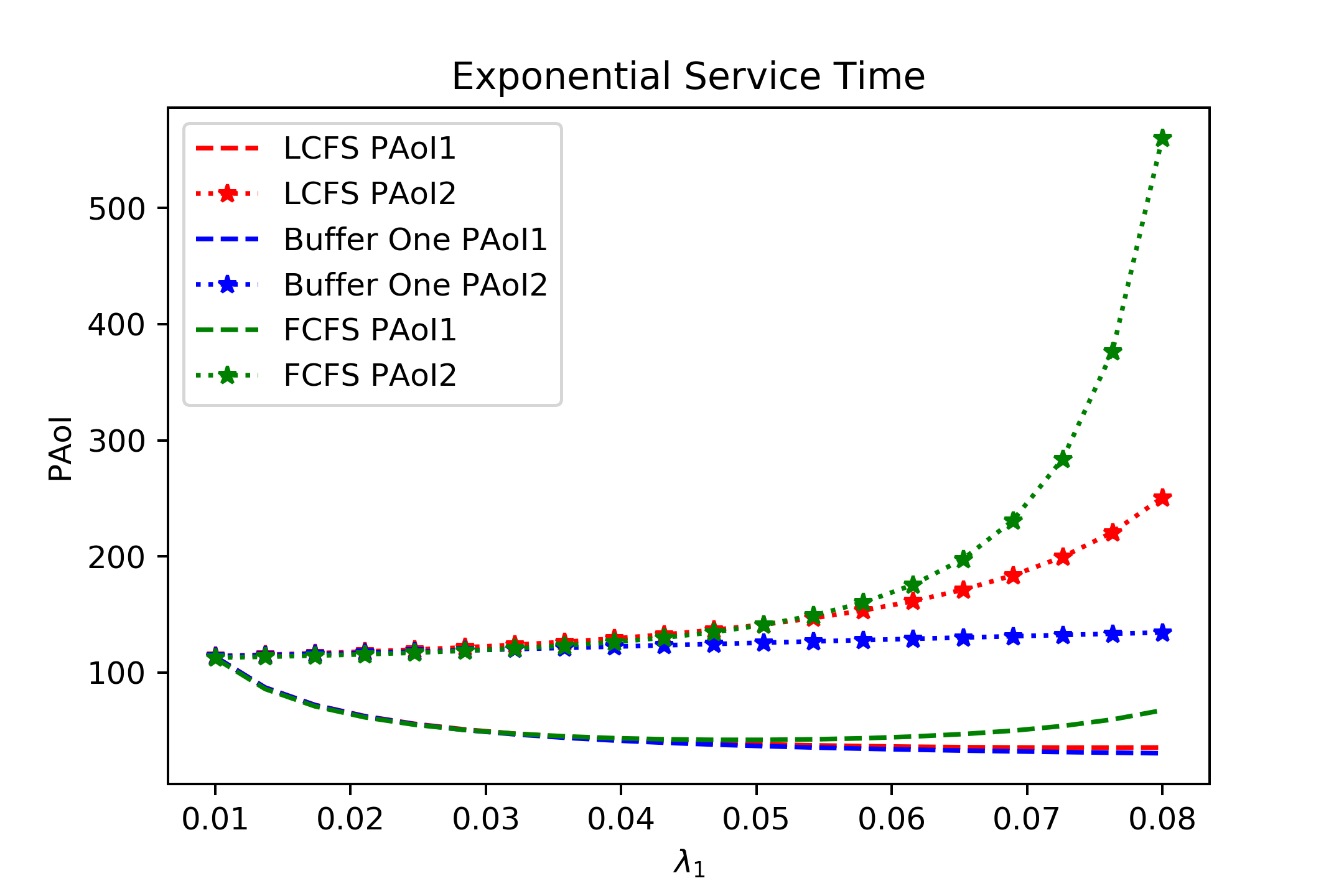}

}

\subfloat[$\lambda_{1}=\frac{1}{100},\mu_{1}=\mu_{2}=\frac{1}{10}$]{\includegraphics[scale=0.5]{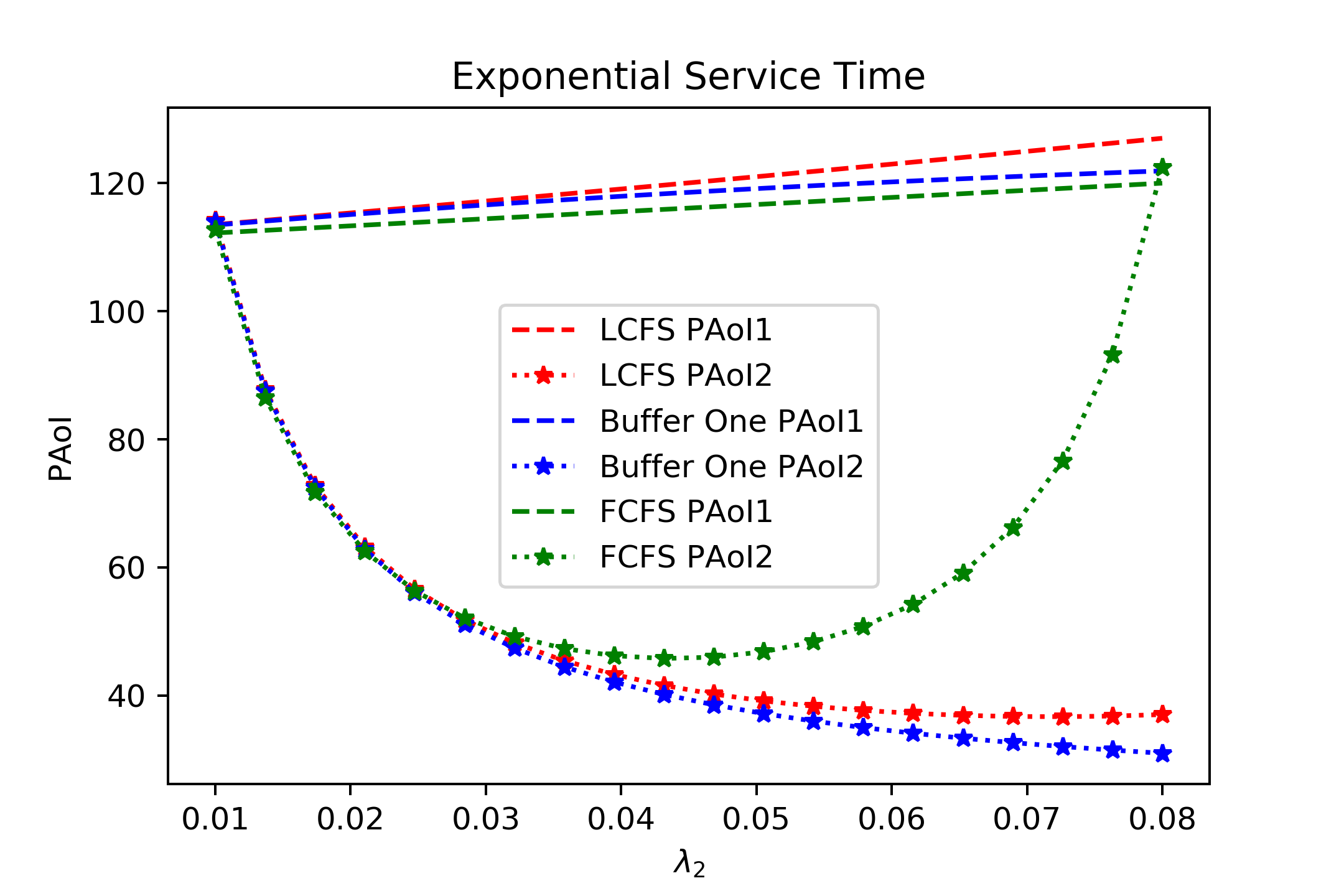}

}

\caption{PAoI under Different Service Disciplines\label{fig:PAoI-under-Different}}
\end{figure}

\begin{figure}[h]
\subfloat[$\lambda_{2}=\frac{1}{100},\mu_{1}=\mu_{2}=\frac{1}{10}$]{\includegraphics[scale=0.5]{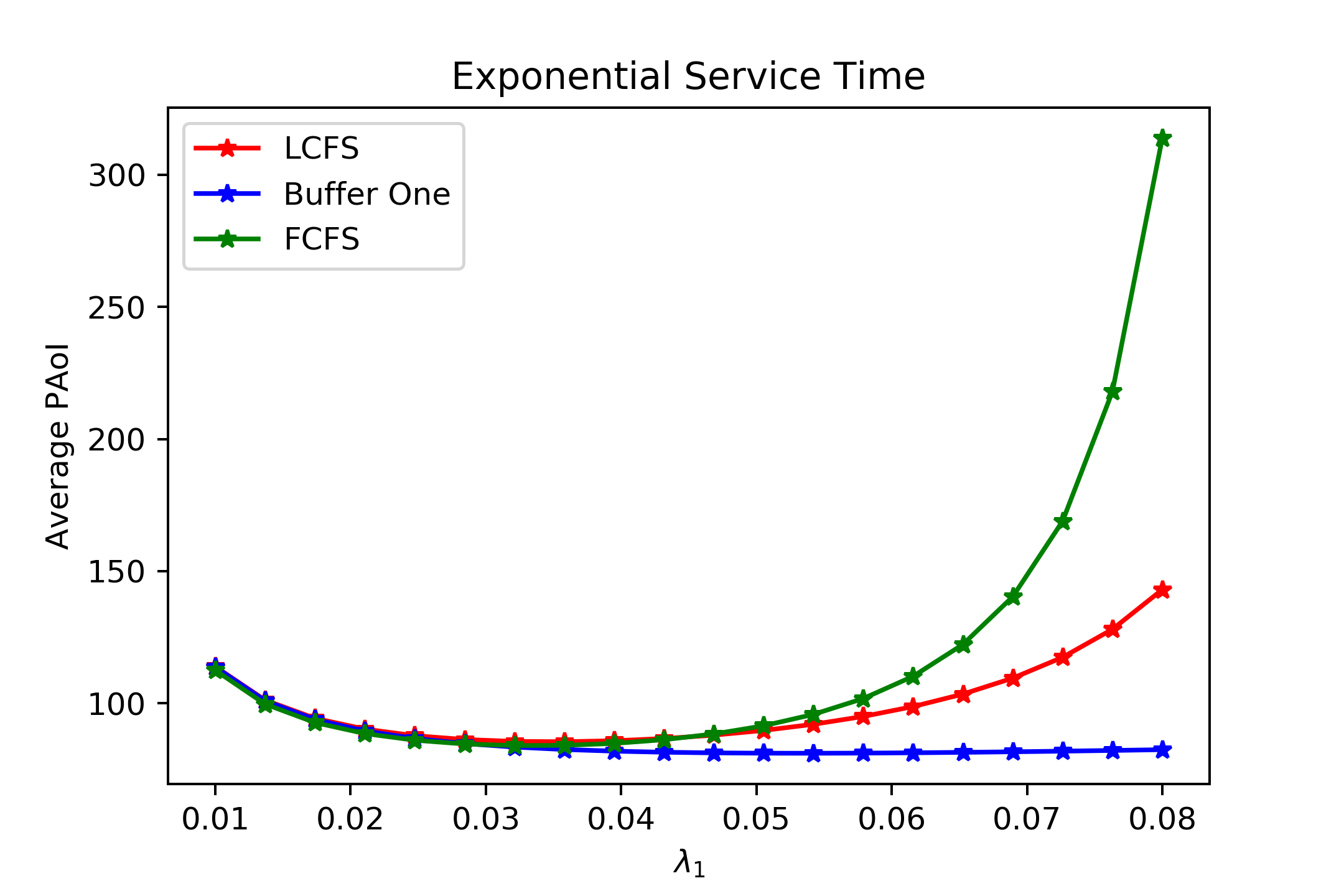}

}

\subfloat[$\lambda_{1}=\frac{1}{100},\mu_{1}=\mu_{2}=\frac{1}{10}$]{\includegraphics[scale=0.5]{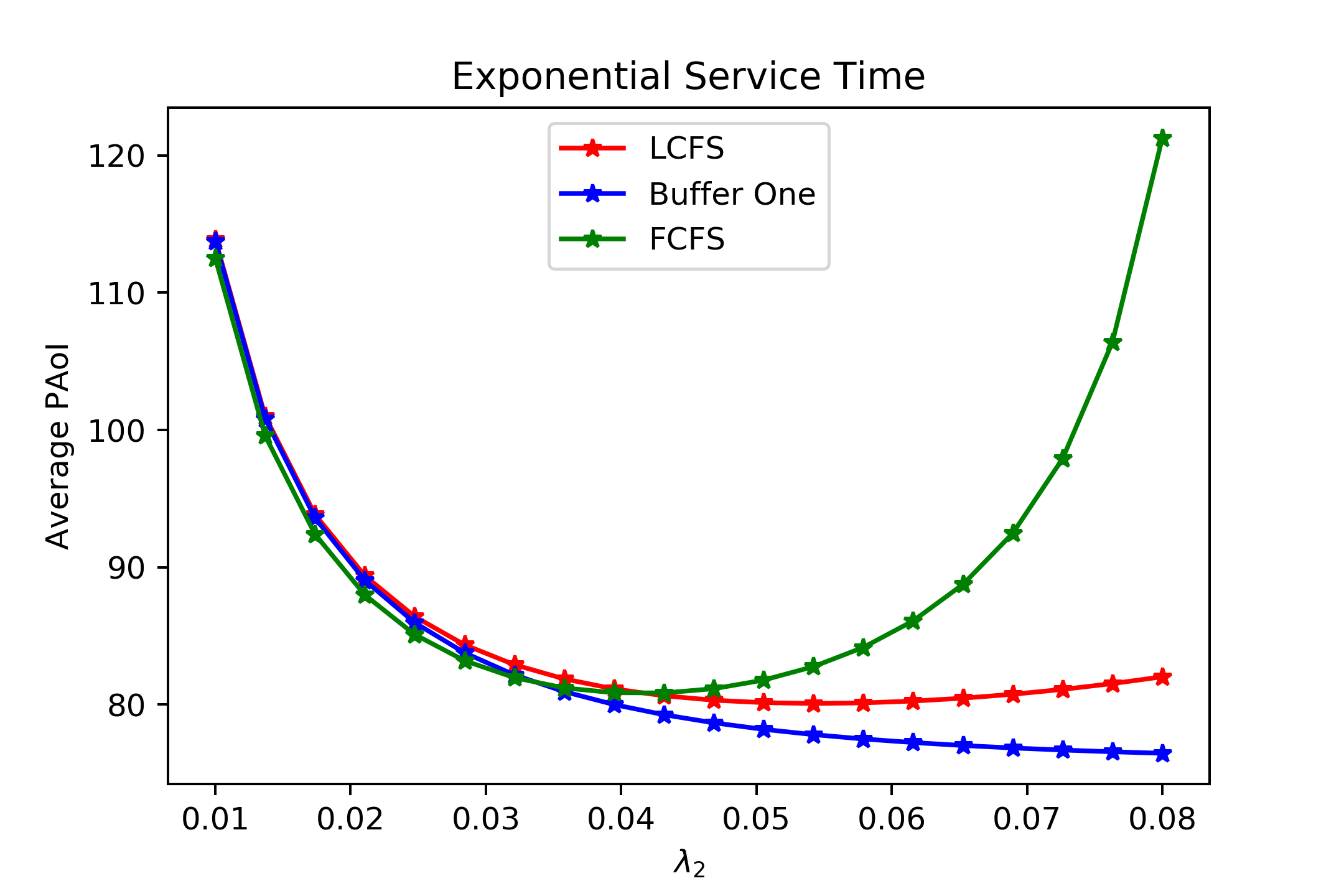}

}\caption{Average PAoI Across Queues under Different Service Disciplines \label{fig:Average-PAoI-under}}

\end{figure}

Next we address PAoI by comparing the single buffer size case against
infinite buffer size cases under FCFS and LCFS. In fact, since in
the M/G/1 system with infinite buffer size, if we keep replacing the
packets in buffers with new arrivals, then there is at most one packet
waiting in each queue, therefore the system will act exactly the same
as M/G/1+$\sum1^{*}$ system. So here we consider the PAoI under M/G/1+$\sum1^{*}$and
M/G/1 with FCFS and LCFS altogether. In Figure \ref{fig:PAoI-under-Different}
we plot the PAoI for each queue in the case of $k=2$, and in Figure
\ref{fig:Average-PAoI-under} we plot the average PAoI across all
queues ($\frac{1}{2}\sum_{i=1}^{2}\boldsymbol{E}[A_{i}]$). From Figure
\ref{fig:PAoI-under-Different} we see that under FCFS, LCFS and M/M/1+$\sum1^{*}$,
PAoI of queue 2 is sensitive to the change of $\lambda_{1}$, however
PAoI of queue 1 is less sensitive to $\lambda_{2}$. This is because
the PAoI for queue 2 highly depends on the busy time of queue 1. For
FCFS, the PAoI increases greatly when arrival rate becomes large.
This is because under FCFS, every packet that arrive to the system
needs to be processed, and increasing arrival rate enlarges the average
queue size, causing packets to wait a longer time. From the average
PAoI across queues shown in Figure \ref{fig:Average-PAoI-under},
we can see that increasing the arrival rate for the high priority
queue enlarges the PAoI much faster than increasing $\lambda_{2}$.
It indicates that when designing the priority for queues to minimize
average PAoI across queues, the one with the lowest traffic intensity
should be allocated with the highest priority. We also proved this
result in Section \ref{sec:Infinite-Buffer-Size} for M/G/1 queues
with FCFS. 

Also, it is interesting to observe that having a single-sized buffer
at each queue is not always the optimal strategy to minimize PAoI,
as we observe from Figure \ref{fig:PAoI-under-Different} and Figure
\ref{fig:Average-PAoI-under}. This fact can be seen more clearly
in Figure \ref{fig:PAoI-under-Different}(b) and Figure \ref{fig:Average-PAoI-under}(b)
when the traffic intensity of queue 2 is higher. In Figure \ref{fig:PAoI-under-Different}(b),
the PAoI of queue 1 under FCFS is lower than that under the other
two policies. In Figure \ref{fig:Average-PAoI-under}(b), FCFS results
in lower average PAoI than the other two policies when the traffic
intensity is low. This result also indicates that LCFS is not the
optimal policy among all work-conserving non-preemptive policies when
minimizing PAoI. However, as we mentioned in Section \ref{sec:Infinite-Buffer-Size},
the advantage of FCFS may only come from the special property of the
metric PAoI. 

We now compare the simulation performance of the priority queue policy
(which we introduce in this paper), Age-based Max-Weight Policy from
\cite{kadota2019minimizing}) and Randomized Policy with Weights from
\cite{talak2019optimizing} in M/M/1/$1+\sum1^{*}$ system. Under
the Age-based Max-Weight Policy, at every time $t$ the server becomes
available, it selects the non-empty queue with the highest weighted
age reduction, i.e., $\max_{i}\beta_{i}(\Delta_{i}(t)-W_{i}(t))$,
where $\beta_{i}$ is a constant weight, $\Delta_{i}(t)$ is the age
for queue $i$, and $W_{i}(t)$ is the waiting time of the packet
in buffer $i$. Under the Randomized Policy with Weights, the server
would pick queue $i$ with probability $\frac{\beta_{i}}{\sum_{i\in\mathcal{E}(t)}\beta_{i}}$,
where $\mathcal{E}(t)$ is the set of non-empty queues at time $t$.
The simulation results for these three policies in a two-queue system
with $\beta_{1}=0.8$ and $\beta_{2}=0.2$ are provided in Figure
\ref{fig:Policy-Comparison}. As we see from Figure \ref{fig:Policy-Comparison}(a),
when the traffic intensity of queue 1 is large, the priority queue
policy has similar performance to the other two policies on queue
1, but the priority queue policy causes a higher PAoI for queue 2.
This is because the priority queue policy has a stronger preference
on queue 1 than the other two policies. Under the priority queue policy,
queue 2 can only be served when queue 1 is empty, whereas queue 2
can still be served when both queues are non-empty under the other
two policies. When traffic intensity of queue 2 is large, the priority
queue policy results in a slightly smaller PAoI for queue 1 than the
Randomized Policy with Weights, as we can see from Figure \ref{fig:Policy-Comparison}(b).
It is because in this two-queue case, the priority queue policy can
be regarded as the Randomized Policy with $\beta_{1}=1$ and $\beta_{2}=0$.
Therefore under the priority queue policy, queue 1 is given higher
preference than that under the Randomized Policy with $\beta_{1}=0.8$
and $\beta_{2}=0.2$. Like we mentioned in Section \ref{sec:Introduction},
there could be data sources which have information more important
or time-sensitive than the other data sources. The priority queue
policy actually helps reduce the PAoI of these data sources by assigning
them with high static priorities. When considering the average PAoI
across queues, as shown in Figure \ref{fig:Policy-Comparison}(c)
and (d), priority queue policy can result in a slightly smaller average
PAoI than the Randomized Policy with $\beta_{1}=0.8$ and $\beta_{2}=0.2$,
if high priority is assigned to the queue with low traffic intensity.
This also implies that in a system where traffic intensities of queues
are distinct, high priority is recommended to be given to the low
traffic queue to reduce the average PAoI over queues. 

\begin{figure*}
\hfill{}\subfloat[$\lambda_{2}=\frac{1}{100},\mu_{1}=\mu_{2}=\frac{1}{10}$]{\includegraphics[scale=0.5]{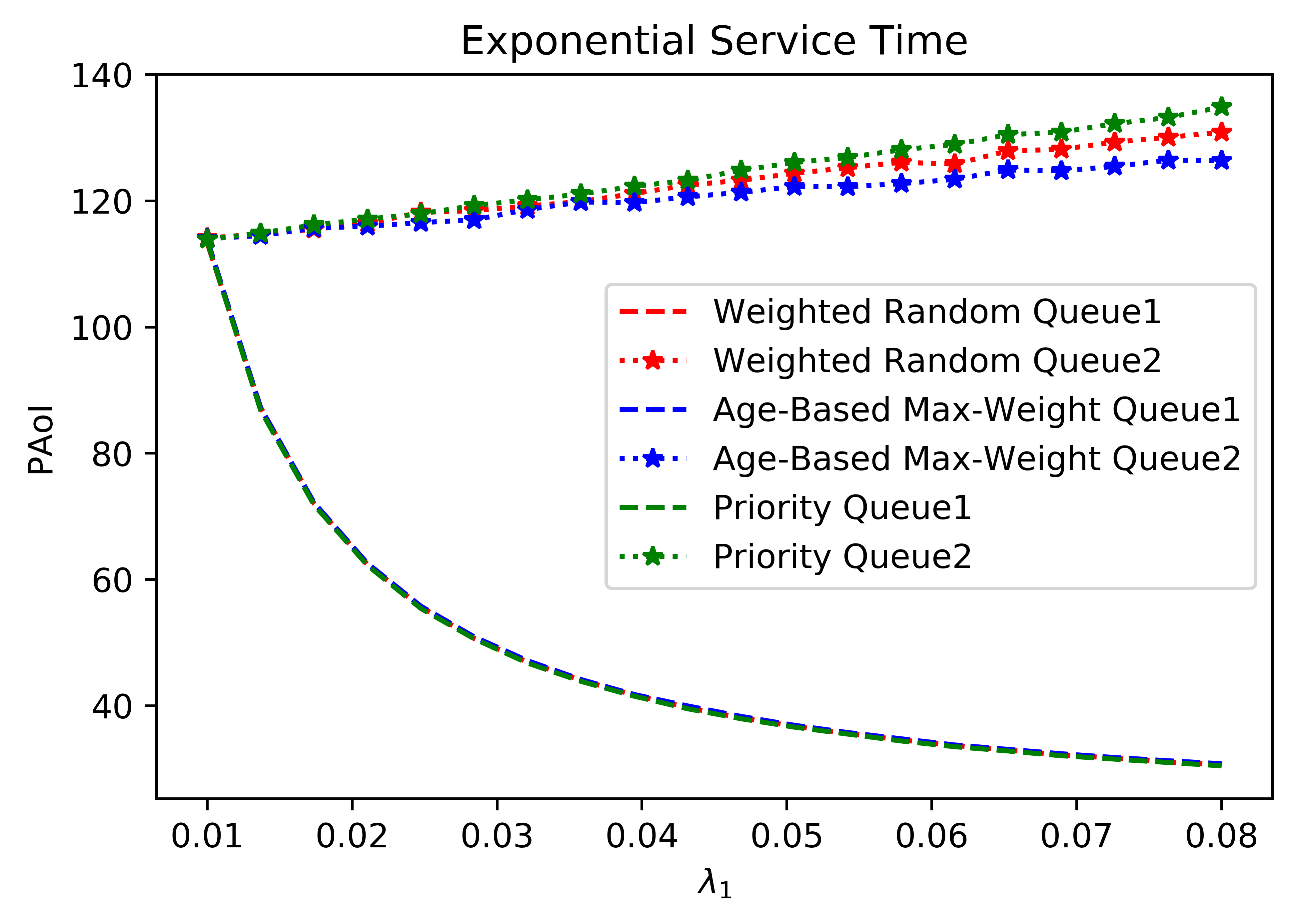}

}\hfill{}\subfloat[$\lambda_{1}=\frac{1}{100},\mu_{1}=\mu_{2}=\frac{1}{10}$]{\includegraphics[scale=0.5]{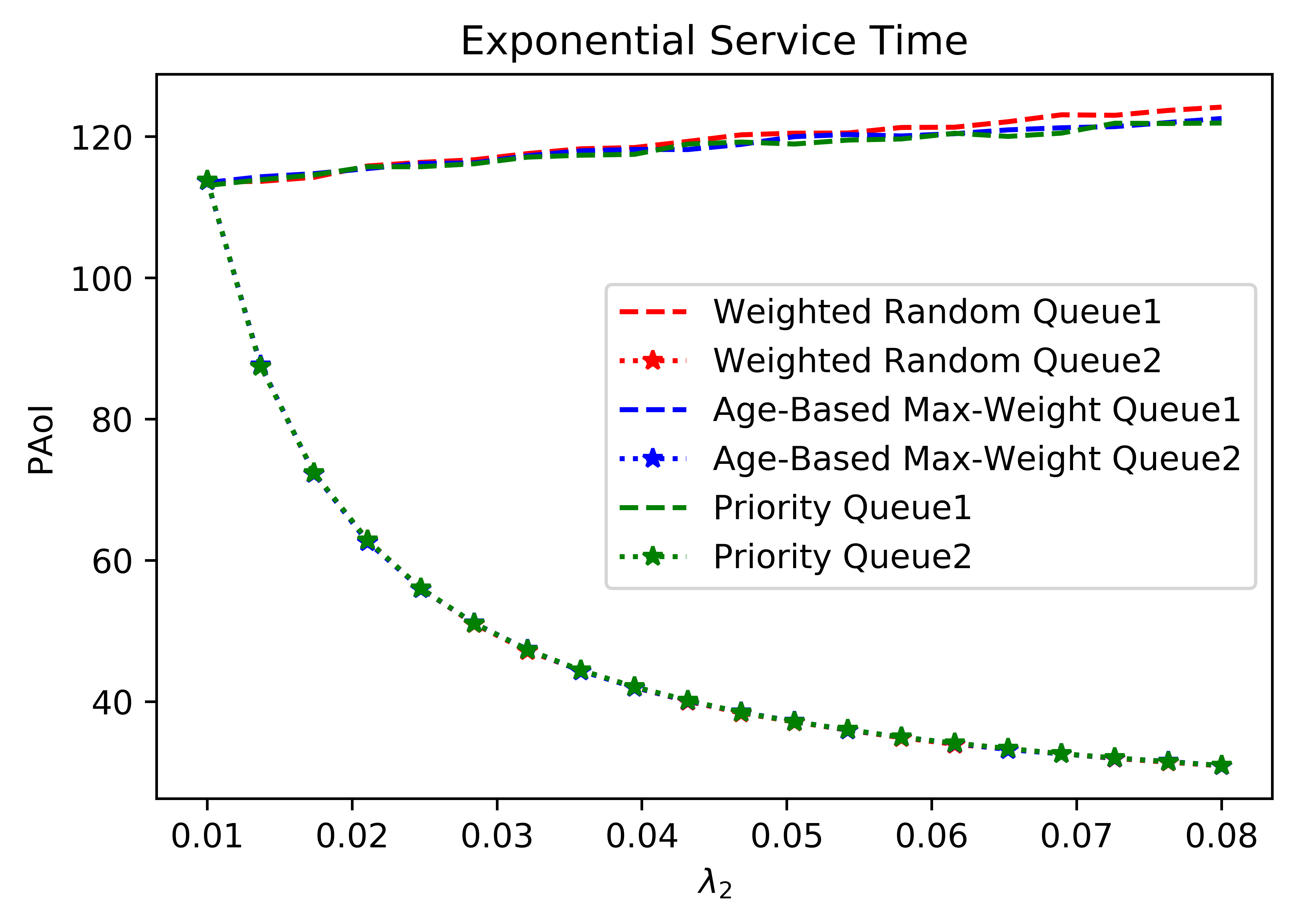}

}\hfill{}

\hfill{}\subfloat[$\lambda_{2}=\frac{1}{100},\mu_{1}=\mu_{2}=\frac{1}{10}$]{\includegraphics[scale=0.5]{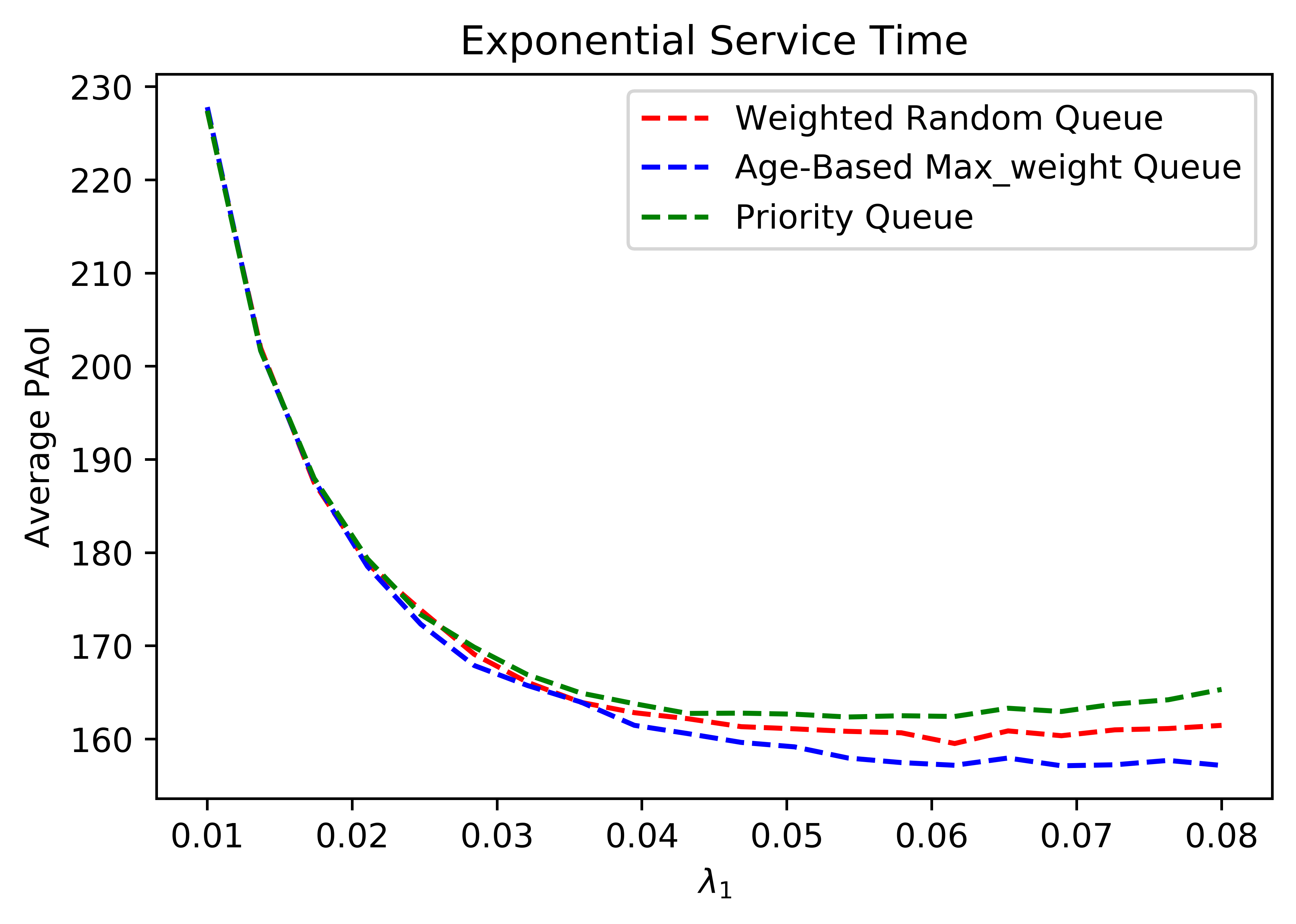}

}\hfill{}\subfloat[$\lambda_{1}=\frac{1}{100},\mu_{1}=\mu_{2}=\frac{1}{10}$]{\includegraphics[scale=0.5]{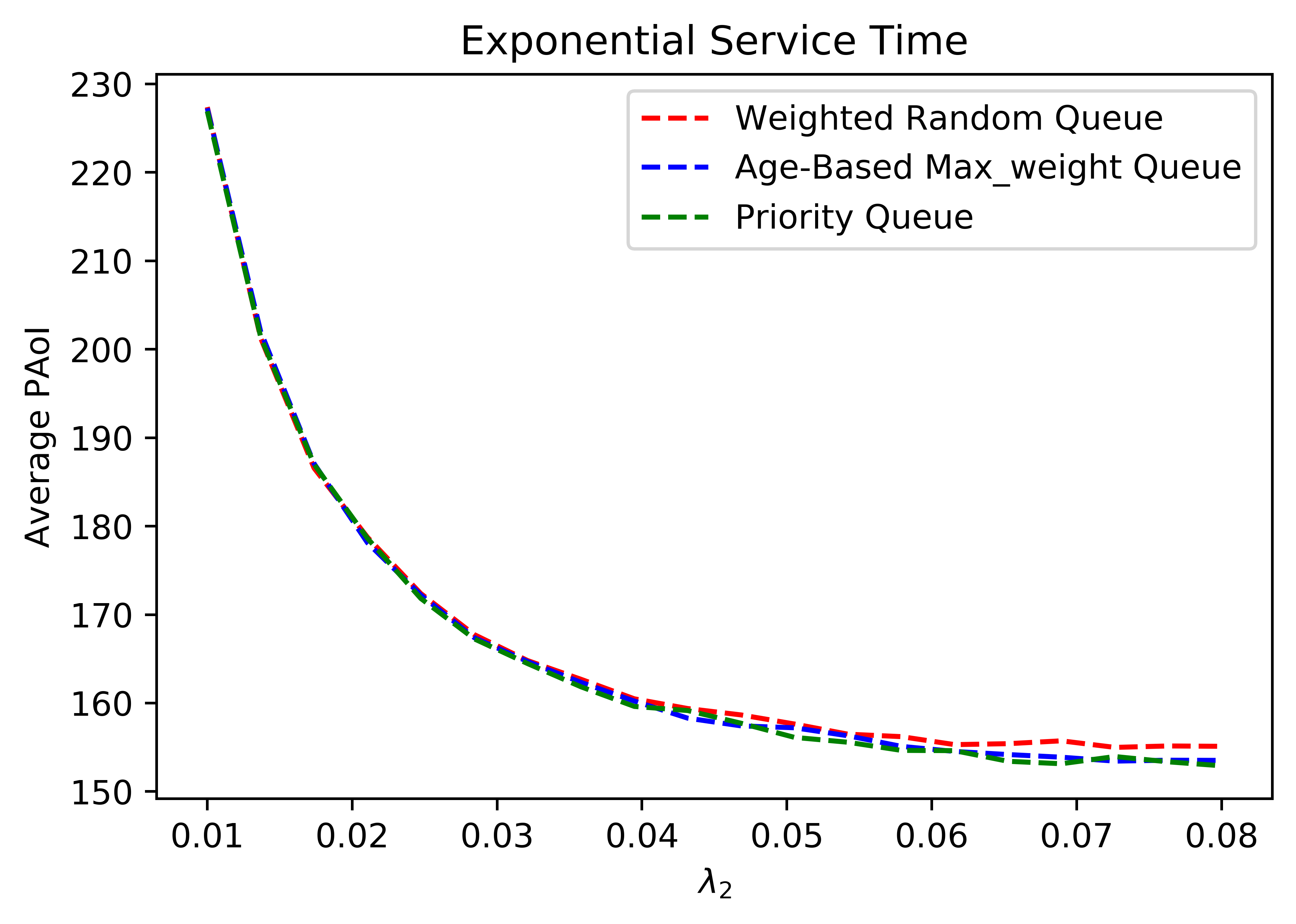}

}\hfill{}

\caption{Policy Comparison \label{fig:Policy-Comparison}}

\end{figure*}

\section{\label{sec:Conclusions-and-Future}Concluding Remarks and Future
Work}

In this research we considered a multi-class queueing system where
each class of data source generates packets according to a Poisson
process and a single processor uses a static priority scheme to serve
packets. We characterized the PAoI for such a system under two situations:
(i) when the buffer size for each queue is one; (ii) when the buffer
size for each queue is infinite and service disciplines within each
queue can be FCFS or LCFS. We obtained exact expressions for PAoI
in case (i) when the service times are exponential, and bounds (which
serve as excellent approximations) for case (i) when service times
are generally distributed. The method of obtaining PAoI for case (i)
becomes cumbersome when the number of queues is large. For case (ii)
with general service times, we provided the exact methods for calculating
PAoI, and this method can be applied when the number of queues is
large. 

Using PAoI results, we made a few observations that are useful in
determining priorities, service disciplines and sampling rates. We
found that LCFS is not the optimal service discipline in minimizing
PAoI, and we also found that systems with buffer size one at each queue
does not always provide smaller PAoI than the systems with infinite
buffer size. These are due to the special definition of the metric
PAoI. We further discussed the merits and limitations of the metric
PAoI.

From both analytical and numerical results, we showed that for minimizing
the average PAoI across queues, it is beneficial to give higher priorities
to queues with lower traffic intensities. Besides, we found that the
PAoI of queues with low priorities are more sensitive to the packet
arrival rate of high priority queues. Increasing the arrival rate
for one queue, while reducing the PAoI for this certain data source,
would significantly increase the PAoI of queues with lower priorities. 

Since in this paper we mainly focus on static queue priorities, in
our future work we will consider a system with dynamic priorities.
Besides, in smart manufacturing systems where the status of machines
changes over time, sampling with a time-varying rate is also possible
and it is interesting to consider the PAoI with time-varying arrival
rates. Moreover, the variance of PAoI is also useful in measuring
the data freshness in real-time systems, and the distribution of PAoI
is also of interest. Thus, there are numerous opportunities for research
in the area of PAoI for multi-priority queues.

\bibliographystyle{ieeetr}
\bibliography{Jin_AoI_1}

\end{document}